\NeedsTeXFormat{LaTeX2e}

\documentclass[printer,online]{jfp}

\usepackage[utf8]{inputenc}
\usepackage{graphicx}
\usepackage{amssymb}
\usepackage{stmaryrd}
\usepackage{rotating}
\usepackage{paralist}
\usepackage{epstopdf}
\usepackage{algpseudocode}
\algnotext{EndFor}
\algnotext{EndIf}
\algnotext{EndProcedure}
\usepackage{algorithm}
\usepackage{tikz}
\usepackage[skip=6pt]{caption, subcaption}
\captionsetup{compatibility=false}
\usepackage{graphicx}
\usepackage{appendix}
\usepackage{soul}
\usepackage{tabularx}
\usepackage{booktabs}
\usepackage{mathrsfs}
\usepackage{amsbsy}
\usepackage{xspace}
\usepackage{harpoon}

\usepackage{listings}
\usepackage{tikz-cd} 
\usepackage{slashbox}
\usepackage{accents}
\usepackage{float}
\usepackage{xtab}

\usepackage{IEEEtrantools}
\usepackage{stmaryrd}

\newcommand\Dom{\text{Dom}}

\newcommand\newop[2]{\newcommand{#1}{\mathop{\mathrm{#2}}\nolimits}}
\newop{\plot}{plot}
\newop{\elem}{elem}
\newop{\ev}{ev}
\newop\rearrange{rearrange}
\newop\match{match}
\newop\const{const}
\newop\extensionality{ext}
\newop\Def{Def}

\newcommand{\IH}{\stackrel{I.H.}{=}}

\newcommand{\clog}[1]{\vconst{log}}
\newcommand{\op}{\text{op}}
\newcommand{\RR}{\mathbb{R}}

\newcommand{\logR}{\mathcal{R}}
\newcommand{\sem}[1]{\llbracket #1\rrbracket}
\newcommand{\supp}{\text{supp}}





\usepackage{ascii} 
\usepackage[T1]{fontenc}


\usepackage{booktabs}
\usepackage{balance}
\usepackage{array}
\newcolumntype{?}{!{\vrule width 0.8pt}}

\usepackage{mathpartir}
\usepackage{pgfplots}
\usetikzlibrary{arrows}
\usetikzlibrary{shapes}
\usetikzlibrary{matrix}
\usetikzlibrary{shapes.geometric}
\usetikzlibrary{backgrounds}
\usetikzlibrary{calc}

\definecolor{forestgreen}{rgb}{0.13, 0.55, 0.13}
\colorlet{myblue}{blue!70!black}
\colorlet{mygreen}{green!30!black}
\colorlet{mypurple}{purple!70!black}
\colorlet{myorange}{orange!70!black}
\colorlet{myred}{red!40!black}
\colorlet{myyellow}{yellow!60!black}

\newcommand{\cod}[1]{\texttt{#1}}
\newcommand{\codkw}[1]{{\color{myblue}\texttt{#1}}}

\lstset{language=C,%
  breaklines=true,%
  mathescape=true,%
showspaces=false,
showtabs=false,
showstringspaces=false,
breakatwhitespace=true,
  xleftmargin=0.4em,
  xrightmargin=0.4em,
  aboveskip=1pt,
  belowskip=1pt,
  lineskip=-0.2pt,
   numbersep=5pt,
   numberstyle=\tiny\ttfamily,
   basicstyle=\scriptsize\ttfamily,
  keywordstyle=\scriptsize\ttfamily\bfseries,%
columns=fullflexible,
  escapeinside={(*@}{@*)},
  numbers=none, 
  frame=single
}

\lstdefinelanguage{fsmooth}%
{morekeywords={
  if,then,else,let,in,
  op2,op1,map,map2,foldl,
  scanl,scanr,map3,fun,reduce,
  shift1L, shift1R,
  op,var,J,
  sum,prod,dot,fst,
  zip,
  true,false,
  OnesLike,
  ZerosLike,
  pair,proj,cos,sin,exp
  },%
  sensitive,%
  morecomment=[l]//,%
  morecomment=[s]{/*}{*/},%
  morestring=[b]",%
  morestring=[b]',%
  showstringspaces=false,%
  morecomment=[s][\color{gray}]{@}{\ },%
    breaklines=true,%
  mathescape=true,%
showspaces=false,
showtabs=false,
showstringspaces=false,
breakatwhitespace=true,
  aboveskip=1pt,
  belowskip=1pt,
  lineskip=-0.2pt,
  numbersep=5pt,
  numberstyle=\tiny\ttfamily,
  basicstyle=\small\ttfamily,
  keywordstyle=\bfseries\color{blue!70!black},%
  columns=fullflexible,
  frame=single,
  escapeinside={(*@}{@*)},
  literate={->}{$\rightarrow\;$}{2}
           {<}{$\langle$}{1}
           {>}{$\rangle$}{1}
}[keywords,comments,strings]%

\lstset{language=fsmooth}
\lstMakeShortInline[columns=fixed]!

\definecolor{listingbg}{RGB}{240, 240, 240}
\definecolor{mygray}{RGB}{225, 225, 225}
\newcommand{\code}[1]{\lstinline[language=ML,columns=fixed,basicstyle=\ttfamily]|#1|}





\newcommand{\kw}[1]{\mathsf{#1}} 


\newtheorem{theorem}{Theorem} 
\newtheorem{fact}{Fact} [section]
\newtheorem{corollary}[theorem]{Corollary}
\newtheorem{proposition}[theorem]{Proposition}
\newtheorem{definition}[fact]{Definition}
\newtheorem{lemma}[theorem]{Lemma} 
 \newcommand{\true}{\ensuremath{\kw{true}}}
 \newcommand{\false}{\ensuremath{\kw{false}}}




\newcommand{\text}[1]{\textnormal{#1}}

 
\newcommand{\lett}{\codkw{let}}
\newcommand{\inn}{\codkw{in}}
\newcommand{\iif}{\codkw{if}}
\newcommand{\then}{\codkw{then}}
\newcommand{\elsee}{\codkw{else}}

\newcommand{\tab}{\text{$\;\;\;$}}

\newcommand{\lafsharp}{$\widetilde{\textsc{F}}$}
\newcommand{\fsmooth}{\lafsharp}

\newcommand{\ladsl}{$\widetilde{\textsc{M}}$}
\newcommand{\system}{$\text{d}\widetilde{\textsc{f}}$\xspace}



%
\newcommand{\viteratek}{\codkw{ifold}}
\newcommand{\vifoldk}{\viteratek}

\newcommand{\vbuildk}{\codkw{build}}
\newcommand{\varraygen}[1]{\cod{[\ensuremath{#1}]}}
\newcommand{\varray}[1]{\cod{[\ensuremath{#1_0,\ldots,#1_{n-1}}]}}
\newcommand{\vlengthk}{\codkw{length}}
\newcommand{\vgetk}{\codkw{get}}
\newcommand{\vtrue}{\codkw{true}}
\newcommand{\vfalse}{\codkw{false}}
\newcommand{\vpairk}{\codkw{pair}}
\newcommand{\vfstk}{\codkw{fst}}
\newcommand{\vsndk}{\codkw{snd}}

\newcommand{\expr}{\text{e}}

\newcommand{\genexprind}[2]{\text{$\text{#1}_{#2}$}}
\newcommand{\exprind}[1]{\genexprind{\expr}{#1}}
\newcommand{\val}{\text{v}}
\newcommand{\valind}[1]{\genexprind{\val}{#1}}

\newcommand{\vbuild}[2]{\text{\vbuildk{} #1 #2}}
\newcommand{\vget}[2]{#1[#2]}
\newcommand{\vlength}[1]{\vlengthk{}\,#1}
\newcommand{\vifthenelse}[3]{\text{\iif{} #1 \then{} #2 \elsee{} #3}}

\newcommand{\viterate}[3]{\text{\viteratek{} #1 #2 #3}}
\newcommand{\vabsk}{\codkw{fun}}
\newcommand{\vabs}[2]{\text{\vabsk{} #1 \cod{->} #2}}

\newcommand{\vapp}[2]{\text{#1 #2}}
\newcommand{\vletn}[2]{\text{\lett{} #1 = #2}}
\newcommand{\vlet}[3]{\text{\lett{} #1 = #2 \cod{in} #3}}

\newcommand{\vpair}[2]{\text{(#1, #2)}}



\newcommand{\vmore}[1]{\text{$\overline{\text{#1}}$}}

\newcommand{\vconst}[1]{\text{\cod{#1}}}

\newcommand{\typewrapper}[1]{\text{\cod{#1}}}
\newcommand{\typewrappernonterm}[1]{\textnormal{#1}}
\newcommand{\typemat}{\typewrappernonterm{M}}

\newcommand{\typepair}[2]{\text{#1 $\times$ #2}}
\newcommand{\typet}{\typewrappernonterm{T}}

\newcommand{\typeind}[1]{\text{$\typet{}_{#1}$}}

\newcommand{\funarrow}{\text{$\Rightarrow$}}
\newcommand{\typefun}[2]{\text{$\vmore{#1} \Rightarrow #2$}}
\newcommand{\typefunone}[2]{\text{#1 $\Rightarrow$ #2}}
\newcommand{\typeindex}{\typewrapper{Index}}

\newcommand{\typedouble}{\typewrapper{Double}}

\newcommand{\typebool}{\typewrapper{Bool}}
\newcommand{\typearray}[1]{\typewrapper{Array<#1>}}
\newcommand{\typevector}{\typewrapper{Vector}}
\newcommand{\typematrix}{\typewrapper{Matrix}}

\def\typenum{\typewrappernonterm{Num}}
\newcommand{\typedoubled}{\typewrapper{DoubleD}}
\newcommand{\typevectord}{\typewrapper{VectorD}}
\newcommand{\typematrixd}{\typewrapper{MatrixD}}

\newcommand{\evalsto}{\text{ $\leadsto$ }}




\def\tabt{\hspace*{0.35cm}}


\def\vcaddcard{\text{$+$}}
\def\vcsubcard{\text{$-$}}

\def\vcget{\text{\codkw{get}}}
\def\vclength{\text{\codkw{length}}}

\newcommand{\difftranstwo}[2]{\text{${\color{mygreen}\mathcal{D}\llbracket}$#1${\color{mygreen}\rrbracket}$}}
\newcommand{\difftrans}[1]{\difftranstwo{#1}{dx}}

\newcommand{\difftransalt}[1]{\text{${\color{mygreen}\mathcal{D}\llbracket} #1 {\color{mygreen}\rrbracket}$}}
\newcommand{\difftranstypealt}[1]{\text{${\color{mygreen}\mathcal{D_T}\llbracket} #1 {\color{mygreen}\rrbracket}$}}

\newcommand{\difftransp}[1]{\text{${\color{mygreen}\mathcal{P}\llbracket}$#1${\color{mygreen}\rrbracket}$}}
\newcommand{\difftranst}[1]{\text{${\color{mygreen}\mathcal{E}\llbracket}$#1${\color{mygreen}\rrbracket}$}}
\newcommand{\difftranstype}[1]{\text{${\color{mygreen}\mathcal{D_T}\llbracket}$#1${\color{mygreen}\rrbracket}$}}

\newcommand{\diffvarprefix}[1]{\text{$\dot{\text{#1}}$}}
\newcommand{\ddiffvarprefix}[1]{\diffvarprefix{\diffvarprefix{#1}}}

\newcommand{\pterm}[1]{\text{\vfstk{} (#1)}}
\newcommand{\dterm}[1]{\text{\vsndk{} (#1)}}

\newcommand{\adpair}[2]{\text{(#1, #2)}}
\newcommand{\diffk}{\cod{diff}}
\newcommand{\vdiffk}{\cod{vdiff}}
\newcommand{\mdiffk}{\cod{mdiff}}
\newcommand{\diff}[1]{\text{\diffk{}~#1}}
\newcommand{\gradk}{\cod{grad}}
\newcommand{\mgradk}{\cod{mgrad}}
\newcommand{\grad}[1]{\text{\gradk{}~#1}}
\newcommand{\jacobk}{\cod{jacob}}
\newcommand{\jacob}[1]{\text{\jacobk{}~#1}}
\newcommand{\derivk}{{\color{mygreen}\cod{\textbf{deriv}}}}
\newcommand{\deriv}[2]{\text{\derivk{} #1 #2}}

\newcommand{\codespace}{\vspace{0.2cm}}

\newcommand{\forwardvar}[1]{\overrightarrow{#1}}
\newcommand{\reversevar}[1]{\overleftarrow{#1}}

\newcommand{\samerule}{\vspace{-0.105cm}}
\newcommand{\nextfigure}{\vspace{0.5cm}}

\newcommand{\argt}[3]{\text{${\color{mygreen}\mathcal{A}_{\text{#1}}\llbracket}$#2${\color{mygreen}\rrbracket}_{\small\text{#3}}$}}

\newcommand{\demo}{\hfill $\triangle$}

\newsavebox{\mybox}
\newenvironment{fscode}
  {
\vspace{0.1cm}
\noindent
\begin{lrbox}{\mybox}
\begin{minipage}[h]{0.975\columnwidth}
\small
\begin{xtabular}{l}
  }
  {
\end{xtabular}
\end{minipage}
\end{lrbox}\fbox{\usebox{\mybox}}
\vspace{0.1cm}
  }

\newenvironment{fscodegray}
  {
\vspace{0.1cm}
\noindent
\begin{lrbox}{\mybox}
\begin{minipage}[h]{0.975\columnwidth}
\small
\begin{xtabular}{l}
  }
  {
\end{xtabular}
\end{minipage}
\end{lrbox}\fcolorbox{black}{mygray}{\color{black}\usebox{\mybox}}
\vspace{0.1cm}
  }



\begin{document}

\title{Efficient and Sound Differentiable Programming in a Functional Array-Processing Language}
\shorttitle{Efficient and Sound Diff. Programming in a Functional Array-Proc. Language}

\newcommand*\samethanks[1][\value{footnote}]{\footnotemark[#1]}

 \author[Amir Shaikhha et al]
        {AMIR SHAIKHHA\\
         University of Edinburgh\\
         \and{}
         MATHIEU HUOT\\
         University of Oxford\\
         \and{}
         SHABNAM GHASEMIRAD\\
         ETH Zurich\\
         \and{}
         ANDREW FITZGIBBON\thanks{Work done while at Microsoft Research.} \\
         Graphcore\\
         \and{}
         SIMON PEYTON JONES\samethanks\\
         Epic Games\\
         \and{}
         DIMITRIOS VYTINIOTIS\samethanks\\
         DeepMind
         \email{amir.shaikhha@ed.ac.uk, mathieu.huot@stx.ox.ac.uk, shabnam.ghasemirad@inf.ethz.ch, awf@graphcore.ai, simon.peytonjones@gmail.com, dvytin@google.com}
         }











\maketitle
 
 \begin{abstract}
Automatic differentiation (AD) is a technique for computing the derivative of a function represented by a program. This technique is considered as the de-facto standard for computing the differentiation in many machine learning and optimisation software tools. Despite the practicality of this technique, the performance of the differentiated programs, especially for functional languages and in the presence of vectors, is suboptimal. 

We present an AD system for a higher-order functional array-processing language. The core functional language underlying this system simultaneously supports both source-to-source forward-mode AD and global optimisations such as loop transformations. In combination, gradient computation with forward-mode AD can be as efficient as reverse mode, and the Jacobian matrices required for numerical algorithms such as Gauss-Newton and Levenberg-Marquardt can be efficiently computed. 
\end{abstract}

\section{Introduction}
Functional programming (FP) and automatic differentiation (AD) have been natural partners for sixty years, and major functional languages all have elegant automatic differentiation packages~\cite{elliott2009beautiful,baydin2015automatic,karczmarczuk1999functional}.
With the increasing importance of numerical engineering disciplines such as machine learning, speech processing, and computer vision, there has never been a greater need for systems which mitigate the tedious and error-prone process of manual coding of derivatives. 

Popular machine learning packages such as Tensorflow~\cite{abadi2016tensorflow} and Pytorch~\cite{paszke2017automatic} greatly benefit from AD for optimisation tasks. These systems implement embedded domain-specific languages (EDSLs) with a predefined set of efficient combinators for manipulating tensors.
Furthermore, they can use compilation (e.g., the XLA~\cite{leary2017xla} backend for Tensorflow and the Glow~\cite{rotem2018glow} 
backend for PyTorch) in order to perform further optimisations. 
However, these systems are quite restrictive in what constructs are efficiently supported; additional tensor operations
are not as efficient as the predefined set of tensor operators.

The AD systems for generic-purpose programming languages are implemented using two modes.
Forward-mode AD is relatively straightforward, both as a runtime technique using dual numbers,
or as a source-to-source program transformation.  However, forward-mode AD is usually considered wildly inefficient as 
a way to compute the gradient of a function, because it involves calling the forward-mode AD function $n$ times --- 
and $n$ may be very large (e.g. $n = 10^6$).  

This has led to a tremendous amount of work on reverse-mode AD. 
As a source-to-source transformation, reverse-mode AD is characterised by the necessity to maintain temporary 
variables holding partial results, to be consumed during a ``reverse pass'' of gradient computation.  
Modern systems devote considerable effort (e.g., checkpointing) to optimizing the recompute/store tradeoff 
of these temporaries.

Our key contribution is this: we start from the ``wildly inefficient'' loop
in which the forward-mode function is called~$n$ times, and demonstrate that it can be made
efficient simply by applying a collection of non-AD-specific compile-time optimising transformations.
In fact, our optimisations are sufficiently aggressive to generate code that computes the gradient with efficiency 
that is sometimes \emph{better than} standard reverse-mode techniques. Moreover, the entire pipeline is much simpler: there is 
no need to grapple with the complexities of reverse mode --- it simply falls out from the result of optimisation.

More specifically, we take a recently introduced F\# subset designed for efficient compilation of array-processing workloads, and augment it with vector AD primitives, yielding a functional AD tool that is competitive with the best C/C++ tools on micro and real-world computer vision and machine learning benchmarks, and considerably faster than others.
Furthermore, we show the correctness of the differentiation and optimisation rules by introducing the evaluation semantics of this language.

Our contributions are as follows:


\begin{itemize}
\item We introduce \lafsharp{}, an ML-like functional language with array processing constructs (Section~\ref{sec:overview}), and \ladsl{}, an embedded linear-algebra-based language.
\item We describe forward-mode AD (Section~\ref{sec_bg})
as a simple source-to-source program transformation for \lafsharp{} (Section~\ref{sec_diff}).
\item We introduce a collection of optimising transformations, none of them AD-specific, for our language (Section~\ref{sec:fsmooth_trans}).
\item We demonstrate the correctness of both source-to-source AD transformations and optimising transformations (Section~\ref{sec_sem}).
\item For some small but realistic benchmarks, we show that our transformations suffice to generate extremely efficient code, 
starting from a na\"ive loop that repeatedly calls the forward derivative of the function.  We compare our results with 
those of other AD systems (Section~\ref{sec:exp}).
\end{itemize}

\codespace{}
\noindent We discuss related work in Section~\ref{sec_related}.

\section{Background}
\label{sec_bg}
Many applications in machine learning (such as training artificial neural networks, natural language processing, and computer vision) require computing the derivative of an objective function. 
There are four techniques for computing the derivative: 
1) manual derivation of analytical derivatives,
2) numeric differentiation,
3) symbolic differentiation, and
4) automatic differentiation.

Among these techniques, manual derivation is error prone and time consuming, whereas, numerical differentiation (using finite difference approximations), although easy to implement, can lead to precision errors and is really slow for functions with vector inputs. 
Symbolic differentiation involves transforming expressions into their derivative forms.
Symbolically differentiated expressions
can have an exponential overhead over the original program if one is not careful on sharing the computation. 
Furthermore, this method has no general solution for computing the derivative of programs that include iterations and control-flow constructs~\cite{baydin2015automatic,karczmarczuk1999functional}. 

Automatic Differentiation (AD)
systematically applies the chain rule, and evaluates the derivatives for the primitive arithmetic 
operations (such as addition, multiplication, etc.). 
One of the key properties of AD is the constant-time overhead of the differentiated program with 
respect to the original code; not being careful about sharing during the process of differentiation, 
can lead to code explosion~\cite{baydin2015automatic}.

There are two main modes for implementing AD. 
Forward-mode computes the derivative part (tangent part) alongside the original computation while making a forward 
pass over the program. Reverse-mode makes a forward pass to compute the original part of the program, followed by a backward pass for computing the derivative part (adjoint part).  Consider a program in ANF~\cite{anf}:

\begin{fscode}
$f\big(x_1, ..., x_n\big) = $\\
\tab \lett \,$v_1 = e_1$ \\
\tab ... \\
\tab \lett \,$v_n = e_n$ \\
\tab $v_n$
\end{fscode}

\codespace{}

\noindent To compute the derivative of this function using the forward-mode AD, we associate a \textit{tangent} variable to each variable $v_i$, denoted by $\forwardvar{v_i}$.  Each tangent variable is computed as
$\forwardvar{v_i} = \forwardvar{x_1} \times \frac{\partial v_i}{\partial x_1} + ... + \forwardvar{x_n} \times \frac{\partial v_i}{\partial x_n}$.
In order to compute the partial derivative of $f$ with respect to $x_i$, we assign $\forwardvar{x_i} = 1$ and $\forwardvar{x_j} = 0$ for $i \ne j$, and call the transformed code.

Reverse-mode AD computes the $n$ partial derivatives simultaneously, rather than in two different iterations.
To compute the derivative of this function using this approach, we associate an \textit{adjoint} variable to each variable $v_i$, denoted by $\reversevar{v_i}$, which is computed as $\reversevar{v_i} = \frac{\partial y}{\partial v_i}$. 
As a result, if we are interested in computing the partial derivative of function $f$ with respect to $x_1$, we have to compute the value of $\reversevar{x_1}$.
Similarly, if we are interested in the partial derivative of this function with respect to $x_2$, we have to compute the value of $\reversevar{x_2}$.  To do so, we have to apply the chain rule in the reverse order.


Generally speaking, forward and reverse-mode compute a column and a row, respectively, of the full Jacobian matrix $\mathbf{J}$ at each invocation.
More precisely, for a function with an input vector of size $n$ and an output vector of size $m$, the forward mode approach computes a column vector of size $m$, and the reverse mode computes a row vector of size $n$
(see Figure~\ref{jacobian_mat}).


\begin{figure}[t!]
        \centering
        \definecolor{orange}{rgb}{1,0.5,0}
  \begin{equation*}
  f : \mathbb{R}^n \rightarrow \mathbb{R}^m\,\,\, \hspace{0.5cm} \mathbf{J}_f = 
  \frac{\partial f}{\partial x} = \begin{tikzpicture}[baseline={(m.center)}]
            \matrix [matrix of math nodes,left delimiter={[},right delimiter={]}] (m)
            {
                \frac{\partial f_1}{\partial x_1} & \cdots & \frac{\partial f_1}{\partial x_n} \\
                \vdots & \ddots & \vdots \\
                \frac{\partial f_m}{\partial x_1} & \cdots & \frac{\partial f_m}{\partial x_n} \\
            };
            \draw[color=blue] (m-1-1.north west) rectangle (m-3-1.south east-|m-1-1.north east) node[pos=1.1,xshift=-0.6cm] {Forward Mode};
            \draw[color=red] (m-1-1.north west) rectangle (m-1-3.south east-|m-1-3.north east) node[pos=1.4,xshift=0.5cm,yshift=0.7cm] {Reverse Mode};
            \draw[color=violet] (-0.9,1) rectangle (-0.4,-1);
            \draw[color=violet] (-0.3,1) rectangle (-0.2,-1);
            \draw[color=violet] (-0.1,1) rectangle (0,-1);
            \draw[color=violet] (0.1,1) rectangle (0.2,-1);
            \draw[color=violet] (0.4,1) rectangle (0.9,-1) node[pos=1.1] {\system};
            \end{tikzpicture}
\end{equation*}
        \caption{The Jacobian Matrix of a Function. Forward-mode AD computes a column of this matrix, whereas the reverse-mode AD computes a row of this matrix. \system computes the full Jacobian matrix using a vectorized variant of the forward-mode AD.}
        \label{jacobian_mat}
\end{figure}






For a class of
optimisation problems, such as various computer vision problems using the Levenberg-Marquardt algorithm~\cite{marquardt1963algorithm,levenberg1944method,more1978levenberg},
one is required to compute the \emph{full} Jacobian matrix.
In such cases, neither of the two techniques perform efficiently.
To compute the full Jacobian matrix, both forward and reverse-mode techniques must iterate over either the columns or the rows of the Jacobian matrix, respectively.
Given that both approaches have a constant overhead over the original computation, the forward mode technique is more efficient for computing the full Jacobian matrix when $m \gg n$, whereas the reverse mode AD is more efficient when $n \gg m$.
However, when $n$ and $m$ are in the same range, it is not clear which approach performs better.  
Moreover:
\begin{itemize}
\item
By carefully examining the body of the loops needed for computing the full Jacobian matrix, one can observe that many computations are loop-invariant and are unnecessarily performed multiple times.
Thus, there is a lost opportunity for \textit{loop-invariant code motion} for hoisting such expressions outside the loop, thus asymptotically improving the performance (cf. the NNMF and Bundle Adjustment experiments in Section~\ref{sec:exp}).
\item
Furthermore, while the result of automatic differentiation is known to have only a constant factor more arithmetic operations than the original program, the constant can be significant; this overhead can have a dramatic impact on the run-time performance in practice. 
More specifically, in applications involving the manipulation of vectors, many intermediate vectors are allocated that can be removed. 
The optimisation for eliminating such intermediate vectors is known as \textit{deforestation}~\cite{deforestation,foldr-fusion-1,Svenningsson:2002:SFA:581478.581491,Coutts07streamfusion} or loop fusion in the functional programming community.
This optimisation opens the door for many other optimisations such as normalising loops that are iterating over sparse 
vectors with a single non-zero element into a single statement (cf. Example 5 in Section~\ref{sec:fsmooth_trans}).
\end{itemize}
  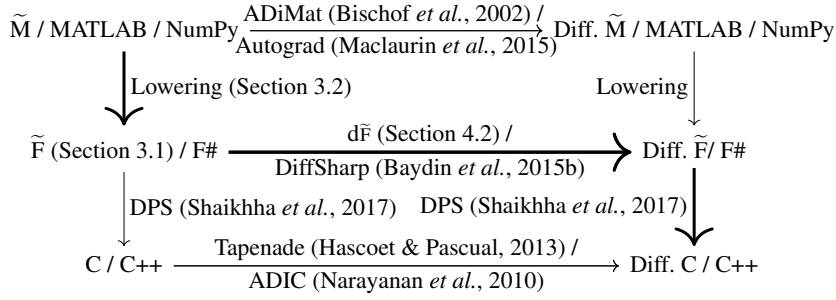
\begin{figure}[t!]
\begin{equation*}
\begin{tikzcd}[column sep=3.9cm, row sep=1.0cm]
\text{\ladsl{} / MATLAB / NumPy} \arrow[d,line width=0.4mm,"\text{Lowering (Section \ref{sec:ladsl})}"] \arrow[r,"\text{ADiMat~\cite{bischof2002combining} / }"] \arrow[r,swap,"\text{Autograd~\cite{maclaurin2015autograd}}"] & \text{Diff. \ladsl{} / MATLAB / NumPy} \arrow[d,swap,"\text{Lowering}"] \\
\text{\fsmooth~(Section~\ref{sec:fsmooth}) / F\#} \arrow[r,line width=0.4mm,"\text{\system{} (Section~\ref{sec:fsmooth_ad}) /}"]
\arrow[r,swap,line width=0.4mm,"\text{DiffSharp~\cite{baydin2015diffsharp}}"]  \arrow[d,"\text{DPS~\cite{dps_fhpc}}"]  & \text{Diff. \fsmooth / F\#} \arrow[d,line width=0.4mm,swap,"\text{DPS~\cite{dps_fhpc}}"]\\
\text{C / C++}  \arrow[r,"\text{Tapenade~\cite{tapenade} /}"]  \arrow[r,swap,"\text{ADIC~\cite{narayanan2010adic2}}"]  & \text{Diff. C / C++} \\
\end{tikzcd}
\end{equation*}
\vspace{-1.2cm}
\caption{Compilation process in different AD systems. The solid arrows correspond to the pipeline used in \system.}
\label{fig:comp_cd}
\end{figure}

\section{Overview}
\label{sec:overview}

In this section, we start with an overview of the compilation process in \system, which is shown in Figure~\ref{fig:comp_cd}. This figure demonstrates the position of \system with respect to existing AD tools.

\system starts from a program written in a high-level linear algebra library, called \ladsl{} (Section~\ref{sec:ladsl}).
This program is lowered into its implementation in a higher-order functional language with array support, called \fsmooth{} (Section~\ref{sec:fsmooth}).
If a part of the program requires computing differentiation (which are specified by using high-level differentiation API exposed by \system, as mentioned in Section~\ref{sec:diff_api})
\system uses AD transformation rules (Section~\ref{sec:fsmooth_ad}) for transforming the involved expressions into their differentiated form.

Finally, after applying several simplifications such as loop fusion, partial evaluation, data layout transformation, etc. (Section~\ref{sec:fsmooth_trans}) the differentiated program is transformed into low-level C code.
The generated C code uses efficient stack-discipline memory management by using the destination-passing style (DPS) technique~\cite{dps_fhpc}. 
Alternatively, one can use other array programming languages such as Futhark~\cite{henriksen2017futhark} and SAC~\cite{Grelck2006} as the target language for 
differentiated \fsmooth{} programs.

Next, we present the core functional language used in \system, on top of which we define source-to-source AD transformation and simplification rules.

\subsection{\fsmooth}
\label{sec:fsmooth}

\lafsharp{} 
(pronounced as F smooth) 
is a subset of F\#, an ML-like higher-order functional programming language.  It is designed to be \emph{expressive enough} to make it easy to write array-processing workloads, while simultaneously being \emph{restricted enough} 
(e.g., avoiding partially applied functions and returning functions from lambdas which are enforced by the 
(T-App) and (T-Abs) typing rules, respectively)
in order to compile it to code that is as efficient as hand-written C, with very simple and efficient memory management~\cite{dps_fhpc}.

\begin{figure}[tb]
  \parbox{\textwidth}{
\raggedright
\def\comment{ & -- }
\textbf{Syntax:} \\
\begin{tabular*}{0.965\columnwidth}{r c p{0.4\columnwidth} r}
\expr & ::= & \expr{} \vmore{\expr{}} | \vabs{\vmore{\text{x}}}{\expr} | \text{x} \comment Application, Abstraction, and Variable Access\\
& | & \text{r} | \text{i} | \vtrue{} | \vfalse{} \comment Scalar, Index, and Boolean Values\\
& | & \text{c} \comment Constants (see below)\\
& | & !let x = !e! in !e \comment(Non-Recursive) Let Binding\\
& | & !if !e! then !e! else !e \comment Conditional\\
& | & \varraygen{\expr, ..., \expr} \comment Array Construction\\
\typet{} & ::= & \typemat{} \comment (Non-Functional) Expression Type\\
& | & \typefunone{\vmore{\typet}}{\typemat} \comment Function Types (No Currying)\\
\typemat{} & ::= & \typenum{} \comment Numeric Type\\
& | & \typearray{\typemat{}} \comment Vector, Matrix, ... Type\\
& | & \typepair{\typemat}{\typemat} \comment Pair Type \\
& | & \typebool \comment Boolean Type\\
\typenum & ::= & \typedouble{} | \typeindex{} \comment Scalar and Index Type\\
\end{tabular*}
\\[8pt]
\textbf{Typing Rules:}\\
\fbox{
\begin{minipage}{0.965\columnwidth}
\centering
(T-App) $\infer{\exprind{0}: \typefun{\typet}{\typemat} \tab \tab \vmore{\expr} : \vmore{\typet}}{\exprind{0}\ \vmore{\expr}: \typemat}$
\tab
(T-Abs) $\infer{ \Gamma  \cup  \vmore{\text{x}}: \vmore{\typet} \vdash \expr{}: \typemat }{\Gamma \vdash \vabs{\vmore{\text{x}}}{\expr}: \typefun{\typet}{\typemat}}$
\tab
(T-Var) $\infer{\text{x}: \typet \in \Gamma }{\Gamma \vdash \text{x}: \typet}$
\\
(T-Let) $\infer{ \Gamma \vdash \exprind{1}: \typeind{1} \tab \tab \Gamma, \text{x}: \typeind{1} \vdash \exprind{2}: \typeind{2}}{\Gamma \vdash \text{\lett{} x = \exprind{1} \inn{} \exprind{2}: \typeind{2}}}$ 
\tab (T-If) $\infer{\exprind{1}: \typebool{} \tab \tab \exprind{2}: \typemat \tab \tab \exprind{3}: \typemat }{\vifthenelse{\exprind{1}}{\exprind{2}}{\exprind{3}}: \typemat}$ 
\end{minipage}
}\\[8pt]
\textbf{Scalar Function Constants:}\\
\fbox{
\begin{minipage}{0.533\columnwidth}
\begin{tabular}{l c l}
\vconst{+} | \vconst{-} | \vconst{*} | \vconst{/} | \vconst{**} &:& \typefunone{\typenum{} \funarrow{} \typenum}{\typenum}\\
\vconst{-} &:& \typefunone{\typenum}{\typenum}\\
\vconst{sin} | \vconst{cos} | \vconst{tan} | \vconst{log} | \vconst{exp} &:& \typefunone{\typedouble}{\typedouble}\\
\end{tabular}
\end{minipage}
}\fbox{
\begin{minipage}{0.40\columnwidth}
\begin{tabular}{l c l}
\vconst{>} | \vconst{<} | \vconst{==} | \vconst{<>} &:& \typefunone{\typenum{} \funarrow{} \typenum}{\typebool} \\
\vconst{\&\&} | \vconst{||} &:& \typefunone{\typebool{} \funarrow{} \typebool}{\typebool} \\
\vconst{not} &:& \typefunone{\typebool}{\typebool} \\
\end{tabular}
\end{minipage}
}\\[8pt]
\textbf{Vector Function Constants:}\\
\fbox{
\begin{minipage}{0.533\columnwidth}
\setlength\tabcolsep{1.5pt} 
\begin{tabular}{l c l}
\vbuildk{} &:& 
\typefunone{
\typeindex{}
 \funarrow{}
 (\typefunone{\typeindex}{\typemat{}})
 }{
 \typearray{\typemat}
}\\
\viteratek{} &:& 
\typefunone{%
(\typefunone{%
$\typemat$
\funarrow{}
\typeindex
}{$\typemat$})
\funarrow{}
$\typemat$
\funarrow{}
\typeindex
}{$\typemat$
}\\
\end{tabular}
\end{minipage}
}\fbox{
\begin{minipage}{0.40\columnwidth}
\setlength\tabcolsep{1.5pt} 
\begin{tabular}{l c l}
\vcget{} &:& 
\typefunone{%
\typearray{\typemat}%
\funarrow{}%
\typeindex%
}{%
\typemat%
}\\
\vclength{} &:& 
\typearray{\typemat}%
\funarrow{}%
\typeindex\\
\end{tabular}
\end{minipage}
}\\[8pt]
\textbf{Pair Function Constants:}\\
\fbox{
\begin{minipage}{0.32\columnwidth}
\vpairk{} : \typefunone{%
$\typemat{}_1$%
}{%
\typefunone{%
$\typemat{}_2$%
}{\typepair{$\typemat{}_1$}{$\typemat{}_2$}}
}
\end{minipage}
}\fbox{
\begin{minipage}{0.29\columnwidth}
\vfstk{} : \typefunone{%
\typepair{$\typemat{}_1$}{$\typemat{}_2$}%
}{%
$\typemat{}_1$%
}
\end{minipage}
}\fbox{
\begin{minipage}{0.293\columnwidth}
\vsndk{} : \typefunone{%
\typepair{$\typemat{}_1$}{$\typemat{}_2$}%
}{%
$\typemat{}_2$%
}
\end{minipage}
}
\\[8pt]
\textbf{Syntactic Sugar:} \\
\fbox{
\begin{minipage}{0.533\columnwidth}
\begin{tabular}{l c l}
\typematrix{} & = & \typearray{\typearray{\typedouble}} \\
\typedoubled{} & = & \typepair{\typedouble}{\typedouble} \\
\typevectord{} & = & \typearray{\typepair{\typedouble}{\typedouble}} \\
\typematrixd{} & = & \typearray{\typearray{\typepair{\typedouble}{\typedouble}}}
\end{tabular}
\end{minipage}
}\fbox{
\begin{minipage}{0.40\columnwidth}
\begin{tabular}{l c l}
\vget{\exprind{0}}{\exprind{1}} & = & \vcget{} \exprind{0} \exprind{1}
\\
\vpair{\exprind{0}}{\exprind{1}} & = & \vpairk{} \exprind{0} \exprind{1} \\
\exprind{1} $bop$ \exprind{2} & = & $bop$ \exprind{1} \exprind{2} \\
\typevector{} & = & \typearray{\typedouble}
\end{tabular}
\end{minipage}
}
\caption{The syntax, type system, and function constants of the core \fsmooth{}.}
\label{fig:fsmooth_core_syntax}}
\end{figure}

Figure~\ref{fig:fsmooth_core_syntax} shows the abstract syntax (parentheses can be used as necessary), type system, and several built-in functions of \fsmooth. 
\vmore{x} and \vmore{e} denote one or more variables and expressions, respectively, which are separated
by spaces, whereas, \vmore{T} represents one or more types which are separated by \funarrow{}.
In addition to the usual $\lambda$-calculus constructs (abstraction, application, and variable access), \fsmooth{} supports let binding and conditionals.

\fsmooth{} supports array programming by defining the following built-in functions: \vbuildk{} for producing arrays;  \viteratek{} for iteration for a particular number of times (from \cod{0} to \cod{n-1}) while maintaining a state across iterations; \vclength{} to get the size of an array; and \vcget{} to index an array. 

One of the key features of \fsmooth{} is its support for both source-to-source automatic differentiation and global optimisations such as loop-invariant code motion and loop fusion.
The transformations required for automatic differentiation are presented in Section~\ref{sec:fsmooth_ad}, and the ones for optimisation and simplification are shown in Section~\ref{sec:fsmooth_trans}.

Next, we show how a Linear Algebra library can be defined on top of \fsmooth{}.

\subsection{\ladsl{}}
\label{sec:ladsl}
\ladsl{} is a functional Linear Algebra library, mainly inspired by MATLAB and R, programming languages which are heavily used by data analysts.
By providing high-level vector and matrix operations, \ladsl{} frees the users from low-level details and enables them to focus on the algorithmic aspects of the problem in hand. 

\begin{table}[t!]
\setlength{\tabcolsep}{6pt}
\small
\centering
\caption{Equivalent operations in Matlab, R, NumPy, and \ladsl{}.}
\label{table_ops}
\vspace{-0.2cm}
\begin{tabular}{ c c c c }
\toprule
  {\bf Matlab} & {\bf R} & {\bf NumPy} & {\bf \ladsl{}}  \\ \toprule
  A * B & A \%*\% B & A.dot(B) & matrixMult A B \\ \midrule
  A + B & A + B & A + B & matrixAdd A B \\ \midrule
  A' & t(A) & A.T & matrixTranspose A \\ \midrule
  ones(n, m) & matrix(1, n, m) & ones((n, m)) & matrixOnes n m \\ \midrule
  zeros(n, m) & matrix(0, n, m) & zeros((n, m)) & matrixZeros n m \\ \midrule
  eye(n) & diag(n) & eye(n) & matrixEye n \\ \midrule
\end{tabular}
\end{table}

\ladsl{} is simply a \fsmooth{} library, but it can also be thought of
as an \textit{embedded domain-specific language} (EDSL)~\cite{hudak-dsl}.
Figure~\ref{fig:ladsl_ops} demonstrates a subset of \ladsl{} operations which are defined as functions in \fsmooth{}. 
This library is expressive enough for constructing vectors and matrices, element-wise operations, accessing a slice of elements, reduction-based operations (computing the sum of vector elements), matrix transpose, and matrix multiplication.\footnote{We have duplicated identical implementations for functions such as \code{vectorMap} and \code{matrixMap} (and similarly for \code{vectorMap2} and \code{matrixMap2}) because of the restrictions imposed by \fsmooth{}~\cite{dps_fhpc}. More specifically, the C code generation process does not handle polymorphic functions. Hence, we need to specify two different monomorphic functions with the same implementation for such functions.}
Supporting more sophisticated operations such as matrix determinant and matrix decomposition is beyond the scope of the current paper, and we leave it for the future.
As discussed before, \ladsl{} is inspired by MATLAB and R. 
As a result, there is a mapping among the constructs of \ladsl{} and these matrix-based languages. 
Hence, it is easily possible to translate a program written in one of these languages to \ladsl{}.
Table~\ref{table_ops} demonstrates the mapping among a subset of the constructs of MATLAB, R, NumPy and \ladsl{}.

\begin{figure*}[t]
\setlength{\tabcolsep}{3pt}
\aboverulesep=0ex
\belowrulesep=0ex
\newcommand{\nextfun}{\midrule}
\begin{minipage}{.475\textwidth}\raggedright
\begin{tabular}{|l|}
\toprule
\lett{} vectorRange = \vabs{n}{}
\\ \tabt 
\vbuild{n}{(\vabs{i}{i})}
\\ \nextfun
\lett{} vectorFill = \vabs{n e}{} 
\\ \tabt
\vbuild{n}{(\vabs{i}{e})}
\\ \nextfun
\lett{} vectorHot = \vabs{n i}{} 
\\ \tabt
\vbuild{n}{(\vabs{j}{\iif{} i = j \then{} 1 \elsee{} 0})}
\\ \nextfun
\lett{} vectorMap = \vabs{v f}{}
\\ \tabt
\vbuild{(\vlength{v})}{(\vabs{i}{f \vget{v}{i}})}
\\ \nextfun
\lett{} vectorMap2 = \vabs{v1 v2 f}{}
\\
\tabt 
\vbuild{(\vlength{v1})}{(\vabs{i}{f \vget{v1}{i} \vget{v2}{i}})}
\\ \nextfun
\lett{} vectorZip = \vabs{v1 v2}{}
\\
\tabt 
vectorMap2 v1 v2 (\vpairk{})
\\ \nextfun
\lett{} vectorAdd = \vabs{v1 v2}{}
\\
\tabt 
vectorMap2 v1 v2 (+)
\\ \nextfun
\lett{} vectorEMul = \vabs{v1 v2}{}
\\
\tabt 
vectorMap2 v1 v2 ($\times$)
\\ \nextfun
\lett{} vectorSMul = \vabs{v s}{}
\\
\tabt 
vectorMap v (\vabs{a}{a $\times$ s})
\\ \nextfun
\lett{} vectorSum = \vabs{v}{}\\
\tabt \viterate{(\vabs{s i}{s + \vget{v}{i}})}{0}{(\vlength{v})}
\\ \nextfun
\lett{} vectorDot = \vabs{v1 v2}{}\\
\tabt
vectorSum (vectorEMul v1 v2)
\\ \nextfun
\lett{} vectorNorm = \vabs{v}{}
\\
\tabt 
sqrt (vectorDot v v)
\\ \nextfun
\lett{} vectorSlice = \vabs{v s e}{} \\
\tabt \vbuild{(e \vcsubcard{} s \vcaddcard{} 1)}{(\vabs{i}{\vget{v}{i + s}})}
\\ \nextfun
\lett{} vectorToMatrix = \vabs{v}{}
\\
\tabt \vbuild{1}{(\vabs{i}{v})} 
\\ \nextfun
\lett{} vectorOutProd = \vabs{v1 v2}{}
\\
\tabt \lett{} m1 = vectorToMatrix v1 
\\
\tabt \lett{} m2 = vectorToMatrix v2 
\\
\tabt \lett{} m2T = matrixTranspose m2
\\
\tabt matrixMul m1 m2T
\\
\bottomrule
\end{tabular}
\end{minipage}
\begin{minipage}{.48\textwidth}\raggedright
\begin{tabular}{|l|}
\toprule
\lett{} matrixRows = \vabs{m}{} \vlength{m} 
\\ \nextfun
\lett{} matrixCols = \vabs{m}{} \vlength{(\vget{m}{0})} 
\\ \nextfun
\lett{} matrixZeros = \vabs{r c}{} 
\\ \tabt \vbuild{r}{(\vabs{i}{vectorFill c 0})}
\\ \nextfun
\lett{} matrixOnes = \vabs{r c}{} 
\\ \tabt \vbuild{r}{(\vabs{i}{vectorFill c 1})}
\\ \nextfun
\lett{} matrixEye = \vabs{n}{} 
\\ \tabt \vbuild{n}{(\vabs{i}{vectorHot n i})}
\\ \nextfun
\lett{} matrixHot = \vabs{n m r c}{} 
\\ \tabt
\vbuild{n}{(\vabs{i}{}} 
\\ \tabt \tabt
\vbuild{m}{(\vabs{j}{}} 
\\ \tabt \tabt \tabt \iif{} (i = r \cod{\&\&} j = c) \then{} 1 \elsee{} 0
\\
\tabt ) )
\\ \nextfun
\lett{} matrixMap = \vabs{m f}{}
\\ \tabt
\vbuild{(\vlength{m})}{(\vabs{i}{f \vget{m}{i}})}
\\ \nextfun
\lett{} matrixMap2 = \vabs{m1 m2 f}{}
\\
\tabt 
\vbuild{(\vlength{m1})}{}(\vabs{i}{}f\\
\tabt \tabt \vget{m1}{i} \vget{m2}{i})
\\ \nextfun
\lett{} matrixAdd = \vabs{m1 m2}{}
\\
\tabt
matrixMap2 m1 m2 vectorAdd
\\ \nextfun
\lett{} matrixTranspose = \vabs{m}{} 
\\
\tabt
\vbuild{(matrixCols m)}{}(\vabs{i}{}
\\
\tabt \tabt 
\vbuild{(matrixRows m)}{}(\vabs{j}{}
\\
\tabt \tabt \tabt 
\vget{\vget{m}{j}}{i}
\\
\tabt ) )
\\ \nextfun
\lett{} matrixMul = \vabs{m1 m2}{} 
\\
\tabt \lett{} m2T = matrixTranspose m2
\\
\tabt
\vbuild{(matrixRows m1)}{}(\vabs{i}{}
\\
\tabt \tabt \vbuild{(matrixCols m2)}{}(\vabs{j}{}
\\
\tabt \tabt \tabt vectorDot (\vget{m1}{i}) (\vget{m2T}{j})
\\
\tabt ) )
\\ \nextfun
\lett{} matrixTrace = \vabs{m}{}\\
\tabt{}\vifoldk{}\,(\vabs{s i}{}%
s+\vget{\vget{m}{i}}{i}) 0 (\vlengthk{}\,m)\\
\bottomrule
\end{tabular}
\end{minipage}\hfill
\caption{A subset of \ladsl{} constructs defined in \lafsharp{}.}
\label{fig:ladsl_ops}
\end{figure*}

\noindent \textbf{Example 1.} Assume that we have a matrix $M$ and two vectors $u$ and $v$ (which are represented as column matrices and are independent of $M$).
Based on matrix calculus one can prove that $\frac{\partial \big(uMv^T\big)}{\partial M}=u^Tv$. 
However, computing the differentiated version of this function using forward-mode AD tools requires multiple iterations over the differentiated program for every element in the matrix $M$. 
By using the reverse-mode AD, one can invoke the differentiated function only once, and the adjoint parts of the input matrix $M$ will be filled in.
We will show in Section~\ref{sec:fsmooth_trans} that \system{} derives the gradient of this expression with respect to $M$, resulting in an expression equivalent to $u^Tv$.
This optimises away multiple iterations over the differentiated program for each element of matrix $M$, in contrast to the existing AD tools based on the forward-mode AD technique.

For the moment, we only show how the matrix expression $uMv^T$ is expressed in \ladsl{}:

\begin{fscode}
\lett{} f = \vabs{u M v}{}\\
\tabt{}\lett{} um = vectorToMatrix u\\
\tabt{}\lett{} vt = matrixTranspose (vectorToMatrix v)\\
\tabt{}\lett{} m = 
matrixMult um (matrixMult M
vt)\\
\tabt{}\vget{\vget{m}{0}}{0}
\end{fscode}

\codespace{}
\noindent The last expression is for accessing the single scalar element of a $1\times 1$ matrix.

\demo

  \section{Differentiation}
\label{sec_diff}

In this section, we show the differentiation process in \system. First, we start by the high-level API exposed by \system to the end users.
Then, we show how \system uses automatic differentiation behind the scenes for computing derivatives.
Finally, we present the optimisations offered by \system, and we demonstrate how \system can use these optimisations to deduce several matrix calculus identities.

\subsection{High-Level API}
\label{sec:diff_api}
For computing the derivative of an arbitrary function, \system provides the \derivk{} construct. 
This construct can be better thought of as a macro, 
which is expanded during compilation time.
The expanded expression includes the expression of the original computation, which is given as the first argument (and can be an arbitrary scalar, vector, or matrix expression), and the derivative of this expression with respect to the variable given as the second argument, referred to as the \textit{independent variable}.
Note that one can easily compute the derivative of an expression with respect to a list of free variables by multiple 
invocation of the \derivk{} construct.

Figure~\ref{fig:deriv} shows the implementation of the \derivk{} construct, which is a compile-time
transformation, corresponding to computing the derivative of e with respect to x.
First, \derivk{} constructs a lambda function which has the free variables of the given expression as its input 
parameters.\footnote{\derivk{} only handles expressions of scalar, vector, or matrix type,
and cannot be used for function types. The same restriction applies for the extracted free variables, but 
there is no restriction on the type of other sub-expressions.}
The produced lambda function is given as input to source-to-source automatic differentiation 
(denoted by \difftrans{}), which can handle expressions of arbitrary type as explained later in Section~\ref{sec:fsmooth_ad}.
The differentiated function is applied to the dual number encoding of all the free variables (using the \argt{$i$}{}{}
compile-time transformation,
corresponding to the dual number encoding of a tensor expression of rank $i$). 
Based on the type of the independent variable, the result of this applied function will be the result scalar value,
or an element of the result vector or matrix.

If the free variable is different than the input variable with respect to which we are differentiating (i.e., the independent variable), the derivative part is a zero scalar, vector, or matrix. 
Otherwise, the derivative part is a one-hot encoding scalar, vector, or matrix.
If the independent variable has a scalar type, \derivk{} returns the applied function. 
However, if the independent variable has a vector type, \derivk{} constructs a vector with the same number of elements as the independent variable. 
For computing the $r^{\text{th}}$ element of the result vector, the corresponding input vector is a one-hot encoding with a single one at the $r^{\text{th}}$ position.
The situation is similar for an independent variable with a matrix type; the corresponding one-hot encoding matrix has a single one at the $r^{\text{th}}$ row and $c^{\text{th}}$ column.

\noindent \textbf{Example 2.} Let us assume that we would like to compute the derivative of a program computing the cosine function with respect to its input:

\begin{fscode}
\vconst{cos} a
\end{fscode}

\codespace{}
\noindent The derivative of this program at point $a$ is represented as follows:

\begin{fscode}
\dterm{\deriv{(\vconst{cos} a)}{a}}
\end{fscode}

\codespace{}
\noindent This expression is transformed into the following expression after expanding the \derivk{} macro:

\begin{fscode}
\dterm{(\difftrans{\vabs{a}{}\vconst{cos} a}) \adpair{a}{1}}
\end{fscode}

\demo

\begin{figure*}[t!]
\centering
\setlength{\tabcolsep}{2pt}
\fbox{
\begin{tabular}{l c l}
\derivk{} e (x: \typedouble{}) & = & (\difftrans{\vabs{v}{e}})
\argt{0}{v}{x}\\
\derivk{} e (x: \typevector{}) & = & \vbuildk{} (\vlength{x}) (\vabs{r}{} (\difftrans{\vabs{v}{e}}) 
\argt{1}{v}{x,r})\\
\derivk{} e (x: \typematrix{}) & = & \vbuildk{} (matrixRows x) (\vabs{r}{} \\
& & \tabt \vbuildk{} (matrixCols x) (\vabs{c}{} (\difftrans{\vabs{v}{e}}) 
\argt{2}{v}{x,r,c}))  \\
\\
\textbf{where}&&v = fvs(e)\\
\\
\argt{0}{ x: \typedouble{} }{x} & = & \adpair{x}{1}\\
\argt{0}{ v: \typedouble{} }{x} & = & \adpair{v}{0}\\
\argt{0}{ v: \vmore{\typedouble{}} }{x} & = & \argt{0}{\genexprind{v}{1}}{x} ... \argt{0}{\genexprind{v}{n}}{x} \quad \textbf{where} \quad v = \genexprind{v}{1} ... \genexprind{v}{n}\\
\samerule \\
\argt{1}{ x: \typevector{} }{x,r} & = & vectorZip x (vectorHot (\vlength{x}) r)\\
\argt{1}{ v: \typevector{} }{x,r} & = & vectorZip v (vectorZeros (\vlength{v}))\\ 
\argt{1}{ v: \vmore{\typevector{}} }{x,r} & = &  \argt{1}{\genexprind{v}{1}}{x,r} ... \argt{1}{\genexprind{v}{n}}{x,r} \quad \textbf{where} \quad v = \genexprind{v}{1} ... \genexprind{v}{n}\\ \samerule \\
\argt{2}{ x: \typematrix{} }{x,r,c} & = & matrixZip x (matrixHot (matrixRows x) (matrixCols x) r c)\\
\argt{2}{ v: \typematrix{} }{x,r,c} & = & matrixZip v (matrixZeros (matrixRows v) (matrixCols v))\\
\argt{2}{ v: \vmore{\typematrix{}} }{x,r,c} & = &  \argt{2}{\genexprind{v}{1}}{x,r,c} ... \argt{2}{\genexprind{v}{n}}{x,r,c} \quad \textbf{where} \quad v = \genexprind{v}{1} ... \genexprind{v}{n}\\
\end{tabular}
}
\caption{Implementation of the \derivk{} construct as a source-to-source transformation pass.}
\label{fig:deriv}
\end{figure*}

Furthermore, \system provides three additional differentiation constructs, inspired by AD tools such as DiffSharp~\cite{baydin2015diffsharp}: 1) \diffk{} computes the derivative of a function, from a real number to a real number, with respect to its input, 
2) \gradk{} computes the gradient of a function, from a vector of real numbers to a real number, with respect to its input vector,
and 3) \jacobk{} computes the Jacobian matrix of a vector-valued function, a function from a vector of real numbers to a vector of real numbers, with respect to its input vector.
Figure~\ref{fig:diff_trans_api} demonstrates how these high-level differentiation constructs are defined in terms of the source-to-source AD transformation construct $\mathcal{D}$.

\begin{figure}[t]
\setlength{\tabcolsep}{6pt}
\small
\centering
\begin{tabular}{l l l}
{\bf Operator} & {\bf Type} & {\bf Definition} \\ \toprule
\diff{} & (\typedouble \funarrow \typedouble) \funarrow &
\vabs{f x}{}\difftrans{f} \adpair{x}{1}\\
& \tab \typedouble \funarrow \typedoubled \\ \midrule
\grad{} & (\typevector \funarrow \typedouble)  & 
\vabs{f v}{}
\\ & \tab \funarrow \typevector \funarrow \typevectord &  
\tab \vbuildk{} (\vlengthk~v) (\vabs{i}{}\\ \cline{1-2}
\jacob{} & (\typevector \funarrow \typevector)  &
\tab \tab \difftrans{f} (vectorZip v (vectorHot (\vlengthk~v) i))
\\ & \tab \funarrow \typevector \funarrow \typematrixd &  
\tab )\\
\end{tabular}
\caption{High-Level Differentiation API for \lafsharp{}.}
\label{fig:diff_trans_api}
\end{figure}

\noindent \textbf{Example 2 (Continued).} 
For the previous example, if we would like to use the \diff{} construct, first we have to define the following function:

\begin{fscode}
g = \vabs{x}{}\vconst{cos}(x)
\end{fscode}

\codespace{}
\noindent The derivative of this function at point $a$ is represented as follows:

\begin{fscode}
\dterm{(\diff{g})~a}
\end{fscode}

\codespace{}
\noindent which is expanded to the following program:

\begin{fscode}
\dterm{\difftrans{g} \adpair{a}{1}}
\end{fscode}

\demo

\noindent Table~\ref{tbl:matrix_derivative} summarizes different matrix derivatives, and how they can be computed using our high-level API. Note that the definition of \vdiffk{} and \mdiffk{} is similar to \diffk, and the definition of \mgradk{} is similar to \gradk{} and \jacobk{} (cf. Figure~\ref{fig:diff_trans_api}). 
Note that the \derivk{} construct subsumes all these operators.   

\begin{table}
\setlength{\tabcolsep}{6pt}
\aboverulesep=0ex
\belowrulesep=0ex
\small
\caption{Different types of matrix derivatives.}
\label{tbl:matrix_derivative}
\begin{tabular}{c?c c c}
\backslashbox{Input Type}{Output Type} & {\bf Scalar}  & {\bf Vector} & {\bf Matrix} \\ \toprule
{\bf Scalar} & \diffk{} & \vdiffk{} & \mdiffk{}  \\ 
{\bf Vector} & \gradk{} & \jacobk{} & -- \\ 
{\bf Matrix} & \mgradk{} & -- & --
\end{tabular}
\end{table}

One key advantage of defining different matrix derivatives in terms of automatic differentiation is that one no longer needs to define the matrix calculus derivative rules for all different combinations shown in Table~\ref{tbl:matrix_derivative}.
Instead these rules can be deduced automatically from the automatic differentiation rules defined for scalar values. 
Moreover, even the algebraic identities for matrix derivative can be deduced by using the simplification rules presented in Section~\ref{sec:fsmooth_trans}.

Next, we present the source code transformation required for applying automatic differentiation rules.
\subsection{Source-to-Source Automatic Differentiation}
\label{sec:fsmooth_ad}

\system relies on source-to-source translation for implementing forward-mode automatic differentiation.
Each expression is converted into an expression containing both the original computation, together with the derivative computation, a.k.a. the dual number technique.
The scalar expressions are transformed into a pair of values, the original computation and the derivative computation.
The vector expressions are transformed into vectors containing tuple expressions, instead of scalar expressions.
The situation is similar for higher-rank tensors such as matrices.

The rules for automatic differentiation are demonstrated in Figure~\ref{fig:diff_trans}. 
\difftrans{e} specifies the AD translation for expression e. 
A variable y is translated to \diffvarprefix{y}, which, as opposed to the notation used in physics, keeps both the original and derivative part (D-Abs, D-Var, and D-Let).

The differentiation of an array of elements is an array of the differentiation of its elements (D-Vector).
Constructing an array is differentiated as an array with the same size, however, the way that each element of the array is constructed is differentiated (D-Build).
Differentiating an iteration results in an iteration with the same number of iterations, and with the initial state and the next state function both differentiated (D-IFold).
The differentiation of the length and indexing an array, is the same as the length and indexing the differentiated array, respectively (D-Length and D-Get).

For scalar expressions, we only consider the differentiation rules for real values (i.e.,
expressions of type \typedouble). For
expressions of type \typebool{} and \typeindex{} the derivative part is \false{} and 0, respectively (cf. Section~\ref{sec_sem}).
\difftransp{\expr} is a shorthand for extracting the original computation from the translated term \difftrans{\expr}, while \difftranst{\expr} is a shorthand for accessing the derivative part.
Note that in order to avoid redundant computation and code explosion, especially for the differentiation rules such as the product rule (D-Mult),
the arguments are bound to a new variable.

Differentiating a pair of elements results in the pair of differentiated elements (D-Pair). Similarly, differentiating the projection of a pair, is the projection of the differentiated pair (D-Fst, D-Snd).
For other scalar-valued functions, the differentiation rules are
similar to the corresponding rules in mathematics.

\noindent \textbf{Example 2 (Continued).} In the previous example, the differentiation transformation rules translate the program as follows:

\begin{fscode}
\diffvarprefix{g} = \vabs{\diffvarprefix{x}}{}-\dterm{\diffvarprefix{x}} * \vconst{sin}(\pterm{\diffvarprefix{x}})
\end{fscode}

\codespace{}
\noindent Based on the definition of the \diff{} construct, we have to use the AD version of the function (i.e., \diffvarprefix{g}) and assign 1 to the derivative part of the input. So the value of $\cos'$ for the input a is computed as follows:

\begin{fscode}
\dterm{(\diff{g})~a} \tab \evalsto \tab  
\dterm{\difftrans{g} \adpair{a}{1}} \tab \evalsto \tab  
\dterm{\diffvarprefix{g} \adpair{a}{1}} \tab \evalsto 
\\
\text{-\dterm{\adpair{a}{1}} * \vconst{sin}(\pterm{\adpair{a}{1}})} \tab \evalsto \tab 
\text{-1 * \vconst{sin}(a)} \tab \evalsto \tab  
\text{-\vconst{sin}(a)}
\end{fscode}

\demo

\noindent Similarly, we can compute the partial derivatives of a given function, by setting the desired derivative part to one, and the rest of derivatives to zero. This process is illustrated in the next example.

\noindent \textbf{Example 3.} Assume that we would like to compute the partial derivative of the expression a \vconst{*} b with respect to a, which is represented as follows in \fsmooth{}:

\begin{fscode}
\dterm{\deriv{(a \vconst{*} b)}{a}}
\end{fscode}

\codespace{}
\noindent This expression is expanded as follows:

\begin{fscode}
\dterm{\difftrans{\vabs{a b}{a \vconst{*} b}} \adpair{a}{1} \adpair{b}{0}}
\end{fscode}

\codespace{}
\noindent Note that the derivative part of the second input is set to 0. Similar to the previous example, the result is as follows:

\begin{fscode}
\dterm{(\vabs{\diffvarprefix{a}~\diffvarprefix{b}}{}
\adpair{\pterm{\diffvarprefix{a}}\vconst{*}\pterm{\diffvarprefix{b}}}{\pterm{\diffvarprefix{a}}\vconst{*}\dterm{\diffvarprefix{b}} +
\dterm{\diffvarprefix{a}}\vconst{*}\pterm{\diffvarprefix{b}}}) \adpair{a}{1} \adpair{b}{0}}
\end{fscode}

\codespace{}
\noindent which is evaluated as follows:

\begin{fscode}
\dterm{\adpair{a \vconst{*}  b}{1 \vconst{*}  b + a \vconst{*}  0}} \tab \evalsto \tab
1 \vconst{*} b + a \vconst{*} 0 \tab \evalsto \tab
b
\end{fscode}

\demo

\begin{figure*}[t!]
\centering
\setlength{\tabcolsep}{1pt}
\aboverulesep=0ex
\belowrulesep=0ex
\begin{tabular}{|l r c l|}
\toprule
\bf{$\lambda$-calculus Rules:}&&&\\\toprule
(D-App) & 
\difftrans{\vapp{\exprind{0}}{\exprind{1}}}&=&
(\difftrans{\exprind{0}})\ (\difftrans{\exprind{1}})\\ \midrule
(D-Abs) &
\difftrans{\vabs{\text{x}}{\expr{}}}&=&
\vabs{\diffvarprefix{\text{x}}}{ \difftrans{\expr{}} } \\ \midrule
(D-Var) & 
\difftrans{\text{y}}&=&
\diffvarprefix{\text{y}}\\ \midrule
(D-Let) &
\difftrans{\vlet{\text{x}}{\exprind{1}}{\exprind{2}}}&=&
\vlet{\diffvarprefix{\text{x}}}{\difftrans{\exprind{1}}}{} \difftrans{\exprind{2}}
\\ \midrule
(D-If) & 
\difftrans{\vifthenelse{\exprind{1}}{\exprind{2}}{\exprind{3}}}&=&
\vifthenelse{(\vfstk{} \difftrans{\exprind{1}})}{\difftrans{\exprind{2}}}{\difftrans{\exprind{3}}}\\
\toprule
\bf{Vector Rules:}&&&\\ \bottomrule
(D-Vector) & \difftrans{\varraygen{\exprind{0}, ..., \exprind{n}}} &=&
\varraygen{\difftrans{\exprind{0}}, ..., \difftrans{\exprind{n}}}\\\midrule
(D-Build) & \difftrans{\vbuild{\exprind{0}}{\exprind{1}}} &=&
\vbuild{(\vfstk{} \difftrans{\exprind{0}})}{(\vabs{i}{(\difftrans{\exprind{1}}) (i, 0))}}
\\ \midrule
(D-IFold) & \difftrans{\viterate{\exprind{0}}{\exprind{1}}{\exprind{2}}} &=&
\viteratek{} (\vabs{x i}{} 
\\
& & & \tabt 
(\difftrans{\exprind{0}}) x (i, 0)%
) \difftrans{\exprind{1}} (\vfstk{} \difftrans{\exprind{2}})\\ \midrule
(D-Get) & \difftrans{\vget{\exprind{0}}{\exprind{1}}} &=&
\vget{(\difftrans{\exprind{0}})}{\vfstk{} \difftrans{\exprind{1}}}
\\ \midrule
(D-Length) & \difftrans{\vlength{\exprind{0}}} &=&
(\vlength{\difftrans{\exprind{0}}}, 0)
\\
\toprule
\bf{Pair Rules:}&&&\\ \bottomrule
(D-Pair) & \difftrans{\vpair{\exprind{0}}{\exprind{1}}} &=&
\vpair{\difftrans{\exprind{0}}}{\difftrans{\exprind{1}}}
\\ \midrule
(D-Fst) & \difftrans{\vfstk{} \exprind{0}} &=&
\vfstk{} (\difftrans{\exprind{0}})
\\ \midrule
(D-Snd) & \difftrans{\vsndk{} \exprind{0}} &=&
\vsndk{} (\difftrans{\exprind{0}})
\\
\toprule
\bf{Scalar Rules:}&&&\\ \bottomrule
 & \difftrans{\expr} &=&
(\difftransp{\expr}, \difftranst{\expr})
\\ &&&
\\ \midrule
(D-Add) & \difftranst{\exprind{1} + \exprind{2}} &=&
\difftranst{\exprind{1}} + \difftranst{\exprind{2}}
\\ \midrule
(D-Mult) & \difftranst{\exprind{1} * \exprind{2}} &=&
\difftranst{\exprind{1}} * \difftransp{\exprind{2}} + \difftransp{\exprind{1}} * \difftranst{\exprind{2}}
\\ \midrule
(D-Div) & \difftranst{\exprind{1} / \exprind{2}} &=&
(\difftranst{\exprind{1}} * \difftransp{\exprind{2}} - \difftransp{\exprind{1}} * \difftranst{\exprind{2}}) / (\difftransp{\exprind{2}}**2)
\\ \midrule
(D-Neg) & \difftranst{-\exprind{1}} &=&
-\difftranst{\exprind{1}}
\\ \midrule
(D-Pow) & \difftranst{\exprind{1} ** \exprind{2}} &=&
(\difftransp{\exprind{2}} * \difftranst{\exprind{1}} / \difftransp{\exprind{1}} +  \\ 
& & & \tabt
\vconst{log} (\difftransp{\exprind{1}}) * \difftranst{\exprind{2}}) * (\difftransp{\exprind{1}} ** \difftransp{\exprind{2}})
\\ \midrule
(D-Sin) & \difftranst{\vconst{sin} \exprind{1}} &=&
\difftranst{\exprind{1}} * (\vconst{cos} \difftransp{\exprind{1}})
\\ \midrule
(D-Cos) & \difftranst{\vconst{cos} \exprind{1}} &=&
-\difftranst{\exprind{1}} * (\vconst{sin} \difftransp{\exprind{1}})
\\ \midrule
(D-Tan) & \difftranst{\vconst{tan} \exprind{1}} &=&
\difftranst{\exprind{1}} / ((\vconst{cos} \difftransp{\exprind{1}}) ** 2)
\\ \midrule
(D-Log) & \difftranst{\vconst{log} \exprind{1}} &=&
\difftranst{\exprind{1}} / \difftransp{\exprind{1}}
\\ \midrule
(D-Exp) & \difftranst{\vconst{exp} \exprind{1}} &=&
\difftranst{\exprind{1}} * (\vconst{exp} \difftransp{\exprind{1}})
\\
\toprule
\bf{Typing Rules:}&&&\\ \bottomrule
(DT-Fun) &
\difftranstype{\typefunone{\typeind{1}}{\typeind{2}}} &=&
\typefunone{\difftranstype{\typeind{1}}}{\difftranstype{\typeind{2}}} \\ \midrule
(DT-Num) &
\difftranstype{\typenum} &=&
\typenum{} $\times$ \typenum \\ \midrule
(DT-Bool) &
\difftranstype{\typebool} &=&
\typebool{} $\times$ \typebool \\ \midrule
(DT-Arr) &
\difftranstype{\typearray{\typemat{}}} &=&
\typearray{\difftranstype{\typemat{}}}
\\ \midrule
(DT-Pair) &
\difftranstype{\typepair{$\typemat{}_1$}{$\typemat{}_2$}} &=&
\typepair{\difftranstype{$\typemat{}_1$}}{\difftranstype{$\typemat{}_2$}}
\\ \bottomrule
\end{tabular}
\caption{Automatic Differentiation Rules for \lafsharp{} Expressions.}
\label{fig:diff_trans}
\end{figure*}


\noindent It is important to note that \system performs many of the evaluation steps shown for the previous examples during compilation time, i.e., performs partial evaluation. 
Section~\ref{sec:fsmooth_trans} gives more details on the optimisations and simplifications offered by \system.

\subsection{Perturbation Confusion and Nested Differentiation}
\label{sec:nesteddiff}

In several problems such as computing the Hessian matrix, one requires to compute the differentiation of a differentiated program.
In such cases, one should be careful dealing with tangent parts.
We demonstrate this problem in the next example.

\noindent
\textbf{Example 4.}
Here is the classical example showing the perturbation confusion problem:

\codespace{}

$$\frac{\partial}{\partial x}\big(x \frac{\partial x + y}{\partial y}\big)$$

\codespace{}

\noindent
This expression should be evaluated to $1$ at every point. 
However, an AD tool can mistakenly evaluate this expression to $2$.
This is because of confusing the tangent part (perturbation) of the free variable x, while computing the inner derivative.
This is known as the \textit{perturbation confusion} problem in the AD literature.
\demo

If one uses the differentiation API of Figure~\ref{fig:diff_trans_api}, the perturbation confusion problem
appears. In order to avoid this problem, the \derivk{} macro needs to be used. The macro expansion of the \derivk{} 
operator can be thought of as a 
preprocessing step that binds each of the perturbations to a different variable~\cite{siskind2005perturbation}. 
Then, it is the responsibility of the \fsmooth{} programming language implementation 
(e.g., using alpha renaming as mentioned in~\cite{siskind2008nesting}) to avoid the perturbation confusion 
problem. We demonstrate this fact on our running example.

\noindent
\textbf{Example 4 (Continued).}
The previous expression is implemented as follows in the \fsmooth{} language:

\codespace{}

\begin{fscode}
\vabs{x y}{} \\
\tab
\vsndk{} (\\
\tab \tab 
\derivk{} (x * (\vsndk{} (\\
\tab \tab \tab \deriv{(x + y)}{y}\\
\tab \tab
))) x
\\
\tab )
\end{fscode}

\codespace{}

\noindent
After expanding the inner \derivk{} macro, the following expression is derived:



\codespace{}

\begin{fscode}
\vabs{x y}{} \\
\tab
\vsndk{} (\\
\tab \tab 
\derivk{} (\\
\tab \tab \tab 
\vletn{$t_1$}{\vsndk{} (}\\
\tab \tab \tab \tab (\vabs{\diffvarprefix{x} \diffvarprefix{y}}{\adpair{\pterm{\diffvarprefix{x}} + \pterm{\diffvarprefix{y}}}{\dterm{\diffvarprefix{x}} + \dterm{\diffvarprefix{y}}}}) \adpair{x}{0} \adpair{y}{1}\\
\tab \tab \tab )\\
\tab \tab \tab
x * $t_1$) x
\\
\tab )
\end{fscode}
















\noindent
Note that the variable $t_1$ is created in order to avoid code explosion that can 
result from the product rule (cf. Section~\ref{sec:fsmooth_ad}). Expanding the outer \derivk{} macro results in the following expression:

\codespace{}

\begin{fscode}
\vabs{x y}{} \\
\tab
\vsndk{} (\\
\tab \tab 
(\vabs{\diffvarprefix{x} \diffvarprefix{y}}{} \\
\tab \tab \tab 
\vletn{\diffvarprefix{$t_1$}}{\vsndk{} (}\\
\tab \tab \tab \tab (\vabs{\ddiffvarprefix{x} \ddiffvarprefix{y}}{}\\
\tab \tab \tab \tab \tab
(
	\adpair{
		\pterm{\pterm{\ddiffvarprefix{x}}} + \pterm{\pterm{\ddiffvarprefix{y}}}
	}{
		\dterm{\pterm{\ddiffvarprefix{x}}} + \dterm{\pterm{\ddiffvarprefix{y}}}
	}
, \\
\tab \tab \tab \tab \tab \tab
	\adpair{
		\pterm{\dterm{\ddiffvarprefix{x}}} + \pterm{\dterm{\ddiffvarprefix{y}}}
	}{
		\dterm{\dterm{\ddiffvarprefix{x}}} + \dterm{\dterm{\ddiffvarprefix{y}}}
	}
)\\
\tab \tab \tab \tab
) \adpair{\diffvarprefix{x}}{\adpair{0}{0}} \adpair{\diffvarprefix{y}}{\adpair{1}{0}}\\
\tab \tab \tab)\\
\tab \tab \tab 
\adpair{\pterm{\diffvarprefix{x}} * \pterm{\diffvarprefix{$t_1$}}}{\dterm{\diffvarprefix{x}} * \pterm{\diffvarprefix{$t_1$}} + \pterm{\diffvarprefix{x}} * \dterm{\diffvarprefix{$t_1$}}}\\
\tab \tab) \adpair{x}{1} \adpair{y}{0}
\\
\tab )
\end{fscode}

\codespace{}

\noindent
Note that this macro expansion results in the inner perturbation variables \ddiffvarprefix{x} and
\ddiffvarprefix{y} which are different from the outer variables \diffvarprefix{x} and \diffvarprefix{y}.
This different naming results in avoiding the perturbation confusion problem.
Finally, partially evaluating the inner expression results in the following expression:





\codespace{}

\begin{fscode}
\vabs{x y}{} 
1
\end{fscode}

\demo

  \section{Efficient Differentiation}
\label{sec:fsmooth_trans}

In this section, we show how \system achieves efficient differentiable programming. 
First, we show several transformation rules applicable on \fsmooth{} expressions. 
We show how these transformation rules are used to derive matrix-algebraic identities, 
in the level of \fsmooth{} expressions.
Then, we show how we generate C code from \fsmooth{} expressions for more efficient memory management.

\subsection{Transformation Rules}
There are various algebraic identities that one can define for \fsmooth{}. 
Based on these identities, vector and matrix-based programs, as well as differentiated programs can be 
heavily optimised. 
Figure~\ref{fig:fsmooth_opts} shows a set of optimisations defined for \fsmooth{}.
Through examples, we show how these rewrite rules can discover vector and matrix-level 
algebraic equalities.  

\begin{figure}
\aboverulesep=0ex
\belowrulesep=0ex
\setlength{\tabcolsep}{2pt}
\newcommand{\rewriteif}[1]{\textit{#1}}
\begin{minipage}{0.53\columnwidth}
\centering
\begin{tabular}{|p{0.37\columnwidth} p{0.095\columnwidth} p{0.38\columnwidth}|}
\toprule
(\vabs{x}{} \exprind{0}) \exprind{1}
&\evalsto& 
\lett{} x = \exprind{1} \inn{} \exprind{0} \\ \midrule
\lett{} x = \exprind{0} \inn{} \exprind{1}
&\evalsto& 
\exprind{1}[x $\mapsto$ \exprind{0}] \\ \midrule
\lett{} x = \exprind{0} \inn{} \exprind{1}
&\evalsto& 
\exprind{1} \\
\multicolumn{3}{|c|}{\rewriteif{(x $\not\in$ fvs(\exprind{1}))}} \\ \midrule
\lett{} x = & & \lett{} y = \exprind{0} \inn{}\\
\tab \lett{} y = \exprind{0} \inn{} \exprind{1} &\evalsto& \lett{} x = \exprind{1} \\
\inn{} \exprind{2}
& & 
\inn{} \exprind{2} \\ \midrule
\lett{} x = \exprind{0} \inn{} & & \lett{} x = \exprind{0} \inn{} \\
\lett{} y = \exprind{0} \inn{} &\evalsto& \lett{} y = x \inn{} \\
\exprind{1} & & \exprind{1} \\ \midrule
\lett{} x = \exprind{0} \inn{} & & \lett{} y = \exprind{1} \inn{} \\
\lett{} y = \exprind{1} \inn{} &\evalsto& \lett{} x = \exprind{0} \inn{} \\
\exprind{2} & & \exprind{2} \\
\multicolumn{3}{|c|}{\rewriteif{(x $\not\in$ fvs(\exprind{1}))}} \\ \midrule
f(\lett{} x = \exprind{0} \inn{} \exprind{1})
&\evalsto& 
\lett{} x = \exprind{0} \inn{} f(\exprind{1}) \\
\bottomrule
\end{tabular}
\subcaption{$\lambda$-Calculus Rules}
\label{fig:fsmooth_opt_lambda}
\nextfigure
\begin{tabular}{|p{0.37\columnwidth} p{0.095\columnwidth} p{0.38\columnwidth}|}
\toprule
\expr{} \vconst{+} 0 = 0 \vconst{+} \expr{} 
&\evalsto& 
\expr{} \\ \midrule
\expr{} \vconst{*} 1 = 1 \vconst{*} \expr{} 
&\evalsto& 
\expr{} \\ \midrule
\expr{} \vconst{*} 0 = 0 \vconst{*} \expr{} 
&\evalsto& 
0 \\ \midrule
\expr{} \vconst{+} -\expr{} = \expr{} \vconst{-} \expr{} 
&\evalsto& 
0 \\ \midrule
\exprind{0} \vconst{*} \exprind{1} \vconst{+}
\exprind{0} \vconst{*} \exprind{2}
&\evalsto& 
\exprind{0} \vconst{*} (\exprind{1} \vconst{+} \exprind{2}) \\ \bottomrule
\end{tabular}
\subcaption{Ring-Structure Rules}
\label{fig:fsmooth_opt_ring}
\end{minipage}
\begin{minipage}{0.45\columnwidth}
\centering
\begin{tabular}{|p{0.49\columnwidth} p{0.095\columnwidth} p{0.31\columnwidth}|}
\toprule
\vget{(\vbuild{\exprind{0}}{\exprind{1}})}{\exprind{2}} 
&\evalsto& 
\exprind{1}\  \exprind{2} \\
\multicolumn{3}{|c|}{\rewriteif{(if \exprind{0} $>$ \exprind{2})}} 
\\ \midrule
\vlength{(\vbuild{\exprind{0}}{\exprind{1}})} 
&\evalsto&
\exprind{0} \\ \bottomrule
\end{tabular}
\subcaption{Loop Fusion Rules}
\label{fig:fsmooth_opt_fusion}
\centering
\begin{tabular}{|p{0.30\columnwidth} p{0.095\columnwidth} p{0.50\columnwidth}|}
\toprule
\iif{} \vtrue{} &&\\
\tab \then{} \exprind{1} &\evalsto&  
\exprind{1}
\\
\tab \elsee{} \exprind{2} && \\ \midrule
\iif{} \vfalse{}&&\\
\tab \then{} \exprind{1}&\evalsto&
\exprind{2}\\
\tab \elsee{} \exprind{2} &&
\\ \midrule
\iif{} \exprind{0} &&\\
\tab \then{} \exprind{1}&\evalsto&  
\exprind{1}\\
\tab  \elsee{} \exprind{1} &&
\\ \midrule
\iif{} \exprind{0} && \iif{} \exprind{0} \\
\tab \then{} \exprind{1} 
&\evalsto&  \tab \then{} \exprind{1}[\exprind{0} $\mapsto$ \vtrue{}] \\
\tab \elsee{} \exprind{2} & & \tab \elsee{} \exprind{2}[\exprind{0} $\mapsto$ \vfalse{}]
\\ \midrule
f (\iif{} \exprind{0} && \iif{} \exprind{0}\\
\tab \then{} \exprind{1}&\evalsto& \tab \then{} f (\exprind{1}) \\
\tab \elsee{} \exprind{2}) & & \tab \elsee{} f (\exprind{2})\\
\bottomrule
\end{tabular}
\subcaption{Conditional Rules}
\label{fig:fsmooth_opt_if}
\begin{tabular}{|p{0.30\columnwidth} p{0.095\columnwidth} p{0.50\columnwidth}|}
\toprule
\vfstk{} \adpair{\exprind{0}}{\exprind{1}}
&\evalsto& 
\exprind{0} 
\\ \midrule
\vsndk{} \adpair{\exprind{0}}{\exprind{1}}
&\evalsto& 
\exprind{1} \\ \bottomrule
\end{tabular}
\subcaption{Tuple Normalisation Rules}
\label{fig:fsmooth_opt_tuple}
\end{minipage}
\begin{minipage}{\columnwidth}
\centering
\begin{tabular}{|p{0.50\columnwidth} p{0.035\columnwidth} p{0.38\columnwidth}|}
\toprule
\vifoldk{} f z 0
&\evalsto& 
z \\ \midrule
\vifoldk{} f z (n + 1)
&\evalsto& 
\vifoldk{} (\vabs{a i}{} f a (i+1)) (f z 0) n \\ \midrule
\vifoldk{} (\vabs{a i}{} a) z n
&\evalsto& 
z \\ \midrule
\vifoldk{} (\vabs{a i}{} \iif(i = \exprind{0}) \then{} \exprind{1} \elsee{} a) z n 
&\evalsto& 
\lett{} a = z \inn{} \lett{} i = \exprind{0} in \exprind{1} \\
\multicolumn{3}{|c|}{\rewriteif{(if 0 $\leq$ \exprind{0} $<$ n and \exprind{0} does not mention a or i)}} \\ \bottomrule
\end{tabular}
\subcaption{Loop Normalisation Rules}
\label{fig:fsmooth_opt_iteration}
\centering
\begin{tabular}{|p{0.41\columnwidth} p{0.095\columnwidth} p{0.41\columnwidth}|}
\toprule
\vifoldk{} (\vabs{a i}{}&&
\\
\tab 
\adpair{\genexprind{f}{0} (\vfstk{} a) i}{\genexprind{f}{1} (\vsndk{} a) i}
&\evalsto& 
(\vifoldk{} \genexprind{f}{0} \genexprind{z}{0} n,\vifoldk{} \genexprind{f}{1} \genexprind{z}{1} n)
\\
) \adpair{\genexprind{z}{0}}{\genexprind{z}{1}} n & &
\\
\bottomrule
\end{tabular}
\vspace{-0.15cm}
\subcaption{Loop Fission Rule}
\label{fig:fsmooth_opt_fission}
\end{minipage}
\vspace{-0.2cm}
\caption{Transformation Rules for \fsmooth{}. Even though none of these rules are 
AD-specific, the rules of Figure~\ref{fig:fsmooth_opt_tuple} and Figure~\ref{fig:fsmooth_opt_fission} are more useful in the AD context.}
\label{fig:fsmooth_opts}
\end{figure}

There are various optimisations defined for scalar operations based on the ring structure of addition and multiplication, which are shown in Figure~\ref{fig:fsmooth_opt_ring}.
Note that other ring-based algebraic identities, such as associativity and commutativity, do not appear directly in the list of rules that \system applies. 
This is because they do not necessarily improve the performance, unless they are combined with other rewrite rules. 

As \fsmooth{} is based on $\lambda$-calculus, all partial evaluation rules for this calculus come for free.
Furthermore, the optimisations defined in the literature for let-binding can also be used.
Figure~\ref{fig:fsmooth_opt_lambda} shows this set of rules. 
More specifically, the last rule subsumes the loop-invariant code motion rule, which can
have a massive impact on performance. Consider the following C program for constructing an array:

\begin{lstlisting}
for(int i=0; i<n; i++) {
  double x = g(y);
  res[i] = x * vec[i];
}
\end{lstlisting}

\noindent By carefully examining this program, one realises that the computation of \code{x} is invariant to the loop, and can be hoisted outside:

\begin{lstlisting}
double x = g(y);
for(int i=0; i<n; i++) {
  res[i] = x * vec[i];
}
\end{lstlisting}

\noindent The corresponding \fsmooth{} program before optimisation is as follows:

\begin{fscode}
\vbuild{n}{}(\vabs{i}{}\\
\tab \tab \lett{} x = g y \inn{}\\
\tab \tab x * \vget{vec}{i}\\
)
\end{fscode}

\noindent Applying the last rule of Figure~\ref{fig:fsmooth_opt_lambda} results in the following program:

\begin{fscode}
\lett{} x = g y \inn{} \\
\vbuild{n}{}(\vabs{i}{} x * \vget{vec}{i})
\end{fscode}

Another source of performance issue for vector programs is the unnecessary intermediate vectors. Consider the following program which represents adding three vectors:

\begin{lstlisting}
for(int i=0; i<n; i++) {
  tmp[i] = vec1[i] + vec2[i];
}
for(int i=0; i<n; i++) {
  res[i] = tmp[i] + vec3[i];
}
\end{lstlisting}

Applying loop fusion results in removing the unnecessary vector \code{tmp}, and having a single loop instead of two:

\begin{lstlisting}
for(int i=0; i<n; i++) {
  res[i] = vec1[i] + vec2[i] + vec3[i];
}
\end{lstlisting}

As the vector constructs of \fsmooth{} are based on pull arrays, one can use the pull-array fusion rules for removing 
unnecessary intermediate vectors and matrices. The two fusion rules for pull-arrays are shown in Figure~\ref{fig:fsmooth_opt_fusion}.
Apart from fusing a pipeline of vector/matrix operations, this optimisation can also be used to derive several matrix-algebra identities.

\noindent \textbf{Example 5.} It is known that for a matrix $M$, the following equality holds ${(M^T)}^T=M$.
We show how we can derive the same equality in \system{}. 
In other words, we show that:

\codespace{}
\begin{fscode}
matrixTranspose (matrixTranspose M) = M
\end{fscode}

\codespace{}
\noindent After let binding the inner expression, and inlining the definition of matrixTranpose and the functions
inside it, the following program is produced:

\codespace{}
\begin{fscode}
\lett{} MT = 
\\
\tabt
\vbuild{(\vlength{\vget{M}{0}})}{}(\vabs{i}{}
\\
\tabt \tabt 
\vbuild{(\vlength{M})}{}(\vabs{j}{}
\\
\tabt \tabt \tabt 
\vget{\vget{M}{j}}{i} ) ) in
\\
\vbuild{(\vlength{\vget{MT}{0}})}{}(\vabs{i}{}
\\
\tabt 
\vbuild{(\vlength{MT})}{}(\vabs{j}{}
\\
\tabt \tabt
\vget{\vget{MT}{j}}{i} ) )
\end{fscode}

\noindent Now, by applying the loop fusion rules (cf. Figure~\ref{fig:fsmooth_opt_fusion}) and performing further partial evaluation, the following
expression is derived:

\codespace{}
\begin{fscode}
\vbuild{(\vlength{M})}{}(\vabs{i}{}
\\
\tabt 
\vbuild{(\vlength{\vget{M}{0}})}{}(\vabs{j}{}
\\
\tabt \tabt
\vget{\vget{M}{i}}{j} ) )
\end{fscode}

\codespace{}
\noindent This is the same expression as M.

\demo

Figure~\ref{fig:fsmooth_opt_if} shows the rewrite rules for conditional expressions. 
The first two rules partially evaluate a conditional expression when its condition is statically known.
The next rule removes a conditional expression when both the branches are the same expression.
The fourth rewrite rule propagates the result of evaluating the condition into both branches.
Finally, the last rewrite rule pushes a function applied to a conditional expression into both branches.
This results in duplicating that function, which can lead to explosion in the size of expressions.

Figure~\ref{fig:fsmooth_opt_iteration} corresponds to normalisation rules for the \vifoldk{} construct.
The first two rewrite rules are quite well-known; they unfold a loop the size of which is statically known.
The last two rewrite rules are more interesting and can result in asymptotic performance improvements.
The third rule turns a loop that does not modify its state into a single statement corresponding to its initial state.
The last rule turns a loop that modifies its state only in one of its iterations into a single statement.
These two rules are especially useful in the context of dealing with sparse vectors and matrices 
(e.g., one-hot encoding vectors which can be the result of gradient computations, as well as identity matrices) 
as we can see in the next example.

\noindent \textbf{Example 6.} It is known that for a vector $v$, the following equality holds: $v \times I=v$,
where $I$ is an identity matrix of the same dimension as $v$.
We show how \system{} derives the same algebraic identity.
More specifically, we show that:

\codespace{}
\begin{fscode}
\lett{} I = matrixEye (\vlength{v}) \inn{}\\
\vbuild{(\vlength{v})}{}(\vabs{i}{}\\
\tabt \vifoldk{} (\vabs{a j}{}\\
\tabt \tabt a + \vget{v}{j} * \vget{\vget{I}{j}}{i} ) 0 (\vlength{v})
\end{fscode}

\codespace{}
\noindent is equivalent to v. Inlining and fusing this expression results in:

\codespace{}
\begin{fscode}
\vbuild{(\vlength{v})}{}(\vabs{i}{}\\
\tabt \vifoldk{} (\vabs{a j}{}\\
\tabt \tabt a + \vget{v}{j} * (\iif{} (i=j) \then{} 1 \elsee{} 0) ) 0 (\vlength{v})
\end{fscode}

\codespace{}
\noindent Applying conditional rules (cf. Figure~\ref{fig:fsmooth_opt_if}) and ring-structure rules (cf. 
Figure~\ref{fig:fsmooth_opt_ring}) results in:

\codespace{}
\begin{fscode}
\vbuild{(\vlength{v})}{}(\vabs{i}{}\\
\tabt \vifoldk{} (\vabs{a j}{}\\
\tabt \tabt \iif{} (i=j) \then{} a + \vget{v}{j} \elsee{} a) ) 0 (\vlength{v})
\end{fscode}

\codespace{}
\noindent After applying the loop normalisation rules (cf. Figure~\ref{fig:fsmooth_opt_iteration}),
the following expression is derived:

\codespace{}
\begin{fscode}
\vbuild{(\vlength{v})}{}(\vabs{i}{}\lett{} a = 0 \inn{} \lett{} j = i \inn{} a + \vget{v}{j})
\end{fscode}

\codespace{}
\noindent Finally, performing partial evaluation and simplifications results in the following expression:

\codespace{}
\begin{fscode}
\vbuild{(\vlength{v})}{}(\vabs{i}{}\vget{v}{i})
\end{fscode}

\codespace{}
\noindent This is the same expression as v.

\demo

Let us focus on transformations which are more related to differentiation (but not specific to them).
Many intermediate tuples resulting from the dual number technique of 
AD can be removed by using partial evaluation. Figure~\ref{fig:fsmooth_opt_tuple} shows the partial evaluation 
rules for removing the intermediate tuples which are followed by a projection.

Partially evaluating the tuples across the boundary of a loop  requires a sophisticated analysis of the body of the loop.
To simplify this task, we perform loop fission for the loops that return a tuple of values. 
This is possible only when different elements of the tuple are computed independently in different iterations of the loop.

Consider the following C program which computes two aggregates:

\begin{lstlisting}
for(int i=0; i<n; i++) {
  res1 = res1 + vec[i];
  res2 = res2 * vec[i];
}
\end{lstlisting}

\noindent As these two aggregates are independent of each other, loop fission can separate their loops as follows:

\begin{lstlisting}
for(int i=0; i<n; i++) {
  res1 = res1 + vec[i];
}
for(int i=0; i<n; i++) {
  res2 = res2 * vec[i];
}
\end{lstlisting}

\noindent Figure~\ref{fig:fsmooth_opt_fission} shows how loop fission turns an iteration creating a pair of elements into a pair of two iterations constructing independently the elements of that pair.
After performing this optimisation, if we are interested only in a particular element of the result tuple, other loops corresponding to irrelevant elements are removed by partial evaluation.

Based on these rewrite rules, \system derives well-known matrix calculus rules, without requiring to add a rewrite rule in the level of matrices (i.e., \ladsl{}). 
In the next section, we suggest a strategy for the ordering of the presented rewrite rules which supports all the examples and benchmarks reported in this article.

\subsection{Transformation Strategy}
Figure~\ref{fig:trans_pipeline} shows the transformation pipeline of \system.
The phases shown in the first two rows focus on the differentiated subexpression (specified
by the highlighted gray area), starting by 
expanding the \derivk{} macro, followed by applying source-to-source AD rules on 
the expanded expression, and then applying optimisations on the differentiated subexpression.
In the first optimisation, the body of functions are inlined in order to open opportunities for loop fusion (Figure~\ref{fig:fsmooth_opt_fusion}) and further partial evaluation (Figure~\ref{fig:fsmooth_opt_lambda}).
Then, we apply loop fission (Figure~\ref{fig:fsmooth_opt_fission}) and tuple partial evaluation (Figure~\ref{fig:fsmooth_opt_tuple}).
Afterwards, we perform various simplification rules consisting of algebraic identities available in the 
scalar arithmetic level (e.g., associativity, commutativity, and distributivity as shown in Figure~\ref{fig:fsmooth_opt_ring}),
as well as $\lambda$-calculus and its extended constructs rewrite rules (as shown in Figures~\ref{fig:fsmooth_opt_lambda},~\ref{fig:fsmooth_opt_tuple}, 
and~\ref{fig:fsmooth_opt_if}).
As last optimisation for the differentiated subexpression, we perform induction-based loop normalisation rules of \code{ifold} (Figure~\ref{fig:fsmooth_opt_iteration}) in order to turn loops that do not change the state in to a single statement.

Afterwards, we focus on the whole program and again apply loop fusion and partial evaluation
in order to fuse the differentiated subexpression with the rest of the expressions.
Then, we again apply the same sequence of optimisations as before, followed by turning 
the program into ANF and applying common-subexpression elimination (CSE).
The ANF makes it easier to perform loop-invariant code motion (the last rule of 
Figure~\ref{fig:fsmooth_opt_lambda}) and dead-code elimination (DCE, which is the third rule of the same figure).
Finally, after performing simplification rules, we rely on the restrictions
imposed by the \fsmooth{} language, such as only allowing a limited form of recursion 
(using \vifoldk{} and \vbuildk{}, thus benefiting from their associated optimisations),
and produce efficient low-level C code using DPS~\cite{dps_fhpc}.

This transformation pipeline works well in practice as we will see in Section~\ref{sec:exp}. 
However, we do not have any guarantee that for all programs these sets of rewrite 
rules with the given sequence are sufficient to make forward-mode AD as efficient as reverse-mode AD.
Also, we leave the use of alternative transformation pipelines and using search strategies for automated rewriting (e.g., using Monte-Carlo tree search~\cite{de2009bandit}) as 
future work.

The next example shows how \system can derive a well-known algebraic identity for the derivative of matrices using the presented pipeline.

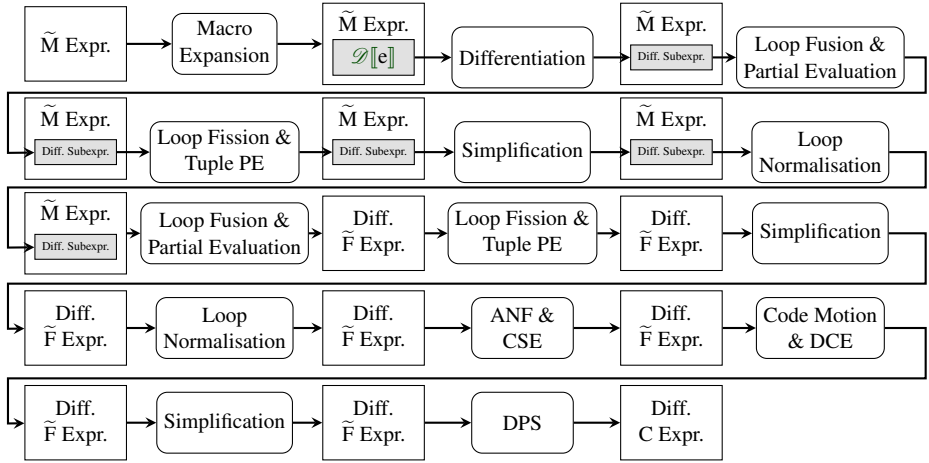
\begin{figure}[t]
\tikzstyle{startstop} = [rectangle, rounded corners, minimum width=1.5cm, minimum height=1cm,text centered, draw=black, fill=white!30]

\tikzstyle{arrow} = [thick,->,>=stealth]
\begin{tikzpicture}[node distance=2.2cm]

\tikzstyle{mybox} = [rectangle, minimum width=1.5cm, minimum height=1.1cm,text centered, draw=black, fill=white!30, scale=0.9]
\tikzstyle{mybox2} = [rectangle, minimum width=1.2cm, minimum height=0.3cm,text centered, draw=black, fill=mygray, scale=0.9]
\tikzstyle{opt} = [rectangle, rounded corners, minimum width=1.5cm, minimum height=0.9cm,text centered, draw=black, fill=white!30, scale=0.9]

\node (input) [mybox, align=center] {\footnotesize{\ladsl{} Expr.}};
\node (macroexp) [opt, right of=input, align=center] {Macro\\ Expansion};
\node (sifaq) [mybox, right of=macroexp, align=center] {\ladsl{} Expr.\\\vspace{0.3cm}};
\node (sifaq2) [mybox2, right of=macroexp, align=center,yshift=-0.2cm] {\difftrans{\expr}};
\node (ccode) [opt, right of=sifaq, align=center,yshift=-0.2cm] {Differentiation};
\node (df1) [mybox, right of=ccode, align=center,yshift=0.2cm] {\ladsl{} Expr.\\\vspace{0.3cm}};
\node (df2) [mybox2, right of=ccode, align=center] {\tiny{Diff. Subexpr.}};
\node (opt3) [opt, right of=df1, align=center,yshift=-0.2cm] {Loop Fusion \&\\Partial Evaluation};
\node (df3) [mybox, below of=input, align=center,yshift=0.8cm] {\ladsl{} Expr.\\\vspace{0.3cm}};
\node (df4) [mybox2, below of=input, align=center,yshift=0.6cm] {\tiny{Diff. Subexpr.}};
\node (opt3_1) [opt, right of=df3, align=center,yshift=-0.2cm] {Loop Fission \&\\Tuple PE};
\node (df3_1) [mybox, right of=opt3_1, align=center,yshift=0.2cm] {\ladsl{} Expr.\\\vspace{0.3cm}};
\node (df4_1) [mybox2, right of=opt3_1, align=center,yshift=0cm] {\tiny{Diff. Subexpr.}};
\node (opt3_2) [opt, right of=df3_1, align=center,yshift=-0.2cm] {Simplification};
\node (df3_2) [mybox, right of=opt3_2, align=center,yshift=0.2cm] {\ladsl{} Expr.\\\vspace{0.3cm}};
\node (df4_2) [mybox2, right of=opt3_2, align=center,yshift=0cm] {\tiny{Diff. Subexpr.}};
\node (opt3_3) [opt, right of=df3_2, align=center,yshift=-0.2cm] {Loop \\ Normalisation};
\node (df3_n) [mybox, below of=df4, align=center,yshift=1cm] {\ladsl{} Expr.\\\vspace{0.3cm}};
\node (df4_n) [mybox2, below of=df4, align=center,yshift=0.8cm] {\tiny{Diff. Subexpr.}};
\node (opt4) [opt, right of=df3_n, align=center] {Loop Fusion \& \\Partial Evaluation};
\node (df5) [mybox, right of=opt4, align=center] {Diff.\\\fsmooth{} Expr.};
\node (opt5) [opt, right of=df5, align=center] {Loop Fission \& \\Tuple PE};
\node (df6) [mybox, right of=opt5, align=center] {Diff.\\\fsmooth{} Expr.};
\node (opt6) [opt, right of=df6, align=center] {Simplification};
\node (df7) [mybox, below of=df3_n, align=center,yshift=0.8cm] {Diff.\\\fsmooth{} Expr.};
\node (opt7) [opt, right of=df7, align=center] {Loop\\ Normalisation};
\node (df8) [mybox, right of=opt7, align=center] {Diff.\\\fsmooth{} Expr.};
\node (opt8) [opt, right of=df8, align=center] {ANF \& \\  CSE};
\node (df9) [mybox, right of=opt8, align=center] {Diff.\\\fsmooth{} Expr.};
\node (opt9) [opt, right of=df9, align=center] {Code Motion\\\& DCE};
\node (df10) [mybox, below of=df7, align=center,yshift=0.8cm] {Diff.\\\fsmooth{} Expr.};
\node (opt10) [opt, right of=df10, align=center] {Simplification};
\node (df11) [mybox, right of=opt10, align=center] {Diff.\\\fsmooth{} Expr.};
\node (opt11) [opt, right of=df11, align=center] {DPS};
\node (df12) [mybox, right of=opt11, align=center] {Diff.\\C Expr.};

\draw [arrow] (input) -- node[align=center, anchor=center] {} (macroexp);
\draw [arrow] (macroexp) -- node[align=center, anchor=east] {} (sifaq);
\draw [arrow] (sifaq2) -- node[align=center, anchor=east] {} (ccode);
\draw [arrow] (ccode) -- node[align=center, anchor=east] {} (df2);
\draw [arrow] (df2) -- node[align=center, anchor=east] {} (opt3);
\draw [arrow] (opt3) -- ++(1.4,0) -- ++(0,-0.47) -- ++(-12.2,0) -- ++(0,-0.8) -- node[align=center, anchor=east] {} (df4);
\draw [arrow] (df4) -- node[align=center, anchor=east] {} (opt3_1);
\draw [arrow] (opt3_1) -- node[align=center, anchor=east] {} (df4_1);
\draw [arrow] (df4_1) -- node[align=center, anchor=east] {} (opt3_2);
\draw [arrow] (opt3_2) -- node[align=center, anchor=east] {} (df4_2);
\draw [arrow] (df4_2) -- node[align=center, anchor=east] {} (opt3_3);
\draw [arrow] (opt3_3) -- ++(1.4,0) -- ++(0,-0.47) -- ++(-12.2,0) -- ++(0,-0.8) -- node[align=center, anchor=east] {} (df4_n);
\draw [arrow] (df3_n) -- node[align=center, anchor=east] {} (opt4);
\draw [arrow] (opt4) -- node[align=center, anchor=east] {} (df5);
\draw [arrow] (df5) -- node[align=center, anchor=east] {} (opt5);
\draw [arrow] (opt5) -- node[align=center, anchor=east] {} (df6);
\draw [arrow] (df6) -- node[align=center, anchor=east] {} (opt6);
\draw [arrow] (opt6) -- ++(1.4,0) -- ++(0,-0.67) -- ++(-12.2,0) -- ++(0,-0.6) -- node[align=center, anchor=east] {} (df7);
\draw [arrow] (df7) -- node[align=center, anchor=east] {} (opt7);
\draw [arrow] (opt7) -- node[align=center, anchor=east] {} (df8);
\draw [arrow] (df8) -- node[align=center, anchor=east] {} (opt8);
\draw [arrow] (opt8) -- node[align=center, anchor=east] {} (df9);
\draw [arrow] (df9) -- node[align=center, anchor=east] {} (opt9);
\draw [arrow] (opt9) -- ++(1.4,0) -- ++(0,-0.67) -- ++(-12.2,0) -- ++(0,-0.6) -- node[align=center, anchor=east] {} (df10);
\draw [arrow] (df10) -- node[align=center, anchor=east] {} (opt10);
\draw [arrow] (opt10) -- node[align=center, anchor=east] {} (df11);
\draw [arrow] (df11) -- node[align=center, anchor=east] {} (opt11);
\draw [arrow] (opt11) -- node[align=center, anchor=east] {} (df12);



\end{tikzpicture}
\caption{Transformation pipeline of \system.}
\label{fig:trans_pipeline}
\end{figure}

\noindent \textbf{Example 7.} Based on matrix calculus derivative rules, it is known that $\frac{\partial\big(v_1\cdot v_2\big)}{\partial v_1}=v_2$, where $\cdot$ is the vector dot product operator. 
We would like to show how \system{} can deduce the same algebraic identity. 
In other words, we show that the following \fsmooth{} program:

\codespace{}
\begin{fscode}
vectorMap (\derivk{} (vectorDot v1 v2) v1) \vsndk{})
\end{fscode}

\noindent is the equivalent to v2.

\noindent \textit{Macro Expansion.}
After expanding the \derivk{} macro, \system produces the following program:

\codespace{}
\begin{fscode}
vectorMap (\\
\tabt{}\vbuildk{}\,(\vlengthk{}\,v1) (\vabs{i}{}\\
\tabt{}\tabt{}\fcolorbox{black}{mygray}{\difftrans{\vabs{v1 v2}{}vectorDot v1 v2}} \\
\tabt{}\tabt{}\tabt{}(vectorZip v1 (vectorHot (\vlengthk{}\,v1) i)) \\
\tabt{}\tabt{}\tabt{}(vectorZip v2 (vectorZeros (\vlengthk{}\,v2))))\\
) \vsndk{}
\end{fscode}

\noindent \textit{Differentiation.} We first focus on the highlighted expression. After applying AD transformation rules (cf. Figure~\ref{fig:diff_trans}), 
the following expression is derived:

\begin{fscodegray}
\vabs{\diffvarprefix{v1} \diffvarprefix{v2}}{}\diffvarprefix{vectorDot} \diffvarprefix{v1} \diffvarprefix{v2}
\end{fscodegray}

\noindent \textit{Loop Fusion and Partial Evaluation.} Inlining the definition of \diffvarprefix{vectorDot} (which is derived by applying the AD transformation rules 
over the \lafsharp{} definitions given in Figure~\ref{fig:ladsl_ops}) results in:

\begin{fscodegray}
\vabs{\diffvarprefix{v1} \diffvarprefix{v2}}{}\\
\tabt \lett{} v = \vbuild{(\vfstk{} \adpair{\vlength{\diffvarprefix{v1}}}{0})}{} (\vabs{i}{}\\
\tabt\tabt \lett{} idx = \adpair{i}{0}\\
\tabt\tabt \lett{} x1 = \vget{\diffvarprefix{v1}}{\vfstk{} idx}\\
\tabt\tabt \lett{} x2 = \vget{\diffvarprefix{v2}}{\vfstk{} idx}\\
\tabt\tabt \lett{} x3 = \vfstk{} x1\\
\tabt\tabt \lett{} x4 = \vsndk{} x1\\
\tabt\tabt \lett{} x5 = \vfstk{} x2\\
\tabt\tabt \lett{} x6 = \vsndk{} x2\\
\tabt\tabt \adpair{x3*x5}{x4*x5+x3*x6}\\
\tabt)\\
\tabt \lett{} z = \adpair{0}{0}\\
\tabt \lett{} n = \adpair{\vlength{\diffvarprefix{v1}}}{0}\\
\tabt \viteratek{} (\vabs{s j}{} \\
\tabt\tabt \lett{} idx = \adpair{j}{0}\\
\tabt\tabt \lett{} x1 = \vget{v}{\vfstk{} idx}\\
\tabt\tabt \lett{} x2 = \vfstk{} s\\
\tabt\tabt \lett{} x3 = \vsndk{} s\\
\tabt\tabt \lett{} x4 = \vfstk{} x1\\
\tabt\tabt \lett{} x5 = \vsndk{} x1\\
\tabt\tabt(x2+x4, x3+x5)
\\
\tabt)
z (\vfstk{} n)
\end{fscodegray}

\noindent Applying the fusion and partial evaluation rules
(cf. Figure~\ref{fig:fsmooth_opts}) results in:

\begin{fscodegray}
\vabs{\diffvarprefix{v1} \diffvarprefix{v2}}{}\\
\tabt \viteratek{} (\vabs{s j}{} \\
\tabt\tabt(
(\vfstk{} s) \vconst{+} (\vfstk{} \vget{\diffvarprefix{v1}}{j}) \vconst{*} (\vfstk{} \vget{\diffvarprefix{v2}}{j})
,\\
\tabt\tabt\tabt
(\vsndk{} s) \vconst{+} 
(\vsndk{} \vget{\diffvarprefix{v1}}{j}) \vconst{*} (\vfstk{} \vget{\diffvarprefix{v2}}{j}) \vconst{+}
(\vfstk{} \vget{\diffvarprefix{v1}}{j}) \vconst{*} (\vsndk{} \vget{\diffvarprefix{v2}}{j})
)
\\
\tabt)
\adpair{0}{0} (\vlength{\diffvarprefix{v1}})
\end{fscodegray}

\noindent Injecting the transformed expression into the initial program results in:

\codespace{}
\begin{fscode}
vectorMap (\\
\tabt{}\vbuildk{}\,(\vlengthk{}\,v1) (\vabs{i}{}\\
\tabt{}\tabt{}\fcolorbox{black}{mygray}{\begin{tabular}{l}(\vabs{\diffvarprefix{v1} \diffvarprefix{v2}}{}\\
\tabt \viteratek{} (\vabs{s j}{} \\
\tabt\tabt(
(\vfstk{} s) \vconst{+} (\vfstk{} \vget{\diffvarprefix{v1}}{j}) \vconst{*} (\vfstk{} \vget{\diffvarprefix{v2}}{j})
,\\
\tabt\tabt\tabt
(\vsndk{} s) \vconst{+} 
(\vsndk{} \vget{\diffvarprefix{v1}}{j}) \vconst{*} (\vfstk{} \vget{\diffvarprefix{v2}}{j}) \vconst{+}
(\vfstk{} \vget{\diffvarprefix{v1}}{j}) \vconst{*} (\vsndk{} \vget{\diffvarprefix{v2}}{j})
)
\\
\tabt)
\adpair{0}{0} (\vlength{\diffvarprefix{v1}}))\end{tabular}} \\
\tabt{}\tabt{}\tabt{}(vectorZip v1 (vectorHot (\vlengthk{}\,v1) i)) \\
\tabt{}\tabt{}\tabt{}(vectorZip v2 (vectorZeros (\vlengthk{}\,v2))))\\
) \vsndk{}
\end{fscode}

\codespace{}
\noindent Inlining the definition of vectorMap, vectorZip, vectorHot, and vectorZeros, and applying the loop fusion, $\beta$-reduction, and partial evaluation rules produces the following expression:

\begin{fscode}
\vbuildk{}\,(\vlengthk{}\,v1) (\vabs{i}{}\\
\tabt \lett{} s = \viteratek{} (\vabs{s j}{} \\
\tabt\tabt\tabt\tabt\tabt
\lett{} x1 = \adpair{\vget{v1}{j}}{\iif{} (i=j) \then{} 1 \elsee{} 0}\\
\tabt\tabt\tabt\tabt\tabt
\lett{} x2 = \adpair{\vget{v2}{j}}{0}\\
\tabt\tabt\tabt\tabt\tabt
\lett{} x3 = \vfstk{} x1\\
\tabt\tabt\tabt\tabt\tabt
\lett{} x4 = \vfstk{} x2\\
\tabt\tabt\tabt\tabt\tabt
(
(\vfstk{} s) \vconst{+} x3 \vconst{*} x4
,\\
\tabt\tabt\tabt\tabt\tabt\tabt
(\vsndk{} s) \vconst{+} 
\vsndk{} x1 \vconst{*} x4 \vconst{+}
x3 \vconst{*} \vsndk{} x2
)
\\
\tabt\tabt\tabt\tabt{})
\adpair{0}{0} (\vlength{v1})\\
\tabt \vsndk{} s) \\
\end{fscode}

\noindent After applying tuple partial evaluation (cf. Figure~\ref{fig:fsmooth_opt_tuple}) the following program is generated:

\begin{fscode}
\vbuildk{}\,(\vlengthk{}\,v1) (\vabs{i}{}\\
\tabt{}\vsndk{} (\viteratek{} (\vabs{s j}{} \\
\tabt\tabt\tabt\tabt\tabt
(
(\vfstk{} s) \vconst{+} \vget{v1}{j} \vconst{*} \vget{v2}{j}
,\\
\tabt\tabt\tabt\tabt\tabt\tabt
(\vsndk{} s) \vconst{+} 
\vget{v1}{j} \vconst{*} 0 \vconst{+}
(\iif{} (i=j) \then{} 1 \elsee{} 0) \vconst{*} \vget{v2}{j}
)
\\
\tabt\tabt\tabt\tabt{})
\adpair{0}{0} (\vlength{v1})) \\
\end{fscode}

\noindent \textit{Loop Fission and Tuple PE.} Now we apply loop fission (cf. Figure~\ref{fig:fsmooth_opt_fission}):

\begin{fscode}
\vbuildk{}\,(\vlengthk{}\,v1) (\vabs{i}{}\\
\tabt{}\vsndk{} (\\
\tabt \tabt \viteratek{} (\vabs{s j}{} s \vconst{+} \vget{v1}{j} \vconst{*} \vget{v2}{j}) 0 (\vlength{v1}),\\
\tabt\tabt  \viteratek{} (\vabs{s j}{} s \vconst{+} 
\vget{v1}{j} \vconst{*} 0 \vconst{+}
(\iif{} (i=j) \then{} 1 \elsee{} 0) \vconst{*} \vget{v2}{j}) 0 (\vlength{v1})
\\
\tabt)\\
)
\end{fscode}

\noindent Note that applying the loop fission rule, does not necessarily improve the performance; 
it is only after performing tuple partial evaluation rules that the iteration responsible for the original computation is removed and the performance is improved. 
Thus, the strategy for applying rewrite rules can become tricky.
For this particular rewrite rule, we only apply it, when subsequently we can use partial evaluation to further simplify the program. 
To do so, we define a compound rewrite rule that either applies these rules together, or does not do anything.
This has a similar effect to the fold-fusion law, which can be found in the FP literature~\cite{fissiongibbons,hutton_1999}. 
After applying the tuple partial evaluation rule, the following program is derived:

\begin{fscode}
\vbuildk{}\,(\vlengthk{}\,v1) (\vabs{i}{}\\
\tabt \viteratek{} (\vabs{s j}{} s \vconst{+} 
\vget{v1}{j} \vconst{*} 0 \vconst{+}
(\iif{} (i=j) \then{} 1 \elsee{} 0) \vconst{*} \vget{v2}{j}) 0 (\vlength{v1})
\\
)
\end{fscode}

\noindent \textit{Simplification.} Applying conditional rules (cf. Figure~\ref{fig:fsmooth_opt_if}) and ring-based arithmetic simplification rules  (cf. Figure~\ref{fig:fsmooth_opt_ring}) result in:

\codespace{}
\begin{fscode}
\vbuildk{}\,(\vlengthk{}\,v1) (\vabs{i}{}\\
\tabt{}(\vifoldk{}\,(\vabs{s j}{}\\
\tabt{}\tabt{}\iif (i = j) \then\\
\tabt{}\tabt{}\tabt{}(s + \vget{v2}{j})\\
\tabt{}\tabt{}\elsee\\
\tabt{}\tabt{}\tabt{}s) 0 (\vlengthk{}\,v1)))
\end{fscode}

\noindent \textit{Loop Normalisation.} By using the optimisation that turns single access iterations into a single statement (cf. Figure~\ref{fig:fsmooth_opt_iteration}), \system produces the following program:

\begin{fscode}
\vbuildk{}\,(\vlengthk{}\,v1) (\vabs{i}{}
\vget{v2}{i})
\end{fscode}

\codespace{}
\noindent This program is equivalent to $v2$ if the size of the two input vectors are the same (i.e., \vlengthk{} $v1$ = \vlengthk{} $v2$). 
Otherwise, the input program is ill-formed.

\demo

Based on the same set of transformation rules, \system derives other matrix-calculus identities for the gradient
of matrices such as $\frac{\partial tr\big(M\big)}{\partial M}=I$, which states that the derivative of the
trace of a matrix with respect to that matrix, is an identity matrix. More generally,
\system can automatically discover the following algebraic identity if $A$ is independent of $M$:  
$\frac{\partial tr\big(MA\big)}{\partial M}=A^T$.

Now we return to the example shown in the beginning of this paper.

\codespace{}
\noindent \textbf{Example 1 (Continued).} If we have a matrix $M$ and two vectors $u$ and $v$ (which are represented as row matrices and are independent of $M$), using matrix calculus one can prove that $\frac{\partial \big(uMv^T\big)}{\partial M}=u^Tv$. 
First, we start by a partially inlined representation of this program in \fsmooth:

\begin{fscode}
\vabs{u M v}{}\\
\tabt{} vectorMap (\derivk{} (\\
\tabt{}\tabt{}\tabt{}\lett{} m =\\
\tabt{}\tabt{}\tabt \tabt matrixMult \\
\tabt{}\tabt{}\tabt \tabt \tabt (\vbuildk{}\,1 (\vabs{i}{} u)) \\
\tabt{}\tabt{}\tabt \tabt \tabt (matrixMult M \\
\tabt{}\tabt{}\tabt \tabt \tabt \tabt (matrixTranspose (\vbuildk{}\,1 (\vabs{i}{} v))))\\
\tabt{}\tabt ) M) \vsndk{}
\end{fscode}

This expression is expanded as follows:

\codespace{}
\begin{fscode}
\lett{} f = \vabs{u M v}{}\\
\tabt{}\lett{} m =\\
\tabt \tabt matrixMult \\
\tabt \tabt \tabt (\vbuildk{}\,1 (\vabs{i}{} u)) \\
\tabt \tabt \tabt (matrixMult M \\
\tabt \tabt \tabt \tabt (matrixTranspose (\vbuildk{}\,1 (\vabs{i}{} v))))\\
\tabt{}\vget{\vget{m}{0}}{0}\\
\vabs{u M v}{}\\
\tabt{}(\vbuildk{}\,(\vlengthk{}\,M) (\vabs{i}{}\\
\tabt{}\tabt{}(\vbuildk{}\,(\vlengthk{}\,\vget{M}{0}) (\vabs{j}{}\\
\tabt{}\tabt{}\tabt{}(\vsndk{}\,(\difftrans{f}\\
\tabt{}\tabt{}\tabt{}\tabt{}(vectorZip u (vectorZeros (\vlengthk{}\,u)))\\
\tabt{}\tabt{}\tabt{}\tabt{}(matrixZip M (matrixHot (\vlengthk{}\,M) (\vlengthk{}\,\vget{M}{0}) i j))\\
\tabt{}\tabt{}\tabt{}\tabt{}(vectorZip v (vectorZeros (\vlengthk{}\,v)))))))))
\end{fscode}

\codespace{}
\noindent Note that the function f is returning the only scalar element of the 1-by-1 matrix $uMv^T$.
After performing loop fusion, loop fission and partial evaluation the following program is derived:

\codespace{}
\begin{fscode}
\vabs{u M v}{}\\
\tabt{}\vbuildk{}\,(\vlengthk{}\,M) (\vabs{i}{}\\
\tabt{}\tabt{}\vbuildk{}\,(\vlengthk{}\,\vget{M}{0}) (\vabs{j}{}\\
\tabt{}\tabt{}\tabt{}\vget{u}{i} \vconst{*} \vget{v}{j}))
\end{fscode}

\codespace{}
\noindent This program is equivalent to $u^Tv$ if the input program is well formed, i.e., the number of rows and columns of $M$ are the same as the length of $u$ and $v$, respectively.

\demo

\subsection{Code Generation}
After applying the optimisations mentioned in the previous section,
one can further improve the efficiency by generating programs in a low-level language with manual memory management. 
This way, the overhead of garbage collection can be removed. 
Furthermore, by using stack-discipline memory management techniques such as Destination-Passing Style (DPS)~\cite{dps_fhpc}, one can benefit from efficient bump memory allocation instead of using the expensive \texttt{malloc} and \texttt{free} calls.

\codespace{}
\noindent 
\textbf{Example 1 (Continued).} The generated C code for the optimised differentiated program is as follows:

\begin{lstlisting}
matrix uMv_d(storage s, vector u, matrix M, vector v) {
  matrix res = (matrix)s;
  for(int r = 0; r < M->rows; r++) {
    for(int c = 0; c < M->cols; c++) {
      res->elems[r][c] = u->elems[r] * v->elems[c];
    }
  }
  return res;
}
\end{lstlisting}

\noindent The parameter \texttt{s} is the storage area allocated for storing the result matrix.

\demo

\noindent Up to now, we have only seen the cases where only the derivative part of the program was of interest. 
If we are interested in the original part of the program as well (e.g., the intermediate vectors cannot be fused), we need to store both the original and derivative parts.
In such cases, the differentiated vectors, which are represented as arrays of tuples, can be transformed into a more efficient data layout.
The well-known array of structs (AoS) to struct of arrays (SoA) transformation represents differentiated vectors as a tuple of two numeric arrays. Further partial evaluation can remove the unnecessary decoupled numeric arrays.

  \section{Semantics}
\label{sec_sem}

\subsection{Operational Semantics}

We give a standard call-by-value operational semantics to the language from Figure \ref{fig:fsmooth_core_syntax}. For the purpose of this section, we turn unary functions like $\vconst{sin}:\typedouble{}\funarrow{}\typedouble{}$ into typing rules, e.g.
\[(\vconst{sin}) \infer{ \Gamma  \vdash \expr{}: \typedouble}{\Gamma \vdash \vconst{sin}(\expr{}): \typedouble}\]

Similarly, for a binary function like $\vconst{*}:\typedouble{}\funarrow{}\typedouble{}\funarrow{}\typedouble{}$, the associated typing rule is:
\[(\vconst{*}) \infer{ \Gamma  \vdash \exprind{1}: \typedouble \tab \tab \Gamma  \vdash \exprind{2}: \typedouble }{\Gamma \vdash \exprind{1}~\vconst{*}~\exprind{2}: \typedouble}\] 

More generally, we denote by $\op_1$ any unary operator and by $\op_2$ any binary one. For functions with multiple arguments, the evaluation order is from left to right. We introduce the term $\bot$ at every type for divergence.

Values are given as follows:
\[\val:=  \text{r} ~|~ \text{i} ~|~ \vtrue{} ~|~ \vfalse{} ~|~\vpair{\valind{0}}{\valind{1}} ~|~  \vabs{\vmore{\text{x}}}{\expr}  ~|~ \varray{\val} \]

To deal with constants like \vconst{tan} which are not defined everywhere, let $\Dom$ be the domain function which assigns to a constant function its standard domain of definition. For instance, $\Dom(\vconst{tan}):=\{\val\in\RR~|~ \vconst{cos}(\val)\neq 0\}$. No construct is defined on $\bot$, hence if any argument reduces to $\bot$, the construct will also reduce to $\bot$. 
Constants of ground type will be underlined in the operational semantics for clarity. For example, the rule \[\inferrule{\val\in \Dom(\op_1)
  }{
   \op_1(\underline{\val}) \to \underline{\op_1(\val)}
  }\]
when $\op_1=\vconst{tan}$ and $\val=7$, leads to 
\[\inferrule{7\in \Dom(\vconst{tan})
  }{
   \vconst{tan}(\underline{7}) \to \underline{b}
  }\] 
  where $b$ is the real $tan(7)$. 

The reductions rules for the call-by-value small-step operational semantics are given in Figure \ref{fig:CBV}. 

\begin{figure}[tb]
\fbox{
  \parbox{0.96\textwidth}{
\input{reductions}
\caption{Small-step operational semantics}\label{fig:CBV}}}
\end{figure}

Given that the operational semantics is standard, we immediately get the following:

\begin{lemma}[Subject reduction]
\label{lem:SR1}
	If $\Gamma\vdash e:A$ and $e\to e'$ then $\Gamma\vdash e':A$.
\end{lemma}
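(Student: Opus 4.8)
The plan is to argue by induction on the derivation of $e \to e'$, i.e. by case analysis on the last reduction rule applied, with the induction hypothesis invoked only for the congruence (structural) rules. Every congruence rule has the shape ``if $e_0 \to e_0'$ then $E[e_0] \to E[e_0']$'' for a one-hole context $E$ (the two application rules, the $\op_1$ and $\op_2$ argument rules, and the rule evaluating the bound expression of a \lett{}), and they are all discharged uniformly: invert the typing derivation of $\Gamma \vdash E[e_0] : A$ to read off a typing $\Gamma \vdash e_0 : B$ of the position being reduced, apply the induction hypothesis to obtain $\Gamma \vdash e_0' : B$, and re-apply the very same typing rule to conclude $\Gamma \vdash E[e_0'] : A$. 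Since the typing rules are syntax-directed, each inversion is unambiguous.

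\textbf{Substitution lemma.} The only non-routine ingredient needed for the base cases is a substitution lemma: if $\Gamma, \text{x}:B \vdash e : A$ and $\Gamma \vdash \val : B$ then $\Gamma \vdash e[\val/\text{x}] : A$, together with its simultaneous-substitution generalisation to the several variables bound by an uncurried $\lambda$. I would establish this first, by a separate induction on the typing derivation of $e$: the (T-Var) case splits on whether the variable is \text{x} (use the hypothesis $\Gamma \vdash \val : B$) or another variable (left unchanged); the (T-Abs) and (T-Let) cases require the usual weakening/exchange on contexts before invoking the inductive hypothesis under the binder; and all remaining cases go through by congruence. Given this lemma, the $\beta$-rule $(\vabs{\vmore{\text{x}}}{e})\,\val \to e[\val/\text{x}]$ follows by inverting (T-App) and then (T-Abs), and the \lett{}-rule $\lett{}\ x = \val\ \inn{}\ e \to e[\val/x]$ follows by inverting (T-Let).

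\textbf{Remaining base cases.} The projection rules $\vfstk{}\,\vpair{\valind{0}}{\valind{1}} \to \valind{0}$ and $\vsndk{}\,\vpair{\valind{0}}{\valind{1}} \to \valind{1}$ follow by inverting the pair-constant typing rule to recover the component types; the two conditional rules follow because (T-If) already assigns both branches the common type $A$; the operator rules $\op_1(\underline{\val}) \to \underline{\op_1(\val)}$ and $\op_2(\underline{\valind{0}},\underline{\valind{1}}) \to \underline{\op_2(\valind{0},\valind{1})}$ are immediate from the operators' ground-type signatures (argument values at the input type, the evaluated result at the output type), and the ``$\to \bot$'' cases are immediate since $\bot$ inhabits every type; the \vcget{} and \vclength{} rules follow by inverting their constant typing rules, which force all array entries to share the element type; the \vbuildk{} rule produces an array each of whose entries is $e$ with the bound index variable replaced by a numeric literal of type \typeindex{}, so its type is preserved by the array typing rule plus $n$ uses of the substitution lemma; and the two \viteratek{} rules are checked directly against the function type recorded for the step function, the recursive case $\viteratek{}(\val)(\valind{0})(\text{i+1}) \to \val(\viteratek{}(\val)(\valind{0})(\text{i}))(\text{i})$ being typed by two applications of the application rule.

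\textbf{Main obstacle.} I expect the difficulty to be bookkeeping rather than conceptual: getting the statement and the context manipulations of the substitution lemma right in the presence of the uncurried (T-Abs)/(T-App) rules, and in particular reading the ``$e[\val/\text{x}]$'' in the $\beta$-rule as the simultaneous substitution whose arity matches what (T-App) and (T-Abs) enforce. Once the substitution lemma is in place, every remaining case is the standard invert-then-reassemble pattern, so I would not belabour the details.
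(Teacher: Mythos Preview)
Your proposal is correct and follows the standard proof of type preservation for a call-by-value simply-typed $\lambda$-calculus with constants: induction on the reduction derivation, a substitution lemma for the $\beta$- and \lett{}-redexes, and inversion-then-reassembly for each remaining rule. The paper does not actually spell out a proof of this lemma; it merely remarks that ``given that the operational semantics is standard, we immediately get'' subject reduction, so your write-up supplies precisely the routine argument the paper takes for granted.
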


\begin{lemma}
	Every closed term of ground type $\vdash \expr{}:\typemat{} $ reduces to a value $\val{}$ or diverges (i.e. reduces to $\bot$).
\end{lemma}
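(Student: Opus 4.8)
The plan is to prove this by a reducibility (logical-relations) argument, which simultaneously handles the function-typed subterms permitted by the grammar and the divergence element $\bot$. I read Figure~\ref{fig:CBV} as making each function constant a value whose only reductions are the $\delta$-rules shown (for instance $\op_1(\underline{\val})\to\underline{\op_1(\val)}$, the \vcget{}/\vclength{} rules, and the \vbuildk{} and \viteratek{} rules), together with the evident left-to-right congruence rules licensed by the ``evaluation order is from left to right'' convention, and under the blanket convention that any construct one of whose arguments reduces to $\bot$ itself reduces to $\bot$. With these conventions reduction is deterministic, so I may speak of \emph{the} normal form of a closed term; using Lemma~\ref{lem:SR1} (subject reduction), the lemma then follows from two claims: (i) a closed well-typed term of ground type that is neither a value nor $\bot$ can take a step; and (ii) every closed well-typed term of ground type has a \emph{finite} reduction sequence.

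Claim (i) is a progress lemma, by induction on the typing derivation with the usual canonical-forms analysis: a closed value of type $\typearray{\typemat}$ is an array literal $\varray{\val}$; of a pair type is a pair $\vpair{\valind{0}}{\valind{1}}$; of a function type is a $\lambda$-abstraction (or a function constant, which in a well-typed term is always fully applied, since the grammar forbids partially-applied arrows); of type $\typedouble{}$ or $\typeindex{}$ an underlined literal; of type $\typebool{}$ one of \vtrue{}, \vfalse{}. At each elimination form the relevant $\delta$-rule or congruence rule fires; in particular the partial operators take a step even on out-of-domain arguments, namely the step to $\bot$. With subject reduction this reduces the lemma to claim (ii).

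For claim (ii), define by induction on the type $A$ a set $\mathcal R_A$ of closed terms of type $A$: if $A$ is a ground type $\typemat$ then $\expr\in\mathcal R_{\typemat}$ iff $\expr$ reduces to a value or to $\bot$; if $A=\typefunone{\vmore{\typet}}{\typemat}$ then $\expr\in\mathcal R_A$ iff $\expr$ reduces to $\bot$ or to a value $\vabs{\vmore x}{\exprind{0}}$ such that $(\vabs{\vmore x}{\exprind{0}})\,\vmore{\val}\in\mathcal R_{\typemat}$ for every $\vmore{\val}\in\mathcal R_{\vmore{\typet}}$. One checks the routine closure properties: $\mathcal R_A$ is closed under both forward and backward reduction (immediate from determinism, since membership constrains only the eventual normal form) and is absorbing for $\bot$. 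The fundamental theorem --- if $x_1{:}A_1,\ldots,x_k{:}A_k\vdash\expr:B$ and $\val_i\in\mathcal R_{A_i}$ for all $i$, then $\subst{\expr}{\sfor{\vmore x}{\vmore{\val}}}\in\mathcal R_B$ --- is then proved by induction on the typing derivation; taking $k=0$ and $B=\typemat$ yields claim (ii). The cases for variables, abstraction, application, \lett{}, \iif{}, pairing, projections, \vcget{}, \vclength{} and the scalar operators are the textbook reducibility cases, using the canonical forms of claim (i) to see which reduction fires and chaining backward closure along the reduction produced.

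The only genuinely non-routine part --- and where I expect the real care to be required --- is the fundamental theorem for \vbuildk{} and \viteratek{}, since these iterate a function argument a statically unknown but finite number of times. For $\viteratek{}\,\exprind{0}\,\exprind{1}\,\exprind{2}$: the induction hypothesis puts the substituted arguments in $\mathcal R$, so either one of them reduces to $\bot$ (and we are done) or $\exprind{0}$ reduces to a function value $\val$, $\exprind{1}$ to a value $\valind{0}$, and $\exprind{2}$, being of type $\typeindex{}$, to a numeral $\underline{\text n}$; one then does an \emph{inner} induction on $\text n$, with base case $\viteratek{}\,\val\,\valind{0}\,\underline{0}\to\valind{0}$ and step $\viteratek{}\,\val\,\valind{0}\,\underline{\text n{+}1}\to\val\,(\viteratek{}\,\val\,\valind{0}\,\underline{\text n})\,\underline{\text n}$, using the inner hypothesis to place the recursive call in $\mathcal R_{\typemat}$ and then reducibility of $\val$ as a function value to conclude. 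For $\vbuildk{}\,\exprind{0}\,\exprind{1}$, once $\exprind{0}$ reduces to a numeral $\underline{\text n}$ and $\exprind{1}$ to a function value, the term steps to the array literal whose $k$-th entry is the body of $\exprind{1}$ with its parameter replaced by $\underline{k}$; each entry lies in $\mathcal R_{\typemat}$ by the induction hypothesis on the body, and a short auxiliary induction on the length (invoking the array-literal congruence rules) shows that an array all of whose entries reduce to values itself reduces to a value, while any entry reducing to $\bot$ forces $\bot$. The remaining difficulty is pure bookkeeping: pinning down the multi-ary congruence and $\delta$-rules for \vbuildk{}, \viteratek{} and array literals precisely enough that the determinism relied on throughout genuinely holds.
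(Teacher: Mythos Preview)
The paper gives no proof of this lemma: it simply states both subject reduction and this progress-plus-termination result as immediate, prefacing them with ``Given that the operational semantics is standard, we immediately get the following.'' Your reducibility argument is correct and is exactly the standard way to establish such a result for a simply-typed calculus with higher-order function arguments (which \vbuildk{} and \viteratek{} have here): a bare induction on term size does not suffice because $\beta$-reduction can enlarge terms, so the logical-relations predicate $\mathcal R_A$ is the right tool. The decomposition into a progress lemma plus a fundamental theorem for $\mathcal R_A$, with the inner numeric induction for \viteratek{} and the element-wise argument for \vbuildk{}, is precisely what a textbook proof would do.

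The only loose ends are the ones you already flag --- the implicit left-to-right congruence rules for array literals and fully-applied constants --- together with one you do not mention: the paper's rule for \vcget{} fires only when $0\leq\text{i}<n$, so an out-of-bounds access is not covered by your ``partial operators step to $\bot$'' clause and would need the same treatment. None of this affects the soundness of your overall strategy.
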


\subsection{Correctness of optimisation rules} 
\label{sub:correctness_of_optimisation_rules}

We prove correct the optimisations of Figure \ref{fig:fsmooth_opts}.
As is usual, we denote by $\to^*$ the transitive reflexive closure of $\to$.

\begin{theorem}
	All the rules $\leadsto$ from figure \ref{fig:fsmooth_opts} are sound. 
	That is, if $\vdash e:A$ and $e\leadsto e'$, then $e\to^*v$ iff $e'\to^*v$.
\end{theorem}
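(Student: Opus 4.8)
The statement is a contextual one: $e\leadsto e'$ means that $e'$ arises from $e$ by rewriting \emph{some} subterm according to one of the rules of Figure~\ref{fig:fsmooth_opts}, so the conclusion must survive that subterm sitting under arbitrary binders and eliminators. I would therefore split the argument into (i) a once-and-for-all \emph{congruence} step that reduces the theorem to checking each rule in isolation, and (ii) a rule-by-rule verification.

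For step (i), I would introduce an observational (contextual) equivalence $\simeq$ on open, possibly higher-order \fsmooth{} terms of the same type: $e_1\simeq e_2:A$ iff for every context $C$ of ground type closing both terms, $C[e_1]\to^*v\iff C[e_2]\to^*v$ (same value, and both convergent or both not). Two facts are needed: (a) $\simeq$ is a congruence, i.e.\ replacing a subterm by a $\simeq$-equivalent one preserves $\simeq$; and (b) for closed $e$, $e\simeq e'$ already implies $e\to^*v\iff e'\to^*v$ (take $C=[\cdot]$, reading the value off through a projection at ground type when $A$ is compound or a function type). Fact (b) is immediate, and fact (a) is the standard ``contextual equivalence is a congruence'' result, which for a simply-typed call-by-value calculus like ours I would establish via a logical relation indexed by the type structure --- proving the compatibility/fundamental lemmas, using Lemma~\ref{lem:SR1} to keep types along reductions and the fact that every closed ground term reaches a value or $\bot$ --- or, equivalently, via a CIU/context lemma. (A non-step-indexed relation suffices, since \fsmooth{} has only bounded iteration and non-recursive \lett{}, so it is essentially terminating.) Granting (a) and (b), it is enough to prove $e_1\simeq e_2$ for every instance $e_1\leadsto e_2$ of a rule, since then $C[e_1]\simeq C[e_2]$ for the surrounding context and (b) closes the theorem.

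For step (ii), I would case-analyse the rules of Figure~\ref{fig:fsmooth_opts}. The $\lambda$-calculus and tuple-normalisation rules ($\beta$ for application and for \lett{}, \vfstk{}/\vsndk{} on a pair, let-reassociation, let-floating with the stated free-variable side conditions) are discharged by simulating the operational semantics of Figure~\ref{fig:CBV}: the two sides step in lock-step, or one performs a few extra administrative steps that change neither the eventual value nor whether $\bot$ is reached. The ring-structure rules reduce to the corresponding identities in $\RR$ (and on indices) after noting that both sides evaluate the relevant scalar subexpressions. The conditional rules follow from the two if-reduction rules. The fusion rules (\vcget{} of a \vbuildk{}, \vclength{} of a \vbuildk{}) follow from the \vbuildk{}/\vcget{}/\vclength{} reduction rules together with the numeric side conditions. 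The loop-normalisation and loop-fission rules for \viteratek{} are proved by induction on the statically known iteration count $n$, unfolding the two \viteratek{} reduction rules and, for the ``single access'' normalisation, using that the state update fires exactly once.

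The main obstacle, and the place where ``just simulate reductions'' does not by itself work, is the family of rules that \emph{delete or skip over} a subexpression: dead-code elimination for \lett{}, $\expr\,\vconst{*}\,0\leadsto 0$, collapsing a conditional whose two branches coincide, the rule sending a state-preserving \viteratek{} to its seed, and --- most delicately --- the loop-invariant code-motion consequence of the \lett{}-floating rule. Under a call-by-value strategy the discarded or un-entered subexpression can be exactly where a $\bot$ would otherwise have been produced, so these rules are observationally sound only when the eliminated subterm is \emph{total} on all closing instantiations of its free variables. I would handle this either by carrying a totality side hypothesis on the discarded subexpressions --- automatic on the essentially-terminating fragment of \fsmooth{}, since divergence arises only from partially-defined scalar primitives and out-of-bounds \vcget{} --- or by a small totality analysis guaranteeing it for the subexpressions actually produced along the pipeline, folded into the equivalence of step (i). Once the congruence machinery and this totality proviso are in place, every remaining case is routine.
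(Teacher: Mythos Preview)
Your proposal is more elaborate than the paper's and differs in its framing. The paper reads the theorem non-contextually: $e\leadsto e'$ is one top-level instance of a rule schema applied to a closed term, and the proof is a direct case analysis on the rule, simulating the operational semantics on each side (exactly your step (ii), with essentially the same induction-on-$n$ arguments for the \viteratek{} rules). There is no congruence or logical-relations machinery; your step (i) is additional scaffolding that the theorem as stated does not require, though it is what one would need to justify applying the rules inside larger terms, which the paper never formally addresses.

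Your observation about the termination-sensitive rules is well taken and is in fact a point the paper elides. The paper's proof simply writes ``$z,z_0,z_1$ will reduce to values first \ldots\ so without loss of generality we assume that they are values'', which silently assumes totality of the relevant subterms. As you note, under call-by-value the dead-\lett{} rule, $\expr\,\vconst{*}\,0\leadsto 0$, the branch-collapsing conditional rule, and the constant-state \viteratek{} rule can change behaviour when a discarded subexpression reaches $\bot$ via a partial primitive or an out-of-bounds \vcget{}. Your proposal to attach a totality side condition (or restrict to the $\bot$-free fragment) is the honest fix; the paper does not make this explicit. One small caveat on your step (i): for function-typed $A$, contextual equivalence does not give back ``reduce to the \emph{same} closed lambda value'', so your fact (b) does not literally recover the theorem's conclusion at arrow type; in practice the rules of interest live at ground type, and the paper implicitly restricts attention there.
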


\begin{proof}
	By case analysis on the rule. This is straightforward for most rules so we only present the interesting cases. $z,z_0,z_1$ will reduce to values first in all the following cases, so without loss of generality we assume that they are values. 

\underline{Rule (c).1}: \vget{(\vbuild{\exprind{0}}{\exprind{1}})}{\exprind{2}} \evalsto \exprind{1} \exprind{2}.\\
If $\exprind{2}\to^* \val$ and \exprind{1}$\to^*$ \vabs{x} \exprind{1}' then the RHS reduces to $\exprind{1}'[\val/x]$. If $\exprind{0}\to^*\val'$ then the LHS reduces to \vgetk{} (\vbuildk{} \val' (\vabs{x} \exprind{1}')) \exprind{2}. It further reduces to \vgetk{} [\exprind{1}'[0/x], $\ldots$, \exprind{1}'[\val'-1/x]] \exprind{2}. Without loss of generality, assume that \exprind{1}' is a value. Then the LHS further reduces to \vgetk{} [\exprind{1}'[0/x],$\ldots$,\exprind{1}'[\val'-1/x]] \val. By assumption \val<\val' so the arguments are in the domain of \vgetk{} and the LHS finally reduces to \exprind{1}'[\val/x].

\underline{Rule (f).2}: \vifoldk{} f z n+1\evalsto \vifoldk{} (\vabs{a i}{} f a (i+1)) (f z 0) n.\\
By induction on $n$. If $n=0$ then the LHS reduces to f (\vifoldk{} f z 0) 0 which reduces to f z 0. The RHS immediately reduces to f z 0. If $n>0$ then the LHS reduces to f (\vifoldk{} f z 0) n. The RHS reduces in a few steps to f ( \vifoldk{} (\vabs{a i}{} f a (i+1)) (f z 0) n-1) (n-1+1). By I.H. the first argument of f in both cases reduces to the same value and we are done.  

\underline{Rule (f).4}: 
\vifoldk{} (\vabs{a i}{} \iif(i = \exprind{0}) \then{} \exprind{1} \elsee{} a) z n \evalsto \lett{} a = z \inn{} \lett{} i = \exprind{0} in \exprind{1} \textit{(if \exprind{0} does not mention a or i, and $0\leq \exprind{0} < n$)}.\\
By induction on $n> \exprind{0}$. If $n=\exprind{0}+1$ then the LHS reduces in a few steps to (\vabs{a}{} \iif{} n-1 = \exprind{0}) \then{} \exprind{1} \elsee{} a)(LHS(n-1)). As $n-2<\exprind{0}$ a straightforward induction shows that LHS(n-1) reduces to z. Hence the LHS reduces to \exprind{1}[z/a,\exprind{0}/i], and so does the RHS. For $n>\exprind{0}+1$ the LHS will reduce to (\vabs{a}{} \iif{} n-1 = \exprind{0}) \then{} \exprind{1} \elsee{} a)(LHS(n-1)). As $n>\exprind{0}+1$, $n-1 \neq \exprind{0}$ and hence the else branch will be chosen. By I.H. LHS(n-1) reduces to the same thing as the RHS, and hence the LHS reduces to the same thing as LHS(n-1) which reduces to the same thing as the RHS.

\underline{Rule (g)}: 
\vifoldk{} (\vabs{a i}{}
\adpair{\genexprind{f}{0} (\vfstk{} a) i}{\genexprind{f}{1} (\vsndk{} a) i}) \adpair{\genexprind{z}{0}}{\genexprind{z}{1}} n
\evalsto
 (\vifoldk{} \genexprind{f}{0} \genexprind{z}{0} n,  \vifoldk{} \genexprind{f}{1} \genexprind{z}{1} n). \\
 By induction on $n$. If $n=0$, both reduce to \adpair{\genexprind{z}{0}}{\genexprind{z}{1}}.  If $n>0$, then the LHS reduces in several steps to
\adpair{\genexprind{f}{0} (\vfstk{} LHS(n-1)) (n-1)}{\genexprind{f}{1} (\vfstk{} LHS(n-1)) (n-1)}, where LHS(n-1) denotes the term on the LHS for n-1 instead of n. By I.H. \vifoldk{} (\vabs{a i}{}
\adpair{\genexprind{f}{0} (\vfstk{} a) i}{\genexprind{f}{1} (\vsndk{} a) i}) \adpair{\genexprind{z}{0}}{\genexprind{z}{1}} n $\to^*$ \adpair{\valind{0}}{\valind{1}} iff the RHS at $n-1$ reduces to \adpair{\valind{0}}{\valind{1}}. So the RHS reduces to \adpair{\genexprind{f}{0} \valind{0} n-1}{\genexprind{f}{1} \valind{1} n-1}, and so does the RHS.

\underline{Rule (a).7}: f(\lett{} x = \exprind{0} \inn{} \exprind{1})\evalsto 
\lett{} x = \exprind{0} \inn{} f(\exprind{1}).\\
By case analysis on f. If f is a constant, the result is clear. Assume f is a unary construct. Then if  \exprind{0}$\to^*\val$ then \lett{} x = \exprind{0} \inn{} \exprind{1}$\to^*$ \exprind{1}[\val/x]. If \exprind{1}[\val/x]$\to^*$\val' then the LHS reduces to f(\val'). The RHS reduces to f(\exprind{1})[\val/x]=f(\exprind{1}[\val/x]) which reduces to f(\val'). If f is a $n$-ary operator for $n>1$, the proof is similar but the evaluation order of the arguments comes into play making the argument a bit more cumbersome.
\end{proof}


\subsection{Type Soundness of \difftransalt{}}

We now prove that \difftransalt{} is a well-defined macro on the language. That is, it is compatible with typing and substitution. 

\begin{lemma}[Well-typedness of $\difftransalt{}$]
\label{lem:well_typedness_D}
	If $\Gamma\vdash e:A$ then $\difftranstypealt{\Gamma}\vdash \difftransalt{e}:\difftranstypealt{A}$.
\end{lemma}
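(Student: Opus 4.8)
The plan is to argue by induction on the derivation of $\Gamma\vdash e:A$, with a case analysis on the last typing rule used. Two elementary facts are used throughout: $\difftranstypealt{}$ acts pointwise on contexts, so that $\difftranstypealt{\Gamma,\text{x}:A}=\difftranstypealt{\Gamma},\diffvarprefix{\text{x}}:\difftranstypealt{A}$; and $\difftranstypealt{}$ distributes over the type formers exactly as recorded by (DT-Fun), (DT-Arr), (DT-Pair), (DT-Num), (DT-Bool) --- in particular $\difftranstypealt{\typedouble}=\difftranstypealt{\typeindex}=\typepair{\typedouble}{\typedouble}$ resp. $\typepair{\typeindex}{\typeindex}$ and $\difftranstypealt{\typebool}=\typepair{\typebool}{\typebool}$. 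Because the AD macro contains a ``generic scalar'' clause $\difftransalt{e}=(\difftransp{e},\difftranst{e})$ that is only meant to fire on subterms of type \typedouble{}, I will strengthen the induction hypothesis to assert, simultaneously with the main statement, that whenever $\Gamma\vdash e:\typedouble$ both $\difftranstypealt{\Gamma}\vdash\difftransp{e}:\typedouble$ and $\difftranstypealt{\Gamma}\vdash\difftranst{e}:\typedouble$ hold. With this in place every case reduces to re-applying the corresponding typing rule to the translated subterms.

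I would dispatch the cases as follows. (D-Var): $\text{y}:A\in\Gamma$ gives $\diffvarprefix{\text{y}}:\difftranstypealt{A}\in\difftranstypealt{\Gamma}$, so (T-Var) applies. (D-Abs), (D-Let): apply the IH to the premises in the extended context, then re-apply (T-Abs)/(T-Let), reading off the function/let type through (DT-Fun); the multi-argument $\lambda$ is handled by iterating this. (D-App): apply the IH to both premises and use that $\difftranstypealt{}$ distributes over $\funarrow{}$, so (T-App) re-types the translated application. (D-If): the scrutinee has type \typebool{}, so by (DT-Bool) the IH gives a term of type $\typepair{\typebool}{\typebool}$, whence applying \vfstk{} to it yields a term of type \typebool{}, and the two branches re-type by IH. (D-Pair), (D-Fst), (D-Snd): immediate from (DT-Pair). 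For the array constructs the only thing to check is the index bookkeeping: in (D-Get) and (D-Length) an index subterm of type \typeindex{} becomes a term of type $\typepair{\typeindex}{\typeindex}$ whose first projection is again an index; an array of type $\typearray{\typemat}$ becomes one of type $\typearray{\difftranstypealt{\typemat}}$ by (DT-Arr); and the paired index literal $0$ in (D-Length), resp. the $(i,0)$ in (D-Build)/(D-IFold), has type $\typepair{\typeindex}{\typeindex}$, which is exactly the domain of the translated body function. \viteratek{} is handled like \vbuildk{}, together with the typing of its state type. Finally, for the scalar and boolean primitives --- which the operational-semantics section packages as the typing rules $(\op_1)$ and $(\op_2)$ --- we are in the \typedouble{} (resp. \typebool{}) case, so by the strengthened IH the primal and tangent parts of each argument are of type \typedouble{}, and inspection of (D-Add), (D-Mult), (D-Div), (D-Pow), (D-Neg), (D-Sin), (D-Cos), (D-Tan), (D-Log), (D-Exp) shows each right-hand side is a \typedouble{}-valued expression paired with a \typedouble{}-valued expression, i.e. of type $\typepair{\typedouble}{\typedouble}=\difftranstypealt{\typedouble}$; the comparisons and boolean connectives are analogous via (DT-Bool).

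The only genuine subtlety, and hence the step I expect to need the most care, is making sure the recursive definition of $\difftransalt{}$ is consistent with the case split of the typing derivation: that the generic scalar clause really does apply only at type \typedouble{} (so that the inserted index literals, pairs and \vfstk{}'s land at precisely the types demanded by the surrounding translated constructs), and, dually, that $\difftransp{}$ and $\difftranst{}$ are themselves well-typed on \typedouble{}-subterms. This is exactly what the strengthened induction hypothesis buys, so once it is set up the remaining work is routine. (Note that well-typedness on its own requires no substitution lemma --- only the pointwise-context identity above --- the substitution compatibility mentioned in the surrounding text being a separate statement used later for the correctness of $\difftransalt{}$.)
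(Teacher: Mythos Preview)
Your proposal is correct and follows essentially the same approach as the paper: induction on the typing derivation with a case analysis on the last rule used. The paper's own proof is very terse (it only spells out the application case), so your account is considerably more detailed; in particular, your strengthened induction hypothesis for \typedouble{}-typed subterms is a clean way to handle the fact that the scalar clause decomposes $\difftransalt{e}$ into $(\difftransp{e},\difftranst{e})$, a point the paper leaves implicit.
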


\begin{proof}
	By induction on the typing tree of $e$ and case analysis. All cases are very similar so we only show one:
	$\Gamma\vdash \exprind{1}\exprind{2}:A$. There exists a type $B$ such that $\Gamma\vdash \exprind{1}:\typefunone{B}{A}$ and $\Gamma\vdash \exprind{2}:B$ so by I.H. $\difftranstypealt{\Gamma}\vdash \difftransalt{\exprind{1}}:\typefunone{\difftransalt{B}}{\difftranstypealt{A}}$ and $\difftranstypealt{\Gamma}\vdash \difftransalt{\exprind{2}}:\difftranstypealt{B}$. So $\difftranstypealt{\Gamma}\vdash \difftransalt{\exprind{1}\exprind{2}}=\difftransalt{\exprind{1}}\difftransalt{\exprind{2}}:\difftranstypealt{A}$.
\end{proof}

\begin{proposition}[Compatibility of $\difftransalt{}$ with substitution]
\label{lem:functoriality_D}
If $\Gamma,y:B\vdash \exprind{1}:A$ and $\Gamma\vdash \exprind{2}:B$ then
	$\difftranstypealt{\Gamma}\vdash \difftransalt{\exprind{1}[\exprind{2}/y]}=\difftransalt{\exprind{1}}[\difftransalt{\exprind{2}}/y]:\difftranstypealt{A}$.
\end{proposition}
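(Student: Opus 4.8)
The plan is to prove the statement by structural induction on $\exprind{1}$ (equivalently, on its typing derivation), working clause by clause through the definition of $\difftransalt{}$ in Figure~\ref{fig:diff_trans}, in the same style as Lemma~\ref{lem:well_typedness_D}. Since $\difftransalt{}$ is a syntax-directed macro, in every clause except the scalar ones it merely recurses into the immediate subterms of $\exprind{1}$, and capture-avoiding substitution does exactly the same; so the equation will follow by applying the induction hypothesis to those subterms and then congruence of term constructors (the type annotations matching by Lemma~\ref{lem:well_typedness_D}). For the scalar clauses we have $\difftransalt{\expr}=\vpair{\difftransp{\expr}}{\difftranst{\expr}}$ with $\difftransp{\cdot}$ and $\difftranst{\cdot}$ themselves defined recursively by the rules (D-Add)--(D-Exp), so I would strengthen the induction and prove simultaneously that $\difftransp{\exprind{1}[\exprind{2}/y]}=\difftransp{\exprind{1}}[\difftransalt{\exprind{2}}/y]$ and $\difftranst{\exprind{1}[\exprind{2}/y]}=\difftranst{\exprind{1}}[\difftransalt{\exprind{2}}/y]$; each of those rules again just distributes over its subterms, so this adds nothing essentially new. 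Throughout, the variable written $y$ on the right-hand side is read as its dual-number counterpart $\diffvarprefix{y}$, consistently with (D-Var)/(D-Abs).

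Before running the induction I would record one easy auxiliary fact, by a separate induction on $\expr$: modulo the identification $x\leftrightarrow\diffvarprefix{x}$, one has $\mathrm{fvs}(\difftransalt{\expr})=\{\,\diffvarprefix{x}\mid x\in\mathrm{fvs}(\expr)\,\}$ (and likewise for $\difftransp{\cdot}$ and $\difftranst{\cdot}$). This is what makes the binder clauses (D-Abs), (D-Let), (D-Build), (D-IFold) go through: by the Barendregt convention we may assume the bound variable $x$ satisfies $x\neq y$ and $x\notin\mathrm{fvs}(\exprind{2})$; the auxiliary fact then yields $\diffvarprefix{x}\neq y$ and $\diffvarprefix{x}\notin\mathrm{fvs}(\difftransalt{\exprind{2}})$, so the substitution may legitimately be pushed under the transformed binder. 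Concretely, for $\exprind{1}=\vabs{x}{\expr}$,
\[
\difftransalt{(\vabs{x}{\expr})[\exprind{2}/y]}
  = \vabs{\diffvarprefix{x}}{\difftransalt{\expr[\exprind{2}/y]}}
  \IH \vabs{\diffvarprefix{x}}{\bigl(\difftransalt{\expr}[\difftransalt{\exprind{2}}/y]\bigr)}
  = \bigl(\vabs{\diffvarprefix{x}}{\difftransalt{\expr}}\bigr)[\difftransalt{\exprind{2}}/y]
  = \difftransalt{\vabs{x}{\expr}}[\difftransalt{\exprind{2}}/y],
\]
and the cases for $\vlet$, $\vbuildk$, $\viteratek$ have the same shape. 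The binder-free clauses --- (D-App), (D-If), (D-Pair), (D-Fst), (D-Snd), (D-Vector), (D-Get), (D-Length) --- are pure congruences and follow immediately from the induction hypothesis (using, e.g., that $\vfstk$ commutes with substitution in the (D-If) case).

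The only genuinely non-congruential cases are the leaves. If $\exprind{1}=y$, then $\exprind{1}[\exprind{2}/y]=\exprind{2}$, while $\difftransalt{y}=\diffvarprefix{y}$ and hence $\difftransalt{y}[\difftransalt{\exprind{2}}/y]=\difftransalt{\exprind{2}}$: both sides are $\difftransalt{\exprind{2}}$, and they share the type $\difftranstypealt{A}$ (here $A=B$), by Lemma~\ref{lem:well_typedness_D}. If $\exprind{1}=z$ is a variable distinct from $y$, both sides are $\diffvarprefix{z}$. If $\exprind{1}$ is a literal $\text{r}$ or $\text{i}$, a boolean constant, or one of the built-in function constants, it has no free variables, so substitution is the identity on both $\exprind{1}$ and $\difftransalt{\exprind{1}}$, and the two sides are literally equal.

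The step I expect to need the most care is the bookkeeping sketched above: making precise the identification $x\leftrightarrow\diffvarprefix{x}$ performed by (D-Var)/(D-Abs), and checking that $\difftransalt{}$ preserves exactly the freshness side conditions that capture-avoiding substitution requires --- that is, the auxiliary free-variable lemma and its deployment in the binder cases. A closely related subtlety is that, in the let-binding variants of the scalar rules (the ``bind the arguments to a new variable'' refinement of (D-Mult), (D-Div), (D-Pow) noted after Figure~\ref{fig:diff_trans}), the freshly generated let-bound variable must also be chosen fresh for $\exprind{2}$; under the Barendregt convention this is immediate, but it is the one place where the statement is not a wholly mechanical consequence of compositionality.
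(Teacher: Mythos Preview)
Your proposal is correct and follows the same approach as the paper: structural induction on $\exprind{1}$ with case analysis, the paper displaying only three representative cases (abstraction, application, conditional) and declaring the rest ``very similar.'' Your treatment is in fact more careful than the paper's on two points the paper glosses over---the free-variable/capture-avoidance bookkeeping under binders and the strengthened induction for the $\difftransp{\cdot}/\difftranst{\cdot}$ scalar rules (including the fresh let-bound variables)---so nothing is missing.
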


\begin{proof}
	By induction and case analysis on $\exprind{1}$. All cases are again very similar so we only present a few cases.
	\begin{itemize}
		\item $\exprind{1}:= \vabs{\vmore{\text{z}}}{\exprind{3}}$:  
		\begin{eqnarray*}
			\difftransalt{\exprind{1}[\exprind{2}/y]}
			&=\difftransalt{(\vabs{\vmore{\text{z}}}{\exprind{3}})[\exprind{2}/y]} \\
			&=\difftransalt{\vabs{\vmore{\text{z}}}{\exprind{3}[\exprind{2}/y]}}\\
			&=\vabs{\vmore{\difftransalt{\text{z}}}}{\difftransalt{\exprind{3}[\exprind{2}/y]}}\\
			&\IH \vabs{\vmore{\difftransalt{\text{z}}}}{(\difftransalt{\exprind{3}}[\difftransalt{\exprind{2}}/y])}\\
			&=\vabs{\vmore{\difftransalt{\text{z}}}}{(\difftransalt{\exprind{3}})[\difftransalt{\exprind{2}}/y]}\\
			&=\difftransalt{\vabs{\vmore{\text{z}}}{\exprind{3}}}[\difftransalt{\exprind{2}}/y]
		\end{eqnarray*}
		\item $\exprind{1}:= \exprind{3}\exprind{4}$: \begin{eqnarray*}
			\difftransalt{\exprind{1}[\exprind{2}/y]}
			&=\difftransalt{(\exprind{3}\exprind{4})[\exprind{2}/y]}\\
			&=\difftransalt{\exprind{3}[\exprind{2}/y]\exprind{4}[\exprind{2}/y]}\\
			&=\difftransalt{\exprind{3}[\exprind{2}/y]}\difftransalt{\exprind{4}[\exprind{2}/y]}\\
			&\IH \difftransalt{\exprind{3}}[\difftransalt{\exprind{2}}/y]\difftransalt{\exprind{4}}[\difftransalt{\exprind{2}}/y]\\
			&=(\difftransalt{\exprind{3}}\difftransalt{\exprind{4}})[\difftransalt{\exprind{2}}/y]\\
			&=\difftransalt{\exprind{3}\exprind{4}}[\difftransalt{\exprind{2}}/y]
		\end{eqnarray*}
			\item $\exprind{1}:= \vifthenelse{\exprind{0}}{\exprind{3}}{\exprind{4}}$:
			\begin{eqnarray*}
			&\difftrans{(\vifthenelse{\exprind{0}}{\exprind{3}}{\exprind{4}})[\exprind{2}/y]}\\
			&= \difftrans{(\vifthenelse{\exprind{0}[\exprind{2}/y]}{\exprind{3}[\exprind{2}/y]}{\exprind{4}[\exprind{2}/y]})} \\
			&= \vifthenelse{(\vfstk{} \difftrans{\exprind{0}[\exprind{2}/y]})}{\difftrans{\exprind{3}[\exprind{2}/y]}}{\difftrans{\exprind{4}[\exprind{2}/y]}} \\
			&\IH  \vifthenelse{(\vfstk{} \difftrans{\exprind{0}}[\difftrans{\exprind{2}}/y])}{\difftrans{\exprind{3}}[\difftrans{\exprind{2}}/y]}{\difftrans{\exprind{4}}[\difftrans{\exprind{2}}/y]} \\
			&= (\vifthenelse{(\vfstk{} \difftrans{\exprind{0}})}{\difftrans{\exprind{3}}}{\difftrans{\exprind{4}}})[\difftrans{\exprind{2}}/y] \\
			&= \difftrans{(\vifthenelse{\exprind{0}}{\exprind{3}}{\exprind{4}})}[\difftrans{\exprind{2}}/y]
			\end{eqnarray*}
	\end{itemize}
\end{proof}
\subsection{Denotational semantics} 
\label{sub:denotational_semantics}

Differentiation is about functions, not values, and the operational semantics is ill-suited for stating and proving correctness of \difftransalt{}. The way to state correctness is roughly that \difftransalt{\expr{}}, seen as a function, computes the partial derivatives of \expr{}, also seen as a function. This is stated formally using denotational semantics. 

Similarly to \cite{barthe2020versatility} where a standard denotational semantics in terms of sets and functions is given to the language, we give a standard denotational semantics $\sem{-}$ in terms of sets and partial functions. 

A type $\tau$ and a context $\Gamma$ are denoted by sets $\sem{\tau},\sem{\Gamma}$, and a term $\Gamma\vdash \expr{}:\tau$ by a partial function $\sem{\Gamma}\to\sem{\tau}$. The domain of the function depends on \expr{} and is given by structural induction on \expr{}. Every construction of the language is total and the only partiality comes from the constants following the $\Dom$ function.

In particular, the denotation of \typedouble{} will be the reals $\RR$ and the denotation of \typebool{} is a set of two elements. We can see this space as the disjoint union of two Cartesian spaces, each consisting of one element. By doing so, a partial smooth function $\sem{\typedouble}\to\sem{\typebool}$ is exactly a function constant on each connected component of its domain. This allows us, following \cite{abadi2019simple}, to denote $\vconst{<}:\typedouble{},\typedouble{}\funarrow{} \typedouble{}$ by a partial smooth function $\RR\times\RR\to \sem{\typebool}$ undefined on the diagonal $\{(x,x)~|~x\in\RR\}\subset \RR\times\RR$. 
Similarly for the operators \vconst{>}, \vconst{==}, \vconst{<>}. This is for the purpose of stating and proving correctness of \difftransalt{} only. 

We are only interested here in proving correctness of \difftransalt{} for partial smooth functions for the usual notion of differentiation, so the rules for differentiation from Figure \ref{fig:diff_trans} for \vconst{+}, \vconst{-}, \vconst{*}, \vconst{/}, \vconst{**} only apply to these operators for the type \typedouble. 

Furthermore, if the language had sum types, we would have $\difftransalt{\tau+\tau'}=\difftransalt{\tau}+\difftransalt{\tau'}$. In particular, based on the fact that $\typebool=1+1$, where $1$ is the unit type, one might expect $\difftransalt{\typebool}=\difftransalt{1+1}=\difftransalt{1}+\difftransalt{1}=1\times 1 + 1\times 1$. The latter equal is coming from the fact that $1=\RR^0$ can also be seen as a Cartesian space. Note that $1\times 1 + 1\times 1$ is not $\typebool\times\typebool$, however it is equivalent to $(1+1)\times 1$ which can be injected into $\typebool\times\typebool$ by sending $\true$~ to $(\true,\false)$ and $\false$~ to $(\false,\false)$. A similar story happens with the natural numbers \typeindex, which is equivalent to a countable sum $1+1+\ldots 1 +1$. In this case $\difftransalt{1+1+\ldots 1 +1}$ which should be $1\times 1 +\ldots 1\times1$ can be embedded into $\typeindex \times \typeindex$ by sending \text{n} to $(\text{n},0)$. By doing so, we recover exactly our choice for \difftransalt{} from Figure \ref{fig:diff_trans}.
 
Finally, our denotational semantics is standard and compatible with the operational semantics. That is

\begin{proposition}[Soundness of denotational semantics]
\label{lem:op_denot_sem}
	If $\Gamma\vdash \expr{}\to\expr{}':A$ then $\sem{\expr{}}=\sem{\expr{}'}$.
\end{proposition}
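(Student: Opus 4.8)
The plan is to prove the statement by case analysis on the reduction step $e \to e'$, i.e.\ on which rule of Figure~\ref{fig:CBV} was applied, relying throughout on the compositionality of $\sem{-}$: every clause of the denotational semantics interprets a syntactic construct by a fixed operation on partial functions, so it suffices to check, rule by rule, that the left- and right-hand sides receive the same partial function $\sem{\Gamma}\to\sem{A}$. The one auxiliary ingredient needed is a \emph{semantic substitution lemma}: if $\Gamma, x:B \vdash \exprind{1} : A$ and $\Gamma \vdash \val : B$, then $\sem{\exprind{1}[\val/x]} = \sem{\exprind{1}}\circ\langle \mathrm{id}_{\sem{\Gamma}}, \sem{\val}\rangle$, and moreover the domains of definition of these two partial functions coincide. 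This is established by a routine induction on $\exprind{1}$, entirely parallel to Proposition~\ref{lem:functoriality_D}; the only extra care compared with the total case is that $\sem{-}$ is partial, so at every step one must verify that the domains agree, which holds because each construct is interpreted strictly --- undefinedness of any subcomponent propagates.

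With the lemma in hand, the congruence rules --- those of the shape ``if $e \to e'$ then $C[e] \to C[e']$'' for a one-hole context $C$ coming from application, from a unary or binary operator, or from the head position of a \lett{}, together with the analogous left-to-right evaluation-order steps for the multi-argument primitives and the $\bot$-propagation steps --- are immediate: the induction hypothesis gives $\sem{e} = \sem{e'}$, and $\sem{C[-]}$ is obtained by combining $\sem{-}$ with a fixed partial function determined by $C$, hence $\sem{C[e]} = \sem{C[e']}$. The $\beta$-style rules $(\lambda\vec{x}.\,e)\,\val \to e[\val/x]$ and $\mathsf{let}\ x = \val\ \mathsf{in}\ e \to e[\val/x]$ both reduce, after unfolding the interpretations of abstraction, application and $\mathsf{let}$ as currying composed with evaluation, to a direct instance of the semantic substitution lemma. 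The remaining structural reductions --- the two projection rules, the two conditional rules, the indexing and length rules for a literal array, the build-unfolding rule, and the two ifold-unfolding rules --- are read off by simply unfolding the matching semantic clauses; the ifold-on-$i{+}1$ case uses that the denotation of \viteratek{} is defined by the evident recursion on its index argument.

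The delicate cases are the rules for the non-total constants. For $\op_1(\underline{\val}) \to \underline{\op_1(\val)}$ with $\val \in \Dom(\op_1)$ --- and likewise for $\op_2$ --- the point is that $\sem{\op_1}$ is by construction the partial function whose domain is $\Dom(\op_1)$ and which agrees there with the mathematical operation, so both sides denote $\op_1(\val)$. For $\op_1(\underline{\val}) \to \bot$ with $\val \notin \Dom(\op_1)$ --- and likewise for $\op_2$ --- the left-hand side is undefined at that input while $\sem{\bot}$ is, by definition, the nowhere-defined partial function, so they again agree. I expect the main obstacle to be precisely this bookkeeping of domains of definition --- fitting together the treatment of $\bot$, the strictness of every construct, and the partial constants (including the comparison operators $\vconst{<}$, $\vconst{>}$, $\vconst{==}$, $\vconst{<>}$, whose $\Dom$ is taken to exclude the diagonal so as to match their chosen denotations) into one uniform argument --- rather than anything conceptual; once the semantic substitution lemma is in place, the compositional structure of $\sem{-}$ makes the rest mechanical.
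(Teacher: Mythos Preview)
The paper does not actually prove this proposition: it is stated as a standard fact (``our denotational semantics is standard and compatible with the operational semantics'') and left without proof. Your proposal supplies exactly the standard argument one would give---case analysis on the reduction rule, a semantic substitution lemma for the $\beta$-style cases, compositionality for the congruence rules, and direct unfolding for the computation rules, with the partiality bookkeeping handled by strictness of each construct---and it is correct. There is nothing to compare against on the paper's side; your write-up is simply more detailed than what the authors chose to include.
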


\subsection{Correctness of syntactic differentiation}

In \cite{barthe2020versatility}, the authors proved correct a fragment of our language. We extend their logical relations argument to our whole language. In particular, we now have to deal with conditionals, partial operations and array-related constructs such as \viteratek{}. We recall and extend their framework for our purpose.

Let $\text{dual}_\text{y}(\text{x}):=(\text{y},1)$ if $\text{x}\equiv \text{y}$ and $(\text{x},0)$ otherwise. We denote by $\partial_y$ the partial derivative with respect to the variable $y$.

\begin{definition}[Open logical relation]
	Let $\Theta=\text{x}_1:\typedouble{},\ldots,\text{x}_n:\typedouble{}$ be a fixed environment. Define the family of relations $(\logR^{\Theta}_\tau)_{\Theta,\tau}$ by induction on types $\tau$ as follows:
		\[
		\small
\begin{array}{lllll}
	\exprind{1}~\logR^{\Theta}_\typedouble{}~\exprind{2} \Longleftrightarrow \left\{\begin{array}{lllll}
	\Theta\vdash \exprind{1}:\typedouble \wedge \difftranstypealt{\Theta}\vdash \exprind{2}:\difftranstypealt{\typedouble} \\
	\forall \text{y}:\typedouble.\sem{\Theta\vdash\vfstk{}(\exprind{2})[ \text{dual}_\text{y}(\text{x}_i)/\difftransalt{\text{x}_i}]_{\text{1}\leq \text{i}\leq \text{n}}:\typedouble}\\
	\qquad\qquad\,\,= \sem{\Theta\vdash \exprind{1}:\typedouble} \\
	\forall \text{y}:\typedouble.\sem{\Theta\vdash\vsndk{}(\exprind{2})[ \text{dual}_\text{y}(\text{x}_i)/\difftransalt{\text{x}_i}]_{\text{1}\leq \text{i}\leq \text{n}}:\typedouble}\\
	\qquad\qquad\,\,= \partial_{\text{y}}\sem{\Theta\vdash \exprind{1}:\typedouble}
	\end{array} \right.\medskip\\
	\exprind{1}~\logR^{\Theta}_{\typefunone{$\tau_1$}{$\tau_2$}}~\exprind{2} \Longleftrightarrow \left\{\begin{array}{ll}
	\Theta\vdash \exprind{1}:\typefunone{$\tau_1$}{$\tau_2$} \wedge \difftranstypealt{\Theta}\vdash \exprind{2}:\difftranstypealt{\typefunone{$\tau_1$}{$\tau_2$}}  \\
	\forall \exprind{3},\exprind{4}.~ \exprind{3}~\logR^{\Theta}_{\tau_1}~\exprind{3}\Rightarrow \exprind{1}\exprind{3}~\logR^{\Theta}_{\tau_2}~\exprind{2}\exprind{4}
	\end{array} \right. 	\medskip\\
	\exprind{1}~\logR^{\Theta}_{\tau_1\times\tau_2}~\exprind{2} \Longleftrightarrow \left\{\begin{array}{ll}
	\Theta\vdash \exprind{1}:\tau_1\times\tau_2 \wedge \difftranstypealt{\Theta}\vdash \exprind{2}:\difftranstypealt{\tau_1\times\tau_2}  \\
	\vfstk{}(\exprind{1})~ \logR^\Theta_{\tau_1} ~\vfstk{}(\exprind{2}) \wedge \vsndk{}(\exprind{1}) ~\logR^\Theta_{\tau_2}~ \vsndk{}(\exprind{2})
	\end{array} \right. 	\medskip\\
	\exprind{1}~\logR^{\Theta}_{\typebool}~\exprind{2} \Longleftrightarrow \left\{\begin{array}{lll}
	\Theta\vdash \exprind{1}:\typebool \wedge \difftranstypealt{\Theta}\vdash \exprind{2}:\difftranstypealt{\typebool} \\
	\forall \text{y}:\typedouble.\sem{\Theta\vdash\vfstk{}(\exprind{2})[ \text{dual}_\text{y}(\text{x}_i)/\difftransalt{\text{x}_i}]_{\text{1}\leq \text{i}\leq \text{n}}:\typebool}\\
	\qquad\qquad\,\,= \sem{\Theta\vdash \exprind{1}:\typebool} \\
	\forall \text{y}:\typedouble.\sem{\Theta\vdash\vsndk{}(\exprind{2})[ \text{dual}_\text{y}(\text{x}_i)/\difftransalt{\text{x}_i}]_{\text{1}\leq \text{i}\leq \text{n}}:\typebool}\\
	\qquad\qquad\,\,= \sem{\Theta\vdash \false :\typebool}
	\end{array} \right. 	\medskip\\
	\exprind{1}~\logR^{\Theta}_{\typearray{$\tau$}}~\exprind{2} \Longleftrightarrow \left\{\begin{array}{ll}
	\Theta\vdash \exprind{1}:\typearray{$\tau$} \wedge \difftranstypealt{\Theta}\vdash \exprind{2}:\difftranstypealt{\typearray{$\tau$}} \\
	\forall \text{i},  \exprind{1}[\text{i}]~ \logR^\Theta_\tau ~\exprind{2}[\text{i}]
	\end{array} \right.
\end{array}
	\]
\end{definition}

 We now extend the lemma showing that the relation is compatible with the denotational semantics:

\begin{lemma}
	Let  $\Theta=\text{x}_1:\typedouble{},\ldots,\text{x}_n:\typedouble{}$. Then, the following hold:
	\[\begin{array}{ll}
		\exprind{1}~\logR^{\Theta}_\tau~\exprind{2} \wedge \sem{\Theta\vdash \exprind{1}:\tau}=\sem{\Theta\vdash \exprind{0}:\tau}\Rightarrow \exprind{0}~\logR^{\Theta}_\tau~\exprind{2} \\
		\exprind{1}~\logR^{\Theta}_\tau~\exprind{2} \wedge \sem{\difftranstypealt{\Theta}\vdash \exprind{2}:\difftranstypealt{\tau}}=\sem{\difftranstypealt{\Theta}\vdash \exprind{3}:\difftranstypealt{\tau}}\Rightarrow \exprind{1}~\logR^{\Theta}_\tau~\exprind{3}
	\end{array}\]
\end{lemma}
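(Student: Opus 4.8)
The plan is to prove both implications simultaneously, by induction on the structure of the type $\tau$. The only fact needed beyond the definition of $\logR^\Theta_\tau$ is the \emph{compositionality} of $\sem{-}$: if $\sem{\Gamma\vdash \exprind{1}:A}=\sem{\Gamma\vdash \exprind{0}:A}$, then any larger well-typed term obtained from $\exprind{1}$ by a fixed context has a denotation that depends on $\exprind{1}$ only through $\sem{\exprind{1}}$. The contexts that actually occur in the definition of the relation are post-composition with $\vfstk{}$ or $\vsndk{}$, application to a fixed argument, array indexing, and the substitution $[\text{dual}_{\text{y}}(\text{x}_i)/\difftransalt{\text{x}_i}]_{1\le i\le n}$. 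Compositionality for each is immediate from the fact that $\sem{-}$ is defined by structural induction and interprets substitution as pre-composition; this is already used tacitly in \cite{barthe2020versatility}, and I would record it as a one-line preliminary remark and then invoke it without further ado.

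Granting that remark, the cases are routine. For $\tau=\typedouble{}$ and $\tau=\typebool{}$, the relation $\exprind{1}~\logR^\Theta_\tau~\exprind{2}$ is by definition a typing requirement together with equalities between denotations of terms built from $\exprind{1}$ (for the first implication) or from $\exprind{2}$ (for the second) by exactly those fixed operations, plus one application of $\partial_{\text{y}}$ in the $\typedouble{}$ case, which again depends only on the denotation it is applied to. So replacing $\exprind{1}$ by $\exprind{0}$ (resp.\ $\exprind{2}$ by $\exprind{3}$) with the same denotation leaves every clause true, and the only extra typing judgement needed is exactly $\Theta\vdash \exprind{0}:\tau$ (resp.\ $\difftranstypealt{\Theta}\vdash \exprind{3}:\difftranstypealt{\tau}$), which is guaranteed by the hypothesis. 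For $\tau=\tau_1\times\tau_2$, compositionality gives $\sem{\vfstk{}(\exprind{0})}=\sem{\vfstk{}(\exprind{1})}$ and $\sem{\vsndk{}(\exprind{0})}=\sem{\vsndk{}(\exprind{1})}$, so the two induction hypotheses (at $\tau_1$ and at $\tau_2$) close the case, and the $\exprind{3}$ direction is symmetric. For $\tau=\typefunone{$\tau_1$}{$\tau_2$}$, given any pair of $\tau_1$-related arguments, applying $\exprind{0}$ and $\exprind{1}$ to the left one produces terms of equal denotation (compositionality of application), so from $\exprind{1}$-relatedness at $\tau_2$ and the induction hypothesis at $\tau_2$ we obtain $\exprind{0}$-relatedness at $\tau_2$, hence $\exprind{0}~\logR^\Theta_{\typefunone{$\tau_1$}{$\tau_2$}}~\exprind{2}$; the other direction is symmetric, this time observing that $\exprind{2}$ and $\exprind{3}$ applied to a common argument have equal denotation. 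For $\tau=\typearray{$\tau'$}$, for each index $\text{i}$ we have $\sem{\exprind{0}[\text{i}]}=\sem{\exprind{1}[\text{i}]}$, and the induction hypothesis at $\tau'$ finishes it.

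The one point requiring genuine care is compositionality in the \emph{partial}-function setting: substituting the dual-number terms $\text{dual}_{\text{y}}(\text{x}_i)$ or post-composing with a projection must not silently enlarge or shrink a domain of definition. This is harmless here because every syntactic construct is interpreted by a total operation and all partiality comes only from constants through $\Dom$, while the substituted terms are built solely from the variables of $\Theta$ and the literals $\text{0}$ and $\text{1}$; hence the relevant composites of partial functions have exactly the expected domain. Once that is pinned down, the induction runs purely on the finite structure of $\tau$, there is no circularity, and each case is a short unfolding.
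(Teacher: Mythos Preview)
Your proposal is correct and is the standard approach. The paper itself does not actually give a proof of this lemma: it is stated without argument, with the surrounding text merely indicating that it ``extend[s] the lemma showing that the relation is compatible with the denotational semantics'' from \cite{barthe2020versatility}. Your induction on the structure of $\tau$, together with the compositionality of $\sem{-}$ for the handful of contexts that appear in the definition of $\logR^\Theta_\tau$, is exactly the expected proof and would be the one the cited paper gives for its own version of the lemma. The remark you make about partiality---that the substituted dual-number terms and the projections are total, so domains of definition are preserved---is a genuine point in the present setting and worth recording, since the paper's semantics is in partial functions rather than the total-function semantics of \cite{barthe2020versatility}.
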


Again following \cite{barthe2020versatility}, we extend $(\logR^{\Theta}_\tau)$ to the family $(\logR^{\Gamma,\Theta}_\tau)_{\Gamma,\Theta,\tau}$ where $\Gamma$ ranges over arbitrary environments, as follows:
\[\exprind{0}~\logR^{\Gamma,\Theta}_\tau~\exprind{1}\Longleftrightarrow\]
\[ (\Gamma,\Theta\vdash \exprind{0}:\tau)\wedge (\difftranstypealt{\Gamma},\difftranstypealt{\Theta}\vdash \exprind{1}:\difftranstypealt{\tau})\wedge (\forall \gamma,\gamma'. \gamma~\logR^\Gamma_\Theta~ \gamma')\Rightarrow \exprind{1}\gamma ~\logR^\Theta_\tau~\exprind{1}\gamma'\]
$\gamma,\gamma'$ ranges overs substitution, and if we denote by $\supp(f)$ the support of the function f: 
\[ \gamma~\logR^\Gamma_\Theta~ \gamma' \Longleftrightarrow (supp(\gamma)=\Gamma)\wedge (\supp(\gamma')=\difftranstypealt{\Gamma})\wedge (\forall x:\tau\in\Gamma,\gamma(x)~\logR^\Theta_\tau ~\gamma'(\difftrans{x}))\]

We now prove a fundamental lemma:

\begin{lemma}[Fundamental lemma]
	For all environments $\Gamma,\Theta$ and for any $\Gamma,\Theta\vdash \expr{}:\tau$, we have $\expr{}~\logR^{\Gamma,\Theta}_\tau~ \difftransalt{\expr{}}$.
\end{lemma}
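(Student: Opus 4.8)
The plan is to prove the Fundamental Lemma by induction on the typing derivation of $\Gamma,\Theta\vdash\expr{}:\tau$, unfolding in each case the definition of $\logR^{\Gamma,\Theta}_\tau$: we fix a pair of related substitutions $\gamma~\logR^\Gamma_\Theta~\gamma'$ and show $\expr{}\gamma~\logR^\Theta_\tau~\difftransalt{\expr{}}\gamma'$. Well-typedness of both sides is Lemma~\ref{lem:well_typedness_D}. Throughout we use Proposition~\ref{lem:op_denot_sem} to argue up to reduction, the preceding lemma (compatibility of $\logR$ with the denotational semantics) to replace terms by denotationally equal ones while staying inside the relation, and Proposition~\ref{lem:functoriality_D} to commute substitutions with $\difftransalt{}$.

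The base cases come first. For $\text{x}_i\in\Theta$ we have $\difftransalt{\text{x}_i}=\diffvarprefix{\text{x}_i}$, and both clauses of $\logR^\Theta_\typedouble$ hold by construction of $\text{dual}$: under $[\text{dual}_{\text{y}}(\text{x}_j)/\diffvarprefix{\text{x}_j}]$, $\vfstk(\diffvarprefix{\text{x}_i})$ denotes $\text{x}_i$ and $\vsndk(\diffvarprefix{\text{x}_i})$ denotes $1$ or $0$ according to whether $\text{x}_i\equiv\text{y}$, i.e.\ exactly $\partial_{\text{y}}\sem{\text{x}_i}$. Variables from $\Gamma$ are covered directly by $\gamma~\logR^\Gamma_\Theta~\gamma'$. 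For scalar, boolean and index literals, $\difftransalt{}$ pairs the literal with $0$ (resp.\ \false), which is $\partial_{\text{y}}$ of a constant.

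The $\lambda$-calculus constructs (application, abstraction, let), pairs, and projections follow mechanically from the induction hypotheses and the function- and product-clauses of the relation. The array-building and array-access cases ($\varraygen{\ldots}$, $\vbuildk{}$, $\vcget{}$, $\vclength{}$) follow from the array clause, using that index-typed subterms are translated to themselves paired with $0$. The two substantive cases here are $\viteratek{}$ and the conditional. For $\viteratek{}~\exprind{0}~\exprind{1}~\exprind{2}$, whose translation iterates the related step function $\difftransalt{\exprind{0}}$ applied to the state and to the loop index paired with $0$, I would perform an inner induction on the (common) value of the iteration count: at $0$ the result is the related initial state $\exprind{1}$; at $n+1$, unfold one step, apply the IH for $\exprind{0}$ to the state obtained by the inner IH at $n$, absorbing with the compatibility lemma the fact that the count is evaluated first. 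The scalar-operator cases (D-Add through D-Exp) reduce, via the $\typedouble$ clause, to checking that $\vsndk$ of the translated term denotes the $\text{y}$-derivative of the operation — the sum, product, quotient and chain rules of calculus — valid on the open set $\Dom(\op)$ on which the partial denotation is defined, so there are no boundary subtleties.

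I expect the conditional to be the genuinely non-routine case. Since $\difftransalt{\vifthenelse{\exprind{1}}{\exprind{2}}{\exprind{3}}}=\vifthenelse{(\vfstk{}\,\difftransalt{\exprind{1}})}{\difftransalt{\exprind{2}}}{\difftransalt{\exprind{3}}}$, the argument rests on the fact that $\exprind{1}~\logR^\Theta_\typebool~\difftransalt{\exprind{1}}$ forces $\vfstk(\difftransalt{\exprind{1}})$ to agree, as a function, with $\exprind{1}$ after the dual substitution, and that $\sem{\Theta\vdash\exprind{1}:\typebool}$, valued in a disjoint union of one-point Cartesian spaces, is locally constant on its open domain; hence near any point of the domain both $\expr{}\gamma$ and $\difftransalt{\expr{}}\gamma'$ commit to the same branch, the derivative is computed within that branch, and the IH on the chosen branch together with the compatibility lemma closes the case. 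This local-constancy step — which is exactly what the chosen denotation of \typebool{} (with $\vconst{<}$, $\vconst{==}$ undefined on the diagonal) was designed to enable, and whose correct handling requires checking that both semantic equalities of the relation survive restriction to the relevant neighbourhood — is the main obstacle; the nested induction for $\viteratek{}$ is the secondary one, but it is essentially bookkeeping.
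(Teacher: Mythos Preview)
Your proposal is correct and follows essentially the same route as the paper's proof: induction on the typing derivation, with the conditional handled by the local-constancy/connected-components argument enabled by the partial denotation of boolean comparisons, and $\viteratek{}$ handled by a secondary induction on the iteration count. You are in fact more explicit than the paper on several points it delegates to \cite{barthe2020versatility} (the base cases for variables in $\Theta$ versus $\Gamma$, the scalar-operator cases via the chain rule on $\Dom(\op)$, and the use of Proposition~\ref{lem:functoriality_D} to push substitutions through $\difftransalt{}$), but the structure and the identification of the conditional as the only non-routine case coincide.
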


\begin{proof}[Proof sketch]
The proof is by case analysis on \expr{} and is similar to the one from \cite{barthe2020versatility}. We only present the main new cases. We also only sketch the argument and skip writing all the substitutions $\gamma,\gamma'$. It is straightforward to complete our sketch of proof and add them.
	\begin{itemize}
		\item $ \expr{}:=[\expr{}_\text{1},\ldots,\expr{}_\text{n-1}]$: by I.H. we have that for all \text{i}, $\expr{}_\text{i}~\logR^{\Gamma,\Theta}_\tau~\difftransalt{\expr{}_\text{i}}$, hence we are done by definition of $\logR$ on array types.
		\item $\expr{}:=\vbuildk{}~\text{n}~\text{f}$: then $\sem{\expr{}}=\sem{[\text{f}(0),\ldots,\text{f}(n-1)]}$ by Lemma \ref{lem:op_denot_sem}. We conclude by the I.H. on \text{f}, and again the definition of $\logR$ on array types.
		\item $\expr{}:=\exprind{1}\vconst{<} \exprind{2}$: then it is immediate by the definitions of \difftransalt{} and $\logR$ for $\typebool$.
		\item $\expr{}:=\vifthenelse{\exprind{0}}{\exprind{1}}{\exprind{2}}$: this expression has a ground type $\typemat$. As we have seen with the case of arrays above, products and arrays come essentially for free, so w.l.o.g. assume that $\typemat=\typedouble$. Then, what we have to show is an equality in the semantics. It suffices to show this for a covering set of the domain. We obtain one by considering connected components of $A:=\sem{\exprind{1}}^{-1}(\sem{\true})$ and $B:=\sem{\exprind{1}}^{-1}(\sem{\false})$. On each such connected component $D\subset A$, $\sem{\exprind{1}}=\sem{\true}$ and so $\sem{\expr{}}_{|D}=\sem{\exprind{1}}_{|D}$ and hence by I.H. on \exprind{1} we have that $\sem{\vsndk{}\difftransalt{\expr{}}}_{|D}=\sem{\difftransalt{\exprind{1}}}_{|D}=\partial_y\sem{\exprind{1}}_{|D}=\partial_y\sem{\expr{}}_{|D}$. This is fine by Lemma \ref{lem:op_denot_sem} and our choice of domain of definition of $\vconst{<}$. We get the result for every such $D$ and, as they cover the domain of definition of \expr{} we are done.
		\item $\expr{}:=\viteratek{}(\exprind{1})(\exprind{0})(\text{n})$: we prove the result by induction on $\text{n}$. Assume that \exprind{0},\exprind{1} are values for clarity. If $\text{n}=0$ then both sides reduces to respectively $\exprind{0}$ and $\difftransalt{\exprind{0}}$. We conclude by Lemma \ref{lem:op_denot_sem} and the I.H. on \exprind{0}. If $\text{n}>0$ then both sides reduce to respectively \exprind{1} (LHS(n-1)) (n-1) and \difftransalt{\exprind{1}} (RHS(n-1)) (n-1,0). By I.H. we have $LHS(n-1)~\logR~RHS(n-1)$. We also have $(n-1)~\logR~ (n-1,0)$. We can thus conclude using the I.H. on \exprind{1} and Lemma \ref{lem:op_denot_sem}.  
 	\end{itemize}
\end{proof}

As always with logical relations arguments, we obtain the correctness theorem as a corollary of the fundamental lemma:

\begin{theorem}[Correctness of the AD transform, limited]
	For any term $\Theta\vdash \expr{}:\typedouble$ as above, the term $\difftranstypealt{\Theta}\vdash \difftransalt{\expr{}}:\difftranstypealt{\typedouble}$ computes the partial derivatives of $\expr{}$, in the sense that for any variable $\text{y}$ we have:
\[\partial_\text{y}\sem{\Theta\vdash \expr{}:\typedouble}=\]
\[\sem{\Theta\vdash \vsndk{} (\difftransalt{\expr{}})[ \text{dual}_\text{y}(\text{x}_1)/\difftransalt{\text{x}_1},\ldots,\text{dual}_\text{y}(\text{x}_n)/\difftransalt{\text{x}_n}]:\difftranstypealt{\typedouble}}\]
\end{theorem}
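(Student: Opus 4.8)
The plan is to obtain this theorem as an immediate corollary of the Fundamental Lemma, in the standard way for logical-relations arguments. First I would instantiate the Fundamental Lemma with the empty environment for $\Gamma$ and the given judgement $\Theta\vdash \expr{}:\typedouble$, which yields $\expr{}~\logR^{\emptyset,\Theta}_\typedouble~\difftransalt{\expr{}}$. Since the only substitution with support $\emptyset$ is the empty substitution, which acts as the identity on terms and trivially satisfies $\gamma~\logR^{\emptyset}_\Theta~\gamma$, the defining implication of the extended relation $\logR^{\emptyset,\Theta}_\tau$ collapses to $\expr{}~\logR^{\Theta}_\typedouble~\difftransalt{\expr{}}$.

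It then remains only to unfold the definition of $\logR^{\Theta}_\typedouble$ at base type. By construction this predicate bundles three facts: the well-typedness $\Theta\vdash \expr{}:\typedouble$ and $\difftranstypealt{\Theta}\vdash \difftransalt{\expr{}}:\difftranstypealt{\typedouble}$ (already in hand, and also a consequence of Lemma~\ref{lem:well_typedness_D}); the primal-soundness clause stating $\sem{\Theta\vdash\vfstk{}(\difftransalt{\expr{}})[\text{dual}_\text{y}(\text{x}_i)/\difftransalt{\text{x}_i}]_{1\leq i\leq n}:\typedouble}=\sem{\Theta\vdash\expr{}:\typedouble}$ for every $\text{y}$; and the tangent-soundness clause stating $\sem{\Theta\vdash\vsndk{}(\difftransalt{\expr{}})[\text{dual}_\text{y}(\text{x}_i)/\difftransalt{\text{x}_i}]_{1\leq i\leq n}:\typedouble}=\partial_\text{y}\sem{\Theta\vdash\expr{}:\typedouble}$ for every $\text{y}$. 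The last clause, read as the desired equation, is verbatim the conclusion of the theorem, so no further argument is needed.

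Consequently the proof has essentially no obstacle of its own: all the difficulty is pushed into the Fundamental Lemma and, upstream of it, into the compatibility of $\logR$ with the denotational semantics and into Proposition~\ref{lem:op_denot_sem}. The points worth flagging are really matters of interpretation rather than of proof. First, this is why the theorem is only stated as ``limited'': the logical relation is tight enough to convert relatedness into a genuine partial-derivative identity only at type $\typedouble$ (and, through the product and array clauses, at first-order tensor types); at arrow types $\logR$ merely propagates relatedness, so no derivative identity can be read off for terms of function type, and the ambient context $\Theta$ must consist of real-valued variables so that $\text{dual}_\text{y}$ is well-defined on each $\text{x}_i$. Second, although the primal-soundness clause of $\logR_\typedouble$ is not used in deriving the statement, it is exactly what makes the induction of the Fundamental Lemma close for constructs such as conditionals and \viteratek{}, so it cannot be dropped from the relation.
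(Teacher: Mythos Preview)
Your proposal is correct and matches the paper's own approach: the paper simply states that ``as always with logical relations arguments, we obtain the correctness theorem as a corollary of the fundamental lemma,'' and you have spelled out precisely that derivation (instantiate with $\Gamma=\emptyset$, collapse the trivial substitution, and read off the $\vsndk$ clause of $\logR^{\Theta}_{\typedouble}$). Your additional commentary on why the theorem is ``limited'' and why the primal clause is needed in the induction is accurate and goes a bit beyond what the paper makes explicit.
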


From the theorem it is easy to deduce correctness for products and arrays, by post composing with a projection or !get!, and using the previous theorem.

\begin{corollary}[Correctness of the AD transform, full]
	\begin{itemize}
		\item $\sem{\diff{} f~ \text{x}}=(\partial_\text{x} \sem{f})(\text{x})$
		\item $\sem{\grad{} f~ \val}=\nabla_\val \sem{f}$ where $\nabla_\val$ is the gradient at $\val$
		\item $\sem{\jacob{} f~ \val}=J_\val\sem{f}$ where $J_\val$ is the jacobian at $\val$.
	\end{itemize}
\end{corollary}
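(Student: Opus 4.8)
The plan is to obtain all three equalities directly from the ``limited'' correctness theorem above, together with the definitions of \diffk{}, \gradk{} and \jacobk{} in Figure~\ref{fig:diff_trans_api}, Proposition~\ref{lem:functoriality_D} (compatibility of \difftransalt{} with substitution), and Proposition~\ref{lem:op_denot_sem} (soundness of the denotational semantics with respect to reduction). The essential work is a single change of representation: the limited theorem is phrased for an environment $\Theta=\text{x}_1:\typedouble{}\dts\text{x}_n:\typedouble{}$ of scalar variables and the dual substitutions $\text{dual}_\text{y}(\text{x}_i)$, whereas \gradk{} and \jacobk{} pass a single \typevector{} value assembled with a $\cod{vectorZip}/\cod{vectorHot}$ gadget; the first step is to check that these two presentations agree.

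For \diffk{}, we may assume $f$ reduces to $\vabs{\text{z}}{\exprind{0}}$ with $\text{z}:\typedouble{}\vdash\exprind{0}:\typedouble{}$. Unfolding Figure~\ref{fig:diff_trans_api}, $\diffk{}~f~\text{x}$ reduces to $\difftransalt{f}\,\adpair{\text{x}}{1}$ and then, by a $\beta$-step, to $\difftransalt{\exprind{0}}[\adpair{\text{x}}{1}/\diffvarprefix{z}]$. Applying the limited theorem to $\text{z}:\typedouble{}\vdash\exprind{0}:\typedouble{}$ with the single variable $\text{y}:=\text{z}$ (so $\text{dual}_\text{z}(\text{z})=\adpair{\text{z}}{1}$), and using Proposition~\ref{lem:op_denot_sem} to move freely between a term and its reducts, yields $\sem{\vsndk{}\,(\diffk{}~f~\text{x})}=\partial\sem{f}$ as partial functions of $\text{x}$; evaluating at a point gives $(\partial\sem{f})(\text{x})$, while the $\vfstk{}$-component is $\sem{f}(\text{x})$ by the first clause of $\logR^{\Theta}_{\typedouble}$ (equivalently, because \difftransalt{} preserves the original computation).

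For \gradk{} and \jacobk{} --- whose bodies coincide, only the result type differing --- fix an input vector value $\val=[a_1\dts a_n]$, set $\Theta:=\text{x}_1:\typedouble{}\dts\text{x}_n:\typedouble{}$, and let $g$ be the body of $f$ applied to $[\text{x}_1\dts\text{x}_n]$, which has type \typedouble{} for \gradk{} and \typevector{} for \jacobk{}. The key observation is that $\cod{vectorZip}~\val~(\cod{vectorHot}~(\vlengthk~\val)~\text{i})$ denotes the tuple-vector whose $k$-th entry is $\adpair{a_k}{1}$ when $k$ is the hot position $\text{i}$ and $\adpair{a_k}{0}$ otherwise; using Proposition~\ref{lem:functoriality_D} to push this substitution through \difftransalt{} (together with the fact, from rules (D-Vector) and (D-Var), that \difftransalt{} sends a vector of variables to the vector of the corresponding dual variables), the $\text{i}$-th entry of $\gradk{}~f~\val$, resp.\ $\jacobk{}~f~\val$, is seen to denote $\sem{\difftransalt{g}[\,\text{dual}_{\text{x}_\text{i}}(\text{x}_k)/\difftransalt{\text{x}_k}\,]_{1\le k\le n}}$. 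By the limited theorem its $\vsndk{}$-component is $\partial_{\text{x}_\text{i}}\sem{g}$ --- a scalar for \gradk{}, a vector for \jacobk{} (here also reading $\difftransalt{f}$ off componentwise on the output via the array clause of $\logR$). Collecting these over $\text{i}=1\dts n$ gives $\nabla_\val\sem{f}$ in the first case, and the Jacobian $J_\val\sem{f}$ in the second, up to the row/column convention built into the stated type \typematrixd{}.

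The step I expect to be the main obstacle is exactly this translation between the ``single vector-valued argument'' view and the ``environment of $n$ scalar variables'' view demanded by the logical-relations development: one must verify with care that feeding $\cod{vectorZip}~\val~(\cod{vectorHot}~n~\text{i})$ into $\difftransalt{f}$ produces precisely the denotation on the right-hand side of the limited theorem --- this is where Proposition~\ref{lem:functoriality_D} does the real work --- and track length and index information (including the off-by-one between the $0$-indexed $\cod{vectorHot}$ and the $1$-indexed $\Theta$) so that the enclosing $\vbuildk{}$ produces exactly $n$ entries. One also has to settle the mild type mismatch in the statement of the first item, reading $(\partial_\text{x}\sem{f})(\text{x})$ as the $\vsndk{}$-component of the returned \typedoubled{} pair (the $\vfstk{}$-component recording the value). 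Everything else is routine unfolding combined with the already-established soundness and substitution lemmas.
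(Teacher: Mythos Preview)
Your proposal is correct and considerably more detailed than what the paper itself provides. The paper does not give a proof of this corollary at all: it merely remarks, in one sentence preceding the statement, that ``it is easy to deduce correctness for products and arrays, by post composing with a projection or \vgetk{}, and using the previous theorem.'' That hint addresses only the \emph{output} side (how to go from the limited theorem at return type \typedouble{} to vector- or pair-valued returns, as needed for \jacobk{}); it says nothing about the \emph{input} side, i.e., how to pass from the vector argument used by \gradk{}/\jacobk{} to the scalar environment $\Theta=\text{x}_1:\typedouble{}\dts\text{x}_n:\typedouble{}$ required by the limited theorem.

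Your outline handles both. The input-side translation you spell out---introducing $g:=f\,[\text{x}_1\dts\text{x}_n]$, using Proposition~\ref{lem:functoriality_D} together with (D-Vector)/(D-Var) to identify $\difftransalt{g}$ with $\difftransalt{f}$ applied to $[\diffvarprefix{\text{x}_1}\dts\diffvarprefix{\text{x}_n}]$, and then matching the dual substitution $\text{dual}_{\text{x}_i}$ against what $\cod{vectorZip}\ \val\ (\cod{vectorHot}\ n\ i)$ computes---is exactly the step the paper leaves implicit. Your treatment of the output side (componentwise via the array clause of $\logR$) is what the paper's hint gestures at. The caveats you flag (indexing conventions, reading the first item as the \vsndk{}-component of the returned \typedoubled{}) are apt.
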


 \section{Experimental Results}
\label{sec:exp}
In this section, we show how \system performs in practice.
We show the performance of the differentiated code for micro benchmarks as well as two real-world machine learning and computer vision applications.

\noindent \textbf{Experimental Setup.} We have performed the experiments using an iMac machine with
an Intel Core i5 CPU running at 2.7GHz, 32GB of DDR3 RAM at
1333Mhz.
The operating system is OS X 10.13.1. We use CLang 900.0.39.2 for compiling the generated C code, and Python 2.7.12 
for running the Python code, and Mono 5.8.1 for compiling and running F\# programs.
Furthermore, we use DiffSharp 0.6.3, Tapenade 3.14 (its web interface), Theano 0.9.0, TensorFlow 1.13, and Futhark 0.10.2.
For all the experiments, we compute the mean time of ten runs, and the time out is set to twenty minutes.
For the TensorFlow experiments, we do not use the XLA backend and we do not consider the run time of the first round in 
order to remove the overhead of its graph compilation.

Throughout this section, we compare the performance of the following alternatives:
\def\apara#1{\item #1}
\begin{itemize}
    \apara{DiffSharp (R):} The reverse-mode AD of the DiffSharp library~\cite{baydin2015diffsharp}. 
    \apara{DiffSharp (F):} The forward-mode AD of the DiffSharp library. 
    \apara{Tapenade (R):} The reverse-mode AD of the Tapenade framework~\cite{tapenade}.
    \apara{Tapenade (F):} The forward-mode AD of the Tapenade framework.
    \apara{Theano:} The AD offered by the Theano framework~\cite{bergstra2010theano}, which is a combination of the reverse-mode
    AD with symbolic differentiation.
    \apara{TensorFlow:} The AD offered by the Tensorflow framework~\cite{henriksen2017futhark}, which is based on the reverse-mode
    AD.
    \apara{Futhark:} A forward-mode AD implementation on top of the Futhark programming language~\cite{henriksen2017futhark}.
    \apara{\system:} The code generated by \system with different sets of optimisations 
    (e.g., loop fusion (\textit{LF}), loop-invariant code motion (\textit{LICM}), loop normalisation (\textit{LN}), and Destination-Passing Style (\textit{DPS})~\cite{dps_fhpc} 
    for stack-discipline memory management).
\end{itemize}

\noindent \textbf{Micro Benchmarks} which consist of the following vector expressions: 1) gradient of dot product of two vectors with respect to the first vector (which is a Jacobian matrix with a single row), 2) gradient of the maximum value of a vector with respect to the input vector (which is a Jacobian matrix with a single row), 3) gradient of addition of two vectors with respect to the first vector (which is a Jacobian matrix), and 4) gradient of the multiplication of a vector with a scalar value with respect to the scalar value (which is a Jacobian matrix with a single column). 

\begin{figure}[t!]
        \centering
    	\includegraphics[width=0.48\textwidth]{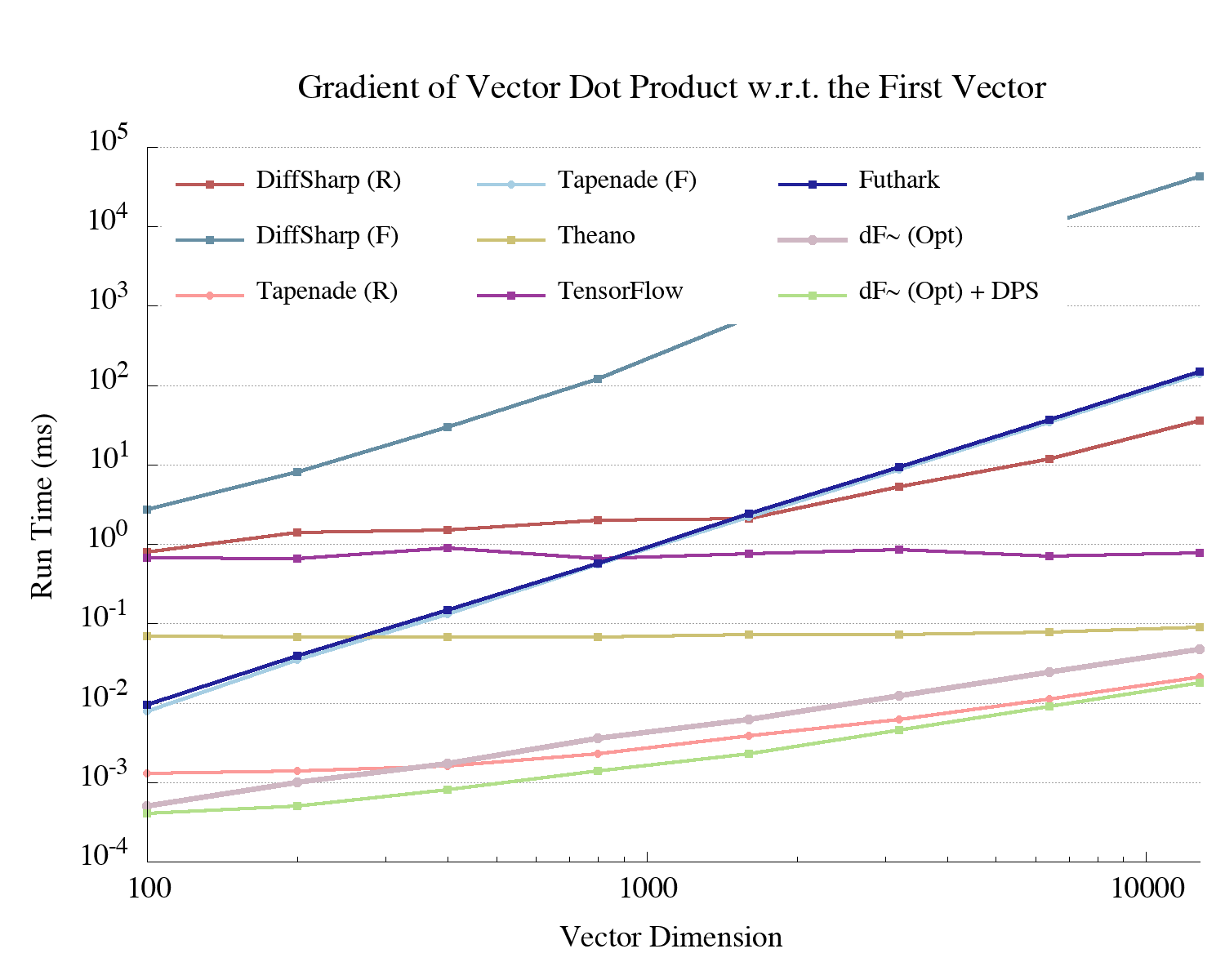}~\includegraphics[width=0.48\textwidth]{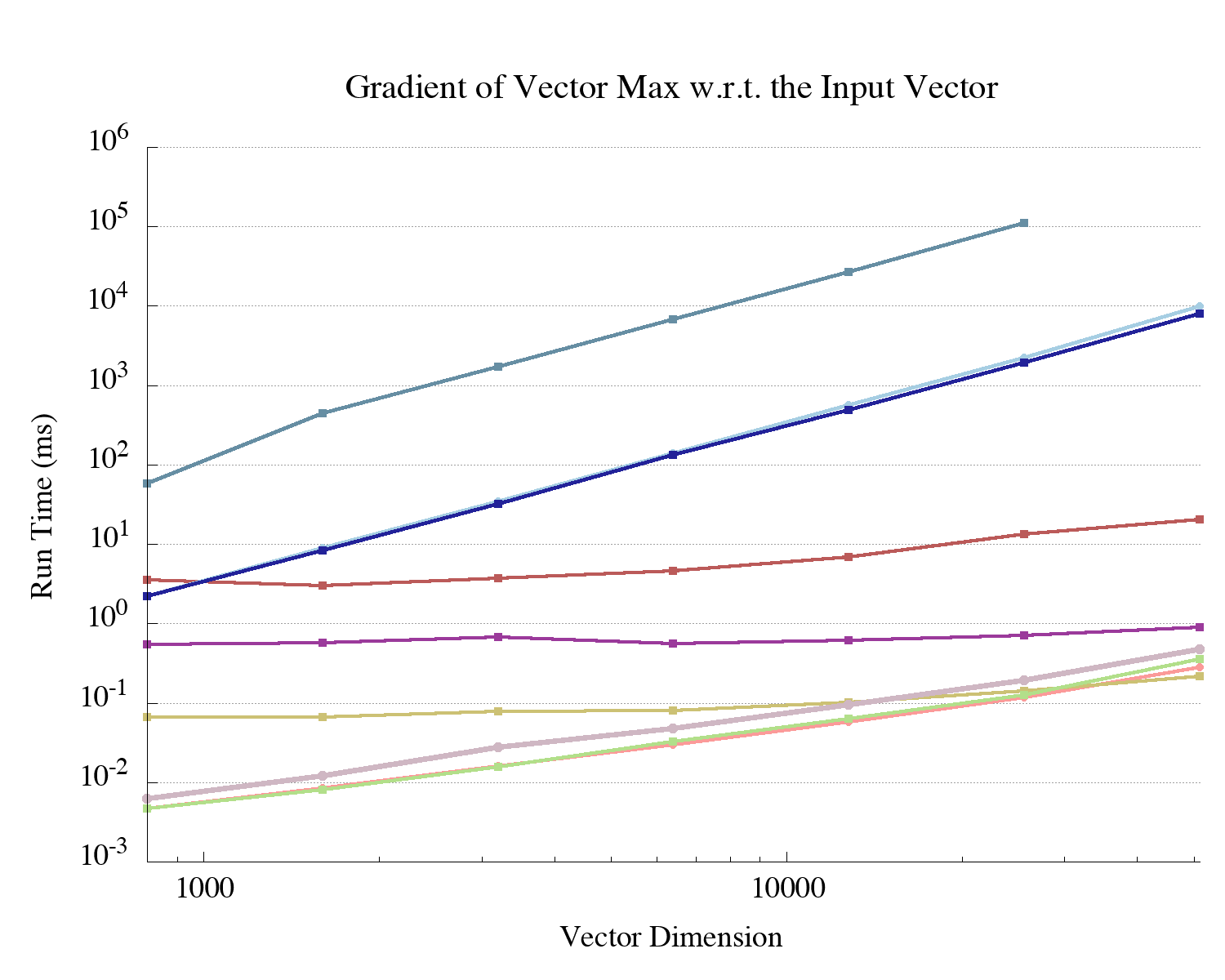}\\
        \includegraphics[width=0.48\textwidth]{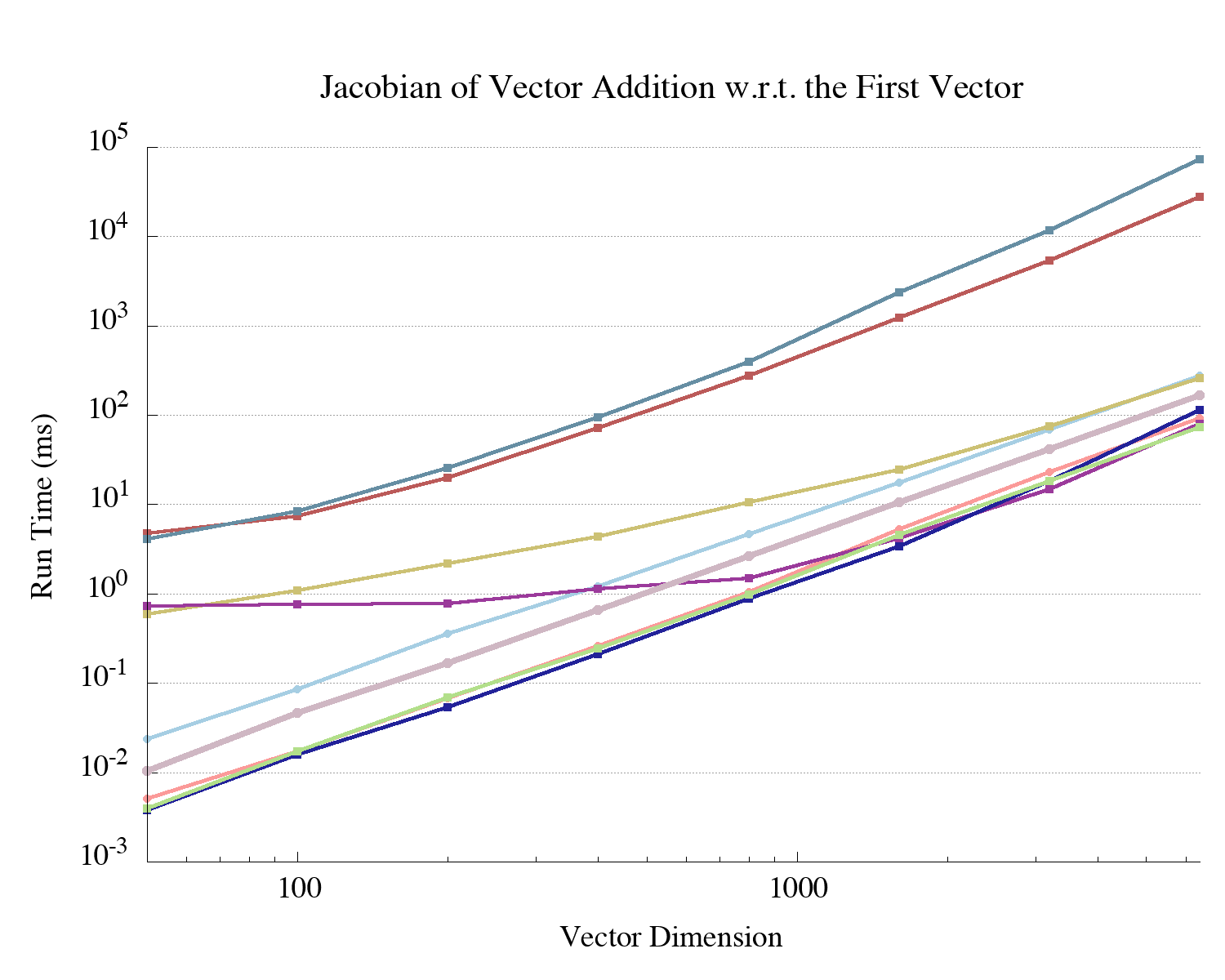}~\includegraphics[width=0.48\textwidth]{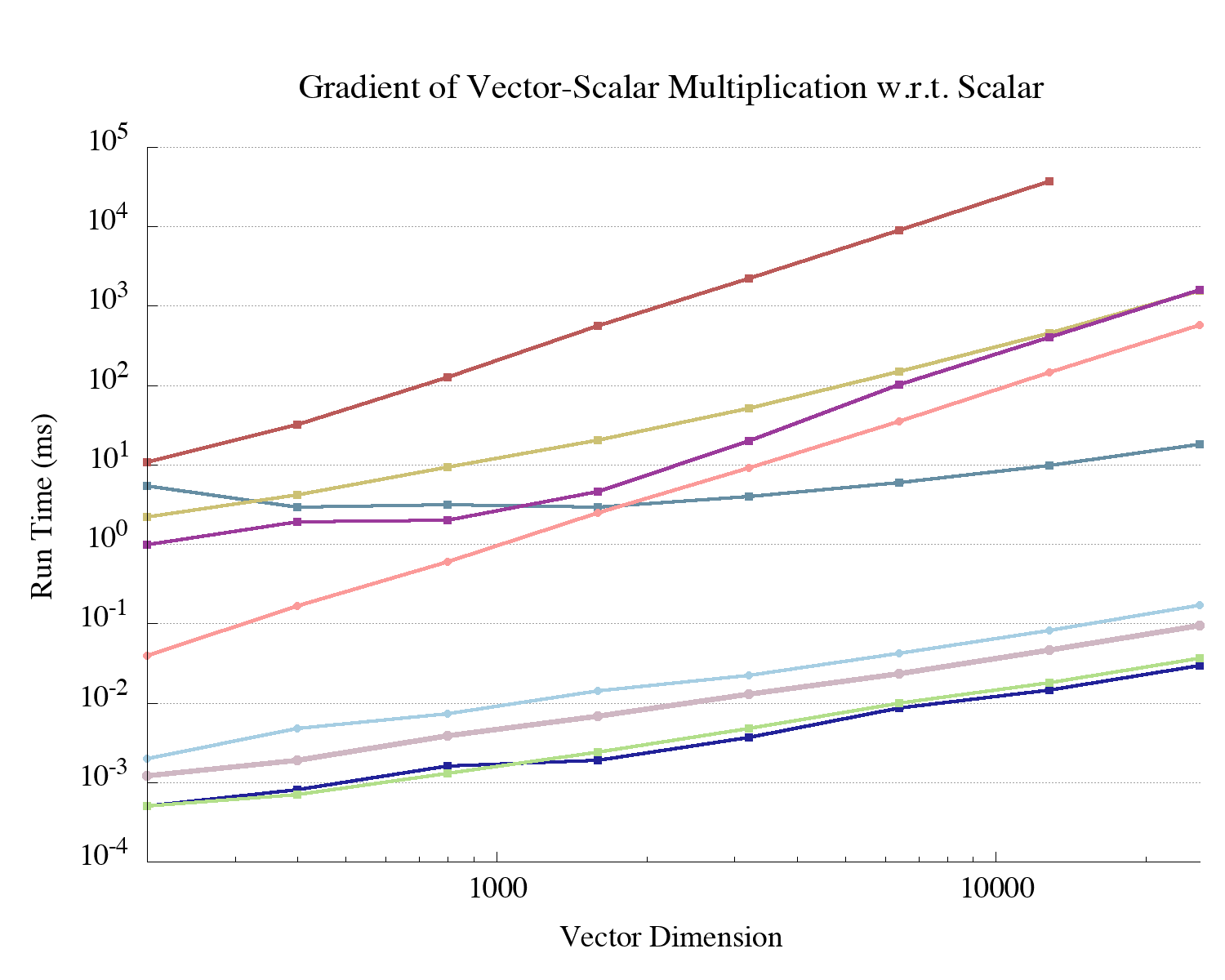}
        \caption{Performance results for Micro Benchmarks.}
        \label{micro_perf}
\end{figure}

Figure~\ref{micro_perf} shows the performance results for the mentioned micro benchmarks. 
In all cases, DiffSharp, Theano, and TensorFlow have performance overhead, which is reduced 
for larger data sizes thanks to the smaller interpretation overhead. 
The Futhark compiler generates C code after applying various types of optimisations such as loop fusion, 
code motion, dead-code elimination, double buffering (for loops), and standard optimisations like copy-propagation, 
constant-folding, and CSE among many other optimisations.
In all the experiments, we have applied the best of possible optimisations to the programs generated
by \system{} (denoted by \system{}~(Opt)). 
The performance of these programs is improved further when the generated C code uses DPS for stack-discipline memory 
management (denoted by \system{}~(Opt) + DPS). 

As in the first two cases the Jacobian matrix is a row vector, reverse-mode AD computes the whole Jacobian matrix in a single backward pass. 
However, forward-mode AD needs to iterate over each column to compute the corresponding derivative value.
Hence, Tapenade (F) and DiffSharp (F) have asymptotic performance difference.
Even though, the Futhark compiler implements many optimisations mentioned above, the forward-mode AD is asymptotically worse than the reverse-mode AD. 
On the other hand, DiffSharp (R), Theano, and TensorFlow show performance overhead which is reduced for larger data sizes.
In addition, Tapenade (R) and the code generated by \system{} (\system{}~(Opt)) show similar performance. This shows that
the optimisations explained in Section~\ref{sec:fsmooth_trans} have succesfully made the forward-mode AD code of \system{} as efficient
as the reverse-mode AD code. 
Finally, the performance of \system{} is improved further when the generated C code uses DPS for stack-discipline memory management. 

For the case of the addition of two vectors, as the Jacobian matrix is a square matrix, reverse-mode AD and forward-mode
AD show comparable performance. Both AD modes of DiffSharp are from two to three orders of magnitude slower than \system{}.
Both Theano and TensorFlow have performance overhead, which again reduces for bigger data sizes.
Specifically, TensorFlow becomes as good as Tapenade (R), Futhark, and \system{}. 

Finally, for the last case, as the Jacobian matrix is a column vector, the forward mode AD computes the whole Jacobian matrix in a single forward pass.
However, the reverse mode AD requires traversing over each row to compute the corresponding partial derivative values.
Hence, as opposed to computing the gradient for the dot product and the maximum element of a vector,
the systems based on the forward-mode AD show asymptotically better performance. 

\noindent \textbf{Non-Negative Matrix Factorization (NNMF)} is a useful tool which has many applications in various fields ranging from document clustering, recommendation systems, signal processing, to computer vision. 
For instance, in~\cite{liu2010distributed}, the authors study the NNMF of Web dyadic data represented as the matrix $A$. 
Dyadic data contains rich information about the interactions between the two participating sets. It is useful for a broad range of practical applications including Web search, Internet monetization, and social media content~\cite{liu2010distributed}. 
For example the (query, clicked URL) data is used in query clustering~\cite{2002:QCU:503104.503108}, query suggestions~\cite{Baeza-Yates:2004:QRU:2146449.2146527} and improving search relevance~\cite{Agichtein:2006:IWS:1148170.1148177}. 
Matrix factorization is a commonly used approach to understanding the latent structure of the observed matrix for various applications~\cite{Berry06algorithmsand, Sra06nonnegativematrix}. The authors present a probabilistic NNMF framework for a variety of Web dyadic data that conforms to different probabilistic distributions. For instance, an Exponential distribution is used to model Web lifetime dyadic data, e.g., user dwell time, and similarly the Poisson distribution is used to model count dyadic data, e.g., click counts.

\begin{figure}[t!]
        \centering
    	\includegraphics[width=0.48\textwidth]{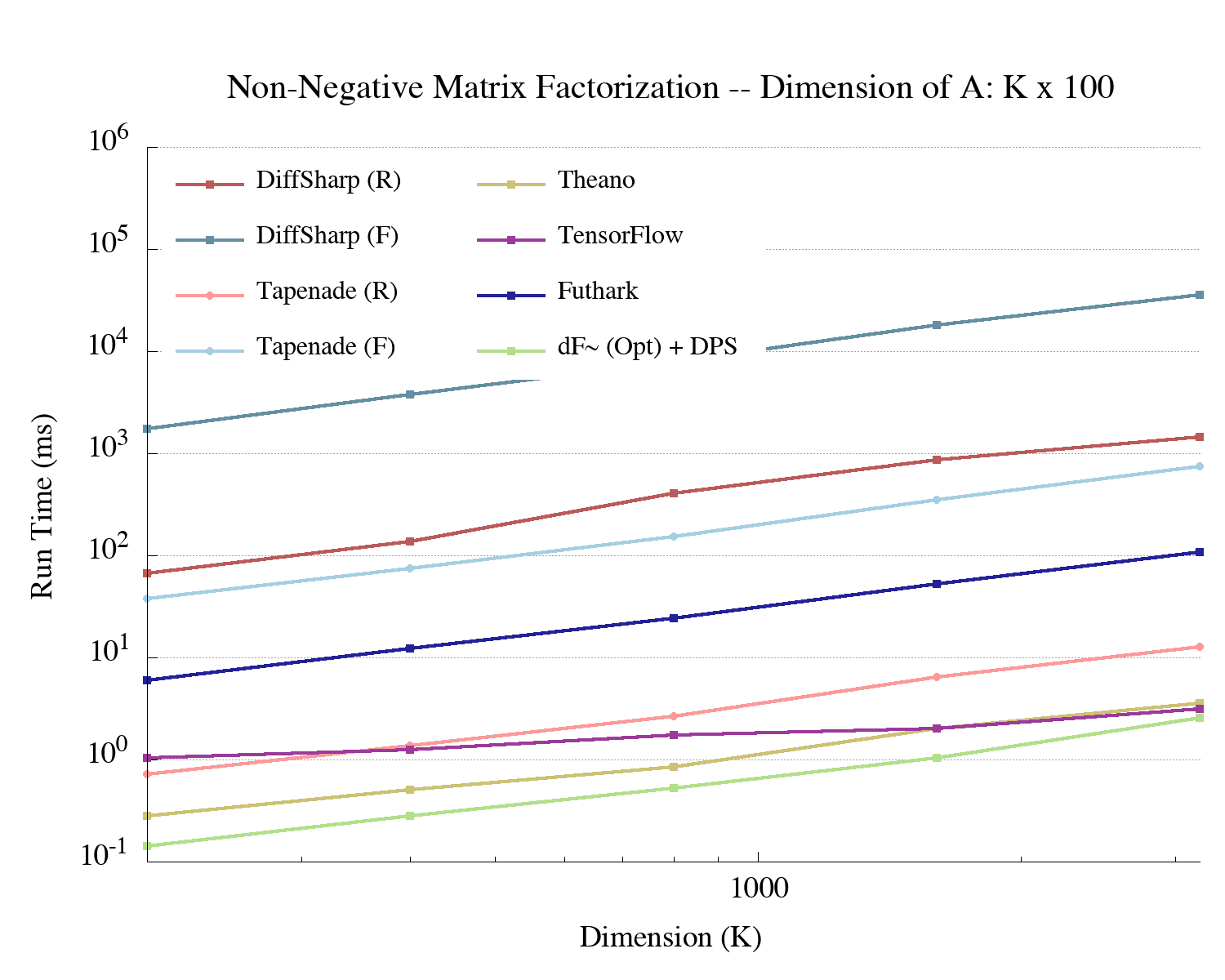}~\includegraphics[width=0.48\textwidth]{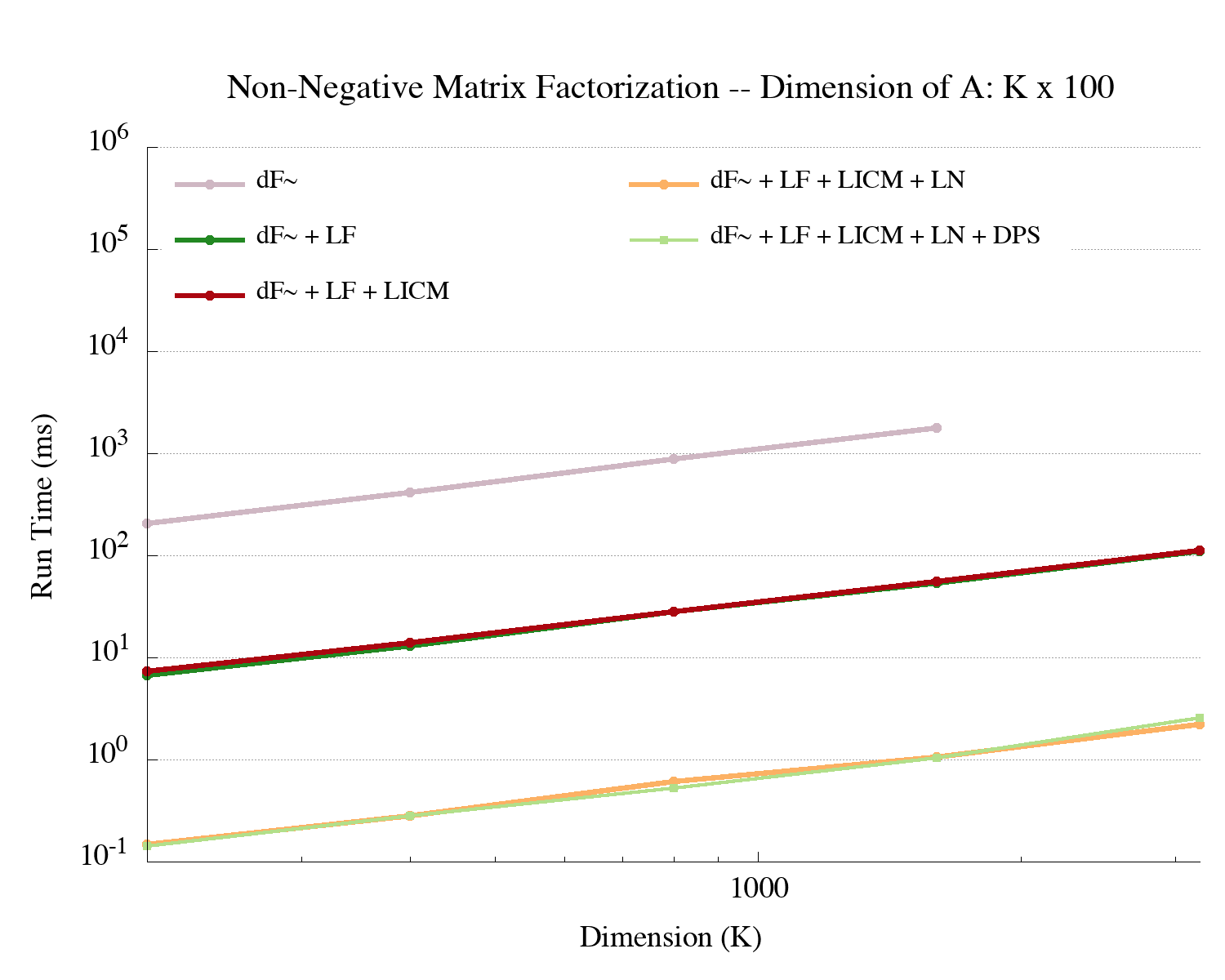}\\
        \includegraphics[width=0.48\textwidth]{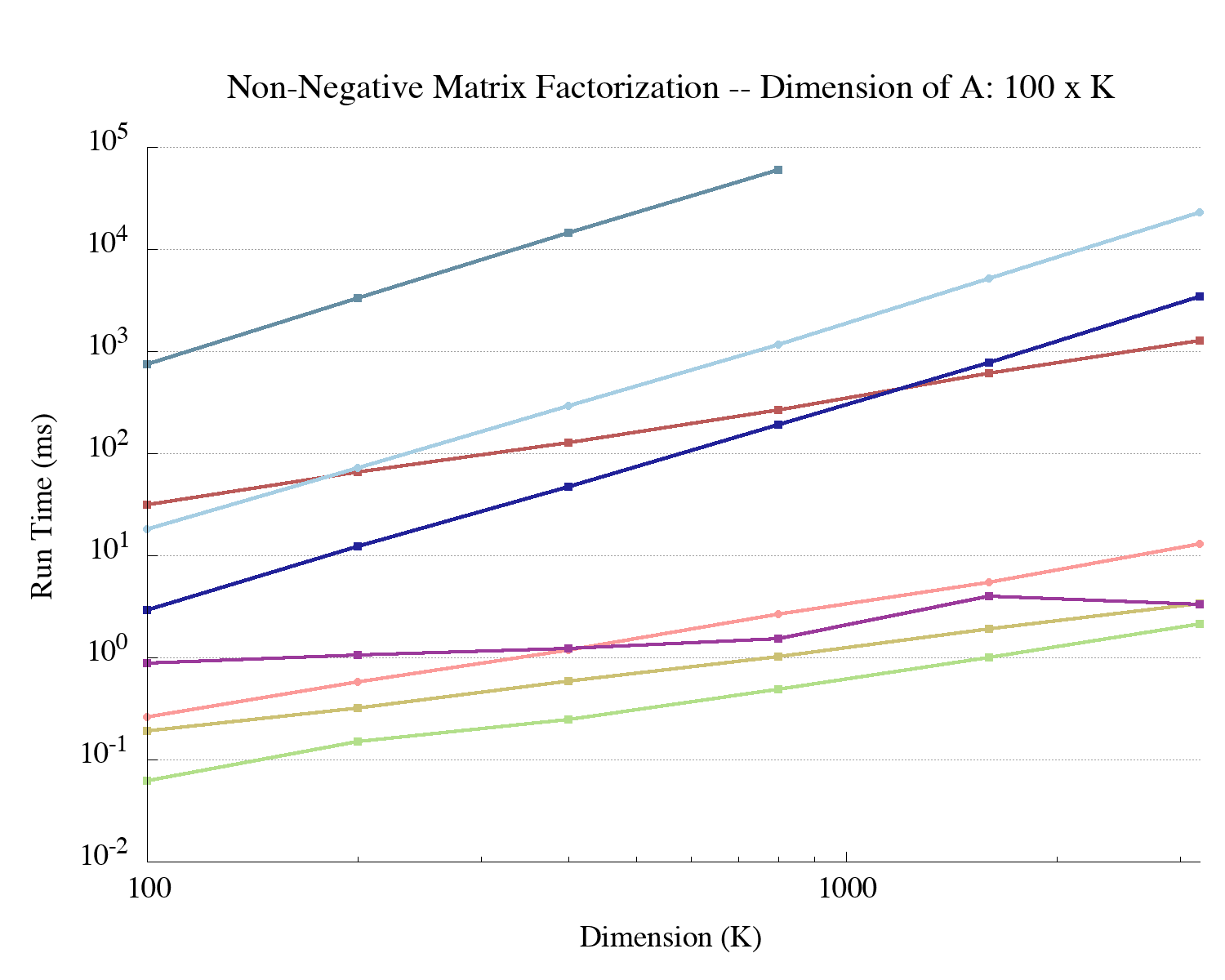}~\includegraphics[width=0.48\textwidth]{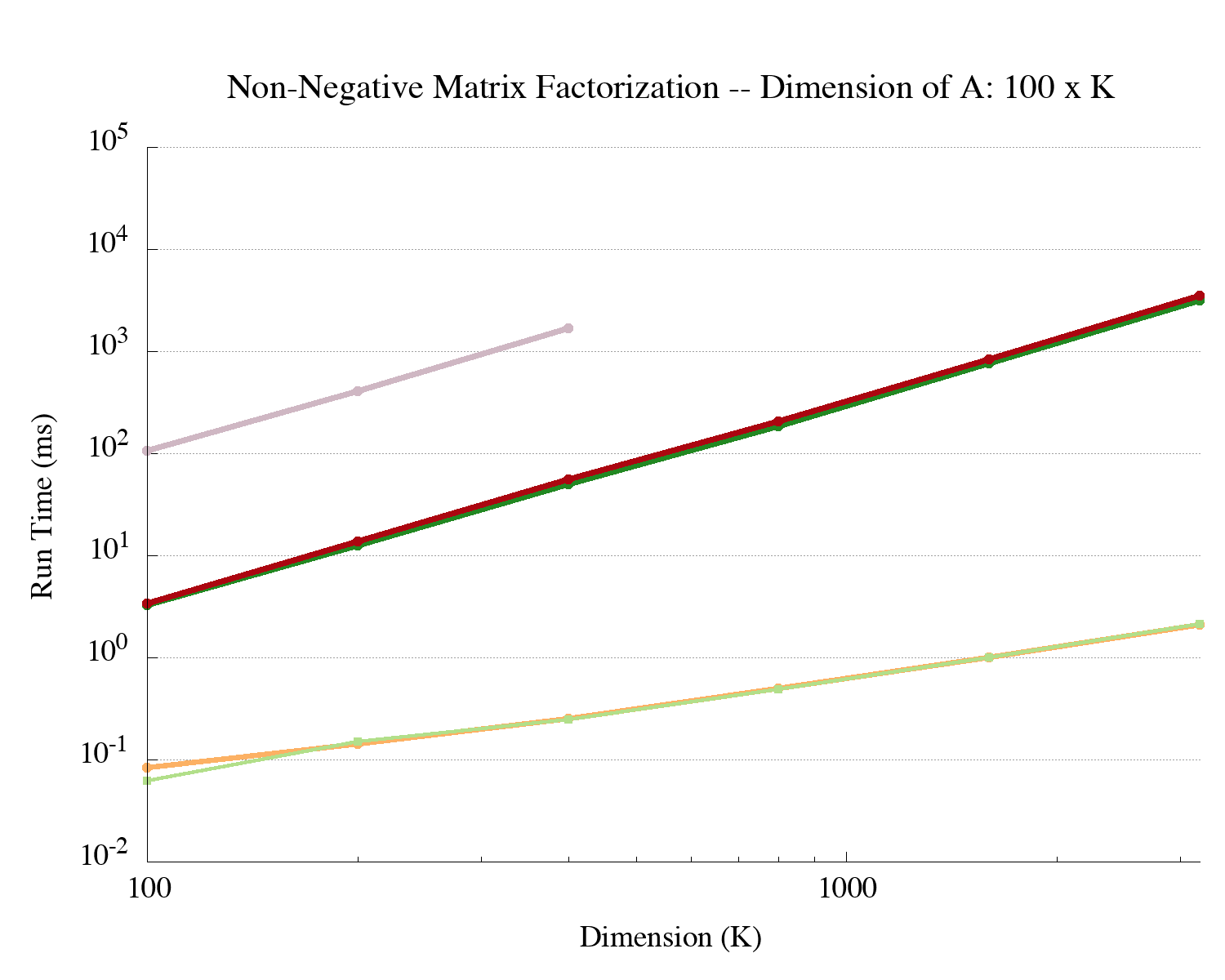}
        \caption{Performance results for NNMF.}
        \label{nmf_perf}
\end{figure}

The iterative algorithm to find $W$ and $H$ depends on the form of the assumed underlying distribution. 
In particular the update formula for gradient descent are derived by computing the gradient of the negative $\log$ of the likelihood function. 
For example, the negative $\log$ of the exponential distribution is represented as follows:

$$\mathcal{D}\big(A||\widetilde{A}\big)=\sum\limits_{(i,j)}\Bigg(\log\big(\widetilde{A}_{i,j}\big) + \frac{A_{i, j}}{\widetilde{A}_{i,j}}\Bigg) ,\hspace{1cm} \widetilde{A} = WH$$

\noindent The update formulas are derived manually, and for each new distribution it is the responsibility of the user to undertake the error prone and laborious task of deriving, optimizing, and implementing the update rules. 
\system automatically derives the gradient of the negative $\log$ of the likelihood function for the exponential distribution. After performing optimizations, \system produces an expression which is equivalent to the following update formula, which is manually derived by hand in~\cite{liu2010distributed}:

$$\frac{\partial \mathcal{D}}{\partial H} = W^T\Bigg(\frac{1}{WH}-\frac{A}{\big(WH\big)^2}\Bigg)$$

\noindent Figure~\ref{nmf_perf} shows the performance results of executing the derived update rule on DiffSharp, Tapenade, Theano, and \system.
For all the experiments, we consider factorizing the matrix $A$ into two vectors $W$ and $H$ (represented as $u$ and $v^T$, respectively).
Comparing Tapenade and \system, we observe that the reverse-mode AD of Tapenade behaves similarly to \system.
This shows that \system successfully generates efficient code for this case, which is an ideal case for the reverse-mode AD (the loss function is a scalar valued function, which should compute the gradient with respect to all elements of the input vector).
Finally, as the dimension of the vectors increases, Theano and TensorFlow converge to the same performance as \system and reverse-mode AD of Tapenade, because of two reasons. 
First, the overhead of invoking C functions from Python becomes negligible as the size of the vector increases. 
Second, Theano and TensorFlow invoke BLAS routines which are highly tuned and vectorised implementations for vector operations.

By comparing different configurations of \system, we observe the following three points.
First, the loop fusion improves the performance by around one order of magnitude. 
The generated C code after applying these optimisations is as efficient as the code generated by Futhark.
Second, the loop normalisation (\textit{LN}) optimisations
(which are the ones shown in Figure~\ref{fig:fsmooth_opt_iteration}) have the most impact by asymptotically improving
the performance from two to three orders of magnitude. As explained in Section~\ref{sec:fsmooth_trans}, thse transformation rules
should be combined with loop fission to become effective. 
Finally, the DPS representation slightly improves the performance
by using the stack-allocation descipline for memory management.

\noindent \textbf{Bundle Adjustment} \cite{triggs1999bundle,agarwal2010bundle,zach2014robust} is a computer vision problem, where the goal is to optimise several parameters in order to have an accurate estimate of the projection of a 3D point by a camera. 
This is achieved by minimizing an objective function representing the reprojection error. 

For the experiments, we compute the Jacobian matrix of the Project function in Bundle Adjustment. 
For a 3D point $X\in \mathbb{R}^3$ and a camera with rotation parameter $r \in \mathbb{R}^3$, center position $C \in \mathbb{R}^3$, focal index $f \in \mathbb{R}$, principal point $x_0 \in \mathbb{R}^2$, and radical distortion $k \in \mathbb{R}^2$, the Project function computes the projected point as follows:
\\ \\
\begin{tabular}{r @{\hskip3pt} c @{\hskip3pt} l}
\setlength\tabcolsep{.5pt}
$\text{project}\big(r, C, f, x_0, k, X\big)$
& = &
$\text{distort}\big(k, \text{p2e}\big(\text{rodrigues}\big(r, X - C\big)\big)\big)f + x_0$
\\
$\text{distort}\big(k, x\big)$
&=&
$x\big(1 + k_1||x||^2+k_2||x||^4\big)$
\\
$\text{p2e}\big(X\big) $
&=&
$X_{1..2} \div X_3$
\\
$\text{rodrigues}\big(r, X\big)$
&=&
$X \cos\big(\theta\big) + \big(v \times X\big)\sin\big(\theta\big) + v\big(v^TX\big)\big(1 - \cos\big(\theta\big)\big),$
\\ & &$\theta=||r||, v=\frac{r}{||r||}$
\end{tabular} 
\\

\noindent In order to better demonstrate the expressibility and conciseness of \ladsl{}, Figure~\ref{fig:ba_code} shows the implementation of these functions in this language.

Consider having $N$ 3D points and a vector of $11$ camera parameters. We are interested in computing a Jacobian matrix with $2N$ rows (the project function produces a 2D output for each input data point) and $11$ columns.

Figure~\ref{ba_perf} shows the performance results for computing the mentioned Jacobian matrix.
For this application, as the Jacobian matrix has more rows than columns, the forward-mode AD
outperforms the reverse-mode. Thus, we observe that the forward-mode AD of all systems
outperforms their reverse-mode.
\system outperforms all its competitors by up to three orders of magnitude.
More specifically, \system outperforms both forward and reverse mode of Tapenade. 
We observe that as opposed to the NNMF application, the loop normalisation rules have
a negligible impact for this application. However, the loop-invariant code motion optimisations result in
up to two times performance improvement by hoisting the shared computation outside 
the loop. In addition, DPS representation leads to an additional 10\% performance improvement.
Finally, Futhark outperforms the generated C code of \system{} mainly thanks to a better
data layout representation for tensors, which we plan to integrate into \system{} in the future. 





\begin{figure}[t!]
        \centering
    	\includegraphics[width=0.48\textwidth]{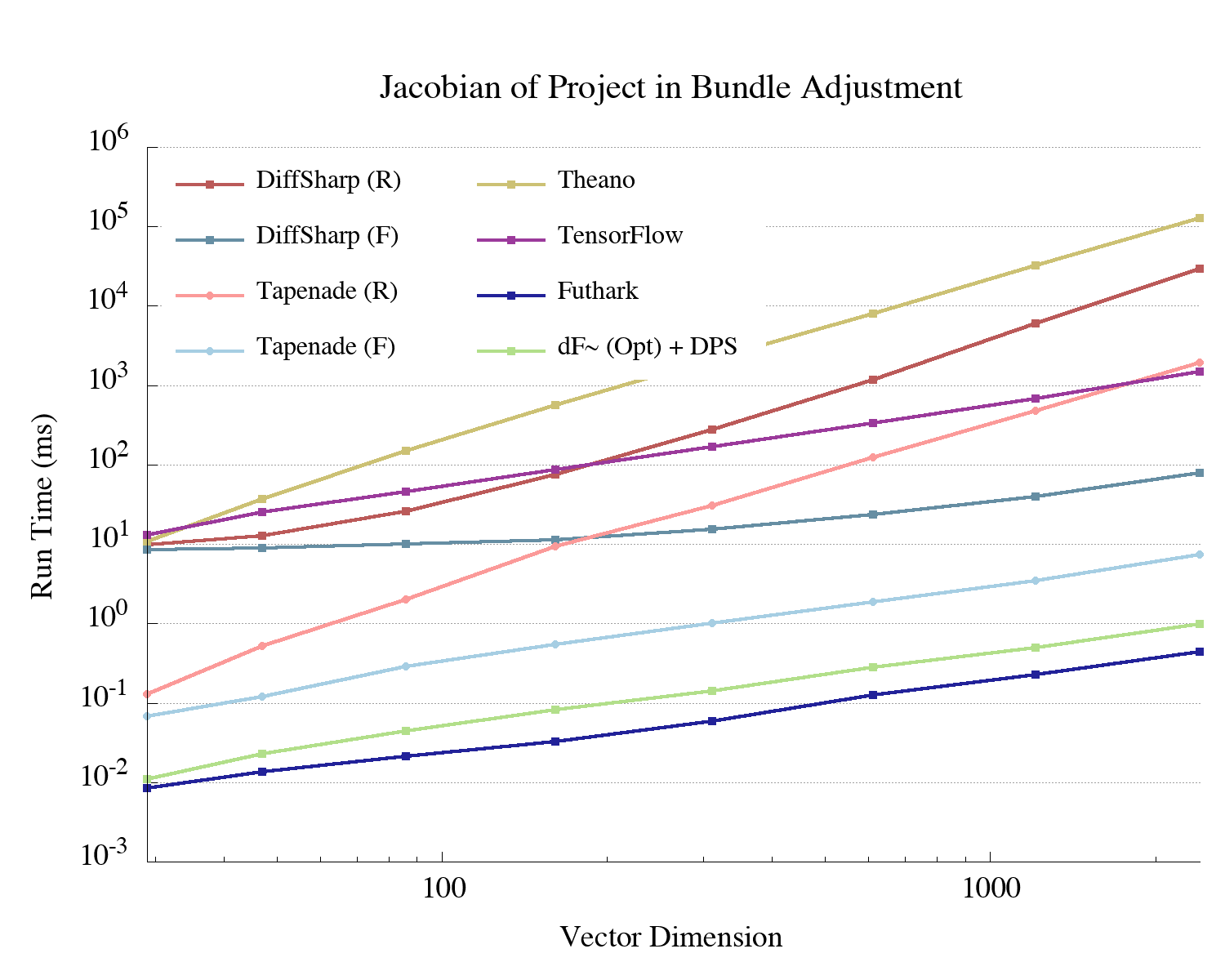}~\includegraphics[width=0.48\textwidth]{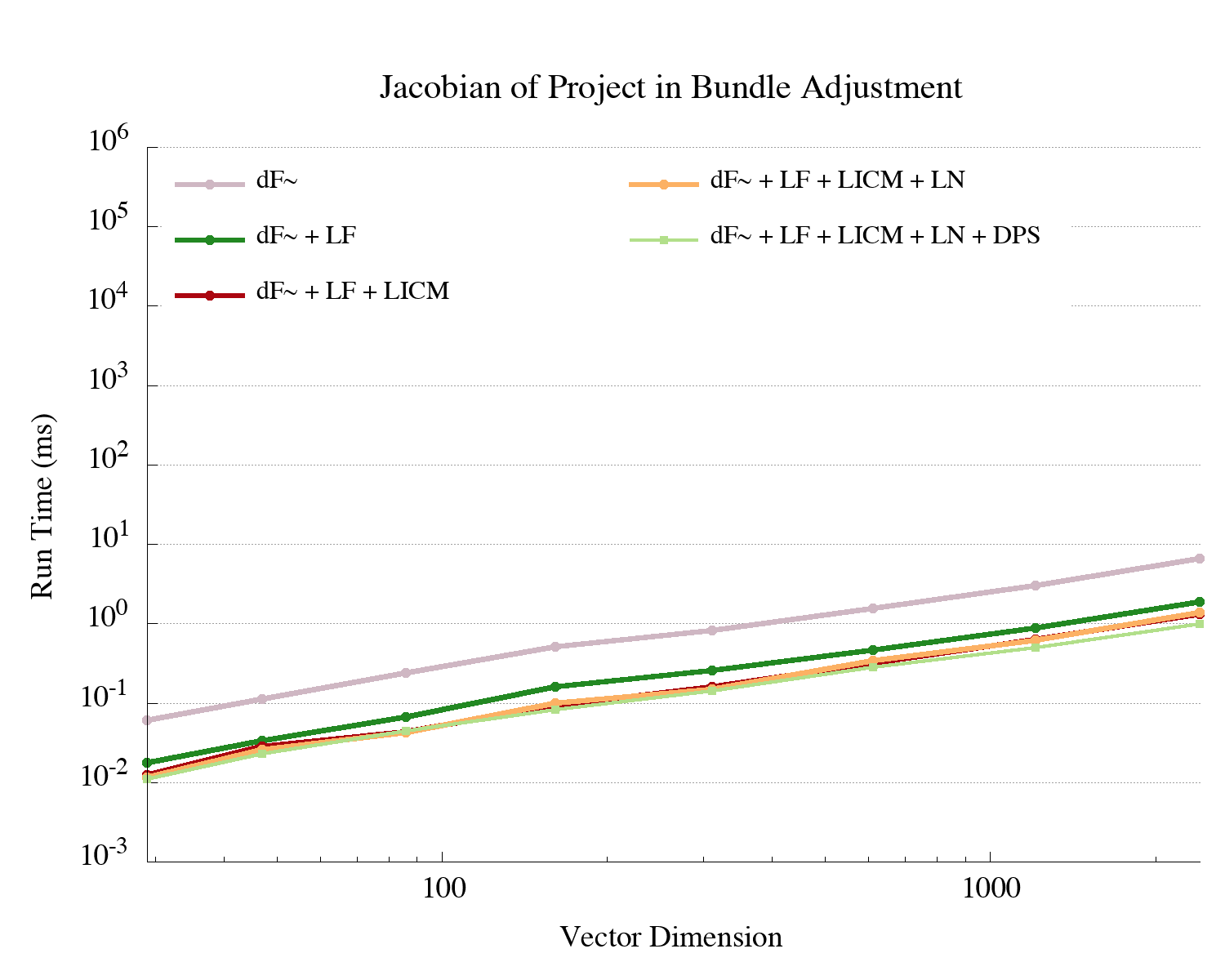}
        \caption{Performance results for Project in Bundle Adjustment.}
        \label{ba_perf}
\end{figure}

\begin{figure}[t]
\begin{fscode}
\lett{} distort = \vabs{(radical: Vector) (proj: Vector)}{}\\
\tabt \lett{} rsq = vectorNorm proj\\
\tabt \lett{} L = 1.0 + radical.[0] * rsq + radical.[1] * rsq * rsq\\
\tabt vectorSMul proj L\\
\lett{} rodrigues = \vabs{(rotation: Vector) (x: Vector)}{}\\
\tabt \lett{} sqtheta = vectorNorm rotation\\
\tabt \lett{} theta = sqrt sqtheta\\
\tabt \lett{} thetaInv = 1.0 / theta\\
\tabt \lett{} w = vectorSMul rotation thetaInv\\
\tabt \lett{} wCrossX = vectorCross w x\\
\tabt \lett{} tmp = (vectorDot w x) * (1.0 - (cos theta))\\
\tabt \lett{} v1 = vectorSMul x (cos theta)\\
\tabt \lett{} v2 = vectorSMul wCrossX (sin theta) \\
\tabt vectorAdd (vectorAdd v1 v2) (vectorSMul w tmp)\\
\lett{} project = \vabs{(cam: Vector) (x: Vector)}{}\\
\tabt\lett{} Xcam = rodrigues (vectorSlice cam 0 2) (\\
\tabt \tabt vectorSub x (vectorSlice cam 3 5) ) \\
\tabt\lett{} distorted =  
distort (vectorSlice cam 9 10) ( \\
\tabt\tabt \tabt vectorSMul (%
vectorSlice Xcam 0 1%
) (1.0/Xcam.[2]) ) \\
\tabt vectorAdd (vectorSlice cam 7 8) ( \\
\tabt\tabt vectorSMul distorted cam.[6]
)
\end{fscode}
\caption{Bundle Adjustment functions in \ladsl{}.}
\label{fig:ba_code}
\end{figure}

 \section{Related Work}
\label{sec_related}
\noindent \textbf{Automatic Differentiation. }
There is a large body of work on automatic differentiation (AD) of imperative programming languages.
Tapenade~\cite{tapenade} performs AD for a subset of C and Fortran, whereas, ADIFOR~\cite{bischof1996adifor} performs AD for Fortran programs.
Adept~\cite{adept} and ADIC~\cite{narayanan2010adic2} perform automatic differentiation for C++ by using expression templates. 
However, as we have seen in our experimental results, an AD tool such as Tapenade misses several optimisation opportunities, mainly due to their limited support for loop fusion and loop-invariant code motion.

ADiMat~\cite{bischof2002combining}, ADiGator~\cite{weinstein2016algorithm}, and Mad~\cite{forth2006efficient} perform AD for MATLAB programs, whereas MuPAD~\cite{hivert2004mupad} computes the derivatives using symbolic differentiation.
AutoGrad~\cite{maclaurin2015autograd} performs AD for Python programs that use NumPy library for array manipulation, whereas
Theano~\cite{bergstra2010theano} uses symbolic differentiation. Tensorflow~\cite{abadi2016tensorflow} performs source-to-source reverse-mode AD, and uses advanced heuristics to solve the memory inefficiencies.
ForwardDiff~\cite{revels2016forward} employs vector forward-mode AD~\cite{khan2015vector} for differentiating Julia programs. This system keeps a vector of derivative values in the dual number instead of only a single derivative value.
All these systems miss important optimisation opportunities such as loop fusion and loop-invariant code motion.

DiffSharp~\cite{baydin2015diffsharp} is an AD library implemented in F\#. 
This library provides both forward-mode and reverse-mode AD techniques.
As DiffSharp is a library implementation of AD (in contrast to \system, which implements AD as source-to-source transformation rules), 
it cannot support the simplification rules such as loop-invariant code motion, loop fusion, and partial evaluation. 
Furthermore, \system can efficiently manage memory by generating C code using DPS, whereas DiffSharp should rely on the garbage collection provided by the .NET framework for memory management. 

Stalingrad~\cite{pearlmutter2008reverse} is an optimising compiler for a dialect of Scheme with a first-class AD operator, with the support for both forward mode and reverse mode of AD. 
One of the key challenges that Stalingrad addresses is perturbation confusion~\cite{siskind2005perturbation}, which occurs for computing the derivative of the functions for which the derivatives are already computed, or the cases where we need the computation of nested differentiation~\cite{pearlmutter2007lazy}. 
We have shown how \system solves the perturbation confusion problem using a static approach thanks to the \derivk{} macro (Section~\ref{sec:nesteddiff}).
One of the main advantages of \system over Stalingrad is its support for loop transformations such as 
loop fusion and loop-invariant code motion.

Karczmarczuk~\cite{karczmarczuk1999functional} presents a Haskell implementation for both forward and reverse mode AD.
Elliott~\cite{elliott2009beautiful} improves this work by giving a more elegant implementation for its forward mode AD. 
Furthermore, Elliott~\cite{Elliott:2018:SEA:3243631.3236765} provides a generalization of AD based on category theory for implementing both forward and reverse-mode AD.
These implementations lack the optimisations offered by transformation rules, espcially loop transformations.

D*~\cite{Guenter:2007:ESD:1276377.1276512} is a symbolic differentiation system, which performs 
the factorisation of the common product terms in sum-of-product expressions. \system also performs
a similar idea by performing common-subexpression elimination and loop-invariant code motion in order to 
eliminate the redundant computation, hence, imporving the performance of forward-mode AD.
One of the key limitations of D* is that its input programs should be fully loop unrolled.
In other words, its differentiation process does not accept programs with loops. 
It would be interesting to see if \system can be 
used to optimise the computer graphics applications handled by D*.

Tensorflow~\cite{abadi2016tensorflow} and Pytorch~\cite{paszke2017automatic} are machine learning libraries 
implemented using Python. These systems are mostly based on tensor abstractions and come with a predefined set of efficient combinators for manipulating tensors.
Furthermore, they can use compilation (e.g., the XLA~\cite{leary2017xla} backend for Tensorflow and the Glow~\cite{rotem2018glow} 
backend for PyTorch) in order to perform further optimisations. 
However, these systems are quite restrictive in what constructs are efficiently supported; additional tensor operations
are not as efficient as the predefined set of tensor operators.

Lantern~\cite{wang2018backpropagation,lantern_icfp} uses the multi-stage programming~\cite{taha00staging} features provided by LMS~\cite{rompf13popl} in order to perform AD for numerical programs written in a subset of Scala. 
A key feature provided by Lantern is supporting reverse-mode AD by using \textit{delimited continuations}~\cite{danvy1990abstracting}.
To the best of our knowledge there is no support for loop fusion nor loop normalisation in Lantern. 
However, there are some imperative forms of fusion implemented for LMS~\cite{rompf13popl} which Lantern can benefit from. 
Furthermore, some form of loop-invariant code motion 
should be achieved thanks to the graph-based intermediate representation~\cite{click1995global} provided by LMS.
We plan to implement the reverse-mode AD using closures and continuations as implemented in Lantern~\cite{wang2018backpropagation} and Stalingrad~\cite{pearlmutter2008reverse}.

Pilatus~\cite{pilatus19ecoop} is a linear algebra language which is also using pull arrays for implementing 
vectors and matrices. Pilatus performs optimisations (e.g., loop fusion and algebraic optimisations) by
using multi-stage programming and rewriting facilities of Squid~\cite{Parreaux:2017:QSR:3136040.3136043,Parreaux:2017:UAS:3177123.3158101}.
However, it does not support loop-invariant code motion and loop normalisation.
Apart from supporting forward-mode AD, Pilatus also supports graph processing algorithms and 
logical probablistic programming, which can be a future direction for \system. 
IFAQ~\cite{ifaq-cgo,ifaq-vldb} and SDQL~\cite{sdql,schleich2022optimizing} are inspired by similar mathematical constructs as Pilatus and support more advanced loop optimizations, however do not support AD.

\noindent \textbf{Array Languages and Fusion.}
There are many array programming languages in the literature, 
APL~\cite{iverson1962programming} being the pioneer among them. 
There are functional array languages such as Futhark~\cite{henriksen2017futhark} and SAC~\cite{Grelck2006} with support for fusion.

In array languages fusion can be achieved by using functional arrays known as \emph{push} and \emph{pull arrays}~\cite{Svensson:2014:DPA:2636228.2636231,edsl-push,Claessen:2012:EAC:2103736.2103740}. A push-array is represented by an effectful function that, given an index and a value, will write the value into the array. 
A pull-array is represented by the length of the array and a function producing an element for a given index, similar to the \vbuildk{} construct in \fsmooth{}~\cite{DBLP:journals/pacmpl/ShaikhhaFVJ19}.
Similarly, functional programming languages use shortcut deforestation for fusing lists either by pulling the stream of data~\cite{Svenningsson:2002:SFA:581478.581491,Coutts07streamfusion} or pushing them~\cite{foldr-fusion-1}, which are implemented in Haskell using the rewrite rule facilities of GHC~\cite{jones2001playing}.
Shortcut deforestation can also be implemented as a library using multi-stage programming~\cite{fold-based-fusion,Kiselyov:2017:SFC:3009837.3009880,jfppushpull}.
It would be interesting to see how the techniques presented in this paper can be implemented on top of other
functional array programming languages (e.g., by using GHC rewrite rules or multi-stage programming).

\noindent \textbf{Numerical DSLs.}
There are many DSLs for numerical workloads. These DSLs can be classified in three categories.
The first category consists of mainstream programming languages used by data analysts such as MATLAB and R. 
These languages offer many toolboxes for performing a wide range of tasks, however, from a performance point of view the focus is only on the efficient implementation of the libraries.
The second category consists of DSLs such as Lift~\cite{Steuwer:2015:GPP:2784731.2784754}, 
Opt~\cite{devito2016opt}, 
Halide~\cite{ragan2013halide}, Diderot~\cite{chiw2012diderot}, and OptiML~\cite{sujeeth2011optiml}, which
generate parallel code from their high-level programs. 
The third category is the DSLs which focus on generating efficient machine code for fixed size linear algbra problems such as Spiral~\cite{spiral} and LGen~\cite{spampinato2016basic}.
These DSLs exploit the memory hierarchy by relying on searching algorithms for making tiling and scheduling decisions.
Except the first category, for which automatic differentiation tools exist, the other DSLs do not have any support for automatic differentiation. 
Moreover, parallel code generation and efficient machine code generation are orthogonal concepts and can be added to \system in the future.

\noindent \textbf{Correctness of AD in functional languages}
There are several recent work about proving correctness for AD in functional languages, either differentiation seen as a macro like in this paper, or as a primitive operator. 
AD is seen as a macro in \cite{barthe2020versatility,huot2020correctness,brunel2019backpropagation} while it is seen as an operator in \cite{abadi2019simple, mak2020differential}. The first three work use some version of logical relations to prove correctness of AD. \cite{brunel2019backpropagation} has a lambda calculus with a linear negation to deal with efficient reverse-mode AD. In \cite{barthe2020versatility} the focus is rather on a slight generalisation of logical relations with applications to correctness of AD. In the work \cite{huot2020correctness}, a non trivial denotational semantics is given to a simply-typed lambda calculus and category theory and correctness of both a forward and reverse-mode AD is proved using category theory. In \cite{abadi2019simple} they consider a first-order language with recursion and reverse-mode AD, seen as a primitive of the language, is shown correct. \cite{mak2020differential} develops a lambda-calculus with pullbacks for proving correct reverse-mode AD as a primitive in a higher-order language.

\section{Conclusions}
In this paper we have demonstrated how to efficiently compute the derivate of a program.
The key idea behind our system is exposing all the constructs used in differentiated programs to the underlying compiler.
As a result, the compiler can apply various loop transformations such as loop-invariant code motion and loop fusion for optimizing differentiated programs.
We have shown how \system outperforms the existing AD tools on micro benchmarks and real-world machine learning and 
computer vision applications.


  \section*{Acknowledgments}
The first author thanks Huawei for their support of the distributed
data management and processing laboratory at the University of Edinburgh.
The second author is supported by a Royal Society University Research Fellowship.

\bibliographystyle{jfp}
\bibliography{ref} 





\end{document}